\newtheorem{theorem}{Theorem}[section]
\newtheorem{corollary}[theorem]{Corollary}
\newtheorem{lemma}[theorem]{Lemma}
\newtheorem{definition}[theorem]{Definition}
\newtheorem{example}[theorem]{Example}
\newtheorem{remark}[theorem]{Remark}
\def\Tr{\mathrm{Tr}}
\begin{document}

\title{Ergodic Theory of Inhomogeneous  Quantum Processes}

\maketitle

\centerline{ \author{ Abdessatar Souissi}}
\vskip0.25cm

\centerline{ Department of Management Information Systems, College of Business and Economics, }
\centerline{Qassim University, Buraydah 51452, Saudi Arabia}

\centerline{\textit{ a.souaissi@qu.edu.sa}}

\begin{abstract}
This work develops a rigorous framework for analysing ergodicity and mixing in time-inhomogeneous quantum dynamics. It considers quantum evolutions generated by sequences of quantum channels and examines in detail the relationship between the forward and backward dynamics, showing that they are generically nonequivalent in a structurally meaningful way. A central contribution is the adoption of a quantum Markov–Dobrushin approach to quantify mixing, which yields sharpened conditions for convergence rates and for establishing exponential stability of the induced dynamics. The resulting formalism not only extends classical and stationary quantum theories, but also naturally accommodates non-translationally invariant matrix product states, thereby providing a unified interface with experimentally relevant quantum many-body systems.
\end{abstract}


\textbf{Keywords}: Markov-Dobrushin; Quantum Channels; Information Processing; Ergodic and Mixing processes;   Forward and Backward dynamics; Quantum theory


\section{Introduction}
Ergodic theory is a branch of mathematics that tells us how dynamical systems behave when we let them run for a very long time, and in particular when the average along a single trajectory matches the average taken over all possible states of the system. Starting from Boltzmann’s ergodic hypothesis~\cite{Bolzmann}, the basic idea is that an isolated system, if left to evolve long enough, will move through all microstates that are compatible with its energy, which makes it reasonable to replace time averages of physical quantities by averages over the whole phase space in statistical mechanics. The Ehrenfests later refined this picture with the quasi-ergodic hypothesis~\cite{Ehrenfest1907}, pointing out that a typical trajectory does not have to cover the energy surface uniformly, but only needs to come arbitrarily close to every point on it. Von Neumann and Birkhoff then gave these physical ideas a rigorous mathematical foundation through their ergodic theorems~\cite{vonN32,Birkoff31}, which show exactly how regular statistical behaviour can emerge from purely deterministic dynamics and clarify the conditions under which time and ensemble averages actually coincide.

   Markov chains have become one of the standard tools for describing random processes that evolve in time, especially when the future depends only on the current state and not on the full past history. Introduced by A.~A.~Markov in 1906~\cite{markov06}, they quickly proved useful for studying how such processes converge and mix, that is, how they forget their initial conditions as time goes on. Later work generalized the classical setting to non-homogeneous Markov chains with transition rules that can change in time, and showed that key ergodic features can still survive in this more flexible, time-dependent situation~\cite{D56,Haj58,Ib97}. These ideas naturally suggested a quantum analogue, obtained by replacing probability distributions with algebraic states and stochastic transition kernels with completely positive maps \cite{AccOh99}. This step led to the formulation of quantum Markov chains, first systematically developed by Accardi~\cite{Acc75}, which provide a rigorous framework for analysing how information propagates, how decoherence sets in, and how ergodic behaviour manifests itself in quantum systems. More recently, ergodic phenomena have been identified in quantum spin chains~\cite{SSB23,SM24} and in tree-like quantum structures~\cite{MG19}, showing that classical ergodic intuition can be meaningfully extended to non-commutative models.

In quantum information theory, the transmission and processing of information are most generally modeled by quantum channels \cite{gyongyosi2018survey, han2022quantum}, that is, completely positive, trace-preserving maps describing the evolution of quantum states in the presence of an environment. These maps provide a mathematically precise and physically faithful framework for capturing noise, decoherence, and interactions with external degrees of freedom. Beyond this structural role, concepts such as ergodicity and mixing have become central to understanding how quantum information propagates through a system, how it relaxes toward an effective equilibrium, and how it progressively loses memory of its initial preparation~\cite{bau2013,W12}. Recent developments have uncovered deep connections between these dynamical features and key information-theoretic quantities, including entropy production rates and a variety of divergence measures between quantum channels~\cite{fang2021geometric,gour2021entropy,MuWa2018}. Collectively, these results point to the emergence of robust statistical regularities under repeated application of quantum channels, indicating that the transport and dissipation of quantum information are governed by intrinsic ergodic mechanisms~\cite{aravinda2024ergodic,singh2024ergodic}.

A particularly interesting line of research shows that mixing properties play a crucial role in how reliably a quantum channel can transmit information, and in how strongly quantum data becomes scrambled under different dynamical regimes~\cite{ABM24,AFk23,singh2024zero}. At the same time, the study of structure-preserving quantum maps has provided important insights into the foundations of quantum error correction: such maps keep key operational quantities invariant during the evolution and thereby offer powerful strategies for encoding information in a way that remains robust against environmental noise and decoherence~\cite{raginsky2002strictly,blume2008characterizing,blume2010information, SA26}. In parallel, sequential generalized measurement schemes have emerged as versatile tools for describing complex quantum measurements, making it possible to simulate adaptive, non-projective measurement processes within realistic experimental architectures~\cite{ma2023sequential,linden2022arbitrary}.

From the perspective of quantum many-body physics, efficient asymptotic techniques based addressing the asymptotic behaviour of matrix product state (MPS) through  operator-algebraic methods have significantly sharpened our understanding  the long-time behavior   one-dimensional quantum systems  ~\cite{albert2019asymptotics,amato2023asymptotics}. A particularly important step forward in the analysis of non-stationary quantum dynamics was achieved by Movassagh and Schenker, who developed a general ergodic theory for temporally inhomogeneous quantum processes~\cite{movassagh2021theory}. Their framework provides a rigorous setting for studying ergodic and mixing properties in quantum evolutions driven by explicitly time-dependent generators, thereby helping to bridge the gap between traditional stationary treatments and genuinely time-dependent, driven quantum phenomena.

\subsection{Core Contributions}
This work develops a mathematical framework for  ergodic and mixing behaviour of time-inhomogeneous quantum processes, with the goal of rigorously characterizing the distinction of several hierarchy of  ergodic and mixing behaviours   in time-inhomogeneous dynamics and characterize  rates of convergence that arise in several processes. This extend both the previous studies in inhomogeneous classical processes in the framework of Markov chains and the homogeneous  quantum dynamics. The analysis is framed in terms of a sequence of quantum channels $\{\Phi_n\}_{n \in \mathbb{N}}$ acting on $\mathcal{B}(\mathcal{H})$, the algebra of bounded operators over a finite-dimensional Hilbert space $\mathcal{H}$ with focus on their action on the set of density operators $\mathfrak{S}(\mathcal{H})$  given its   versatile  role in the description of quantum systems. Each channel represents a step in the system’s temporal evolution, capturing the explicit inhomogeneity inherent to its dynamical behavior.

Two natural compositions arise in the context of inhomogeneous quantum dynamics: the \emph{forward} and \emph{backward} dynamics, defined respectively by
\[
\Phi_{m,n}^{(f)} := \Phi_n \circ \cdots \circ \Phi_{m+1} \circ \Phi_m,
\quad \text{and} \quad
\Phi_{m,n}^{(b)} := \Phi_m\circ \Phi_{m+1} \circ \cdots \circ \Phi_n,\quad m<n
\]
where $\circ$ denotes the composition of maps. These two constructions govern the temporal evolution of quantum states—represented as density operators belonging to the compact convex set $\mathfrak{S}(\mathcal{H})$—under the action of non-commuting, time-dependent quantum transformations. The divergence between the forward and backward evolutions, which typically emerges in the absence of commutativity among the channels, underscores the intrinsic temporal asymmetry in such systems. Consequently, these dynamics occupy a central role in the ergodic analysis of time-inhomogeneous quantum processes. Within this framework, ergodicity and mixing are understood not merely as asymptotic behaviors but as structural indicators of stability and convergence in long-term quantum evolution.

The following result establishes a fundamental asymmetry between forward and backward quantum dynamics—a distinction originating from a nesting property that manifests in the hierarchy of ergodic and mixing notions (Definition~\ref{definition:ergodicity_hierarchy}) and their weak counterparts (Definition~\ref{definition:weak_ergodicity_hierarchy}):

\begin{theorem} \label{theorem:characterizations_ergodicity_mixing}
Let $\{\Phi_n\}_{n \in \mathbb{N}}$ be a sequence of quantum channels on $\mathfrak{S}(\mathcal{H})$. Then the following statements hold:
\begin{enumerate}
    \item The backward process $\{\Phi^{(b)}_{1,n}\}$ is mixing (resp. exponentially mixing) if and only if it is weakly mixing (resp. exponentially weakly mixing).

    \item If the forward process $\{\Phi^{(f)}_{1,n}\}$ is mixing (resp. exponentially mixing), then it is weakly mixing (resp. exponentially weakly mixing). Conversely, if the images satisfy the nesting condition
    \begin{equation} \label{eq_nest}
        \Phi^{(f)}_{1,n+1}\bigl(\mathfrak{S}(\mathcal{H})\bigr)
        \subset
        \Phi^{(f)}_{1,n}\bigl(\mathfrak{S}(\mathcal{H})\bigr)
        \quad \text{for all } n
    \end{equation}
    Then the reverse implications also hold.
\end{enumerate}
\end{theorem}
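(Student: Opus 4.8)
The plan is to reduce both parts of the theorem to a single topological mechanism and to locate the forward/backward asymmetry in whether the images of the process form a nested (decreasing) family. Throughout, write $K_n^{(t)} := \Phi_{0,n}^{(t)}(\mathfrak{S}(\mathcal{H}))$ for $t \in \{f,b\}$; since each $\Phi_n$ is a channel (hence continuous and $\mathfrak{S}(\mathcal{H})$-preserving) and $\mathfrak{S}(\mathcal{H})$ is compact, each $K_n^{(t)}$ is a nonempty compact subset of $\mathfrak{S}(\mathcal{H})$. I read weak mixing as the statement that the diameter $\delta_n := \mathrm{diam}(K_n^{(t)}) = \sup_{\rho,\sigma} \|\Phi_{0,n}^{(t)}(\rho) - \Phi_{0,n}^{(t)}(\sigma)\|$ tends to $0$ (exponentially, in the exponential variant), while mixing is the existence of a limit state $\omega$ with $\sup_\rho \|\Phi_{0,n}^{(t)}(\rho) - \omega\| \to 0$ (exponentially). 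With these readings the implication ``mixing $\Rightarrow$ weakly mixing'' in part~1 and the forward direction of part~2 are immediate from the triangle inequality: $\delta_n \le 2 \sup_\rho \|\Phi_{0,n}^{(t)}(\rho) - \omega\|$, so any rate for the mixing bound transfers verbatim to the weak-mixing bound.

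The substance is the reverse implication, which I would package as a lemma: \emph{if the images are nested, $K_{n+1} \subseteq K_n$, then weak mixing implies mixing (and exponentially so)}. Here I would argue purely topologically. The family $(K_n)$ is a decreasing sequence of nonempty compact sets, so by the finite-intersection (Cantor) property $K_\infty := \bigcap_n K_n$ is nonempty; and since $K_\infty \subseteq K_n$ gives $\mathrm{diam}(K_\infty) \le \delta_n \to 0$, the set $K_\infty$ is a single point $\{\omega\}$. Because $\omega \in K_n$ and $\Phi_{0,n}^{(t)}(\rho) \in K_n$ for every $\rho$, both lie in a set of diameter $\delta_n$, whence $\sup_\rho \|\Phi_{0,n}^{(t)}(\rho) - \omega\| \le \delta_n$. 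Thus $\delta_n \to 0$ yields mixing, and a bound $\delta_n \le C\lambda^n$ yields exponential mixing with the same $\lambda$.

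It then remains to check the nesting hypothesis in each direction, and this is exactly where the asymmetry lives. For the backward dynamics one has $\Phi_{0,n+1}^{(b)} = (\Phi_0 \circ \cdots \circ \Phi_n) \circ \Phi_{n+1}$, so $K_{n+1}^{(b)} = (\Phi_0 \circ \cdots \circ \Phi_n)\bigl(\Phi_{n+1}(\mathfrak{S}(\mathcal{H}))\bigr) \subseteq (\Phi_0 \circ \cdots \circ \Phi_n)(\mathfrak{S}(\mathcal{H})) = K_n^{(b)}$, using only that the channel $\Phi_{n+1}$ maps $\mathfrak{S}(\mathcal{H})$ into itself. Hence backward images are \emph{automatically} nested and the lemma applies unconditionally, giving part~1. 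For the forward dynamics, instead $\Phi_{0,n+1}^{(f)} = \Phi_{n+1} \circ \Phi_{0,n}^{(f)}$, so $K_{n+1}^{(f)} = \Phi_{n+1}(K_n^{(f)})$, and there is no reason for this to sit inside $K_n^{(f)}$ --- applying a fresh channel on the \emph{outside} can move the image set around rather than shrink it. This is precisely why the nesting must be imposed as hypothesis~\eqref{eq_nest}; under it the lemma again applies and delivers the converse in part~2, while the forward implication was already handled above.

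The step I expect to be the main obstacle is conceptual rather than computational: recognizing that the entire gap between mixing and weak mixing is controlled by monotonicity of the image sets, and that Cantor's intersection theorem is the right tool once nesting is available. A secondary technical point to nail down is the passage from the weak-mixing hypothesis to $\mathrm{diam}(K_n)\to 0$: if weak mixing is stated as pairwise (rather than uniform) loss of memory, I would first upgrade it to the uniform statement $\delta_n \to 0$ using compactness of $\mathfrak{S}(\mathcal{H})$ together with the affinity and trace-norm contractivity of each $\Phi_{0,n}^{(t)}$, which also makes $\delta_n$ monotone non-increasing and thereby streamlines the exponential bookkeeping.
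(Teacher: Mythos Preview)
Your proposal is correct and follows essentially the same approach as the paper: triangle inequality for the forward implications, automatic nesting of $K_n^{(b)}$ from $\Phi_{0,n+1}^{(b)}=\Phi_{0,n}^{(b)}\circ\Phi_{n+1}$, Cantor's intersection theorem on the decreasing compacts to produce the singleton $\{\omega\}$, and the diameter bound to conclude. Your handling of the exponential case is in fact slightly cleaner than the paper's (which passes through a Cauchy-type estimate $\|\Phi_{0,n}(\rho)-\Phi_{0,n+q}(\rho)\|\le C\mu^n$ and lets $q\to\infty$), since once $\omega\in K_n$ the bound $\sup_\rho\|\Phi_{0,n}(\rho)-\omega\|\le\delta_n$ gives the rate directly.
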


While this nesting condition is automatically satisfied for the backward composition, it generally fails for the forward case. As a result, weak and strong mixing coincide in backward dynamics but may diverge in the forward scenario. This structural asymmetry is further elucidated through explicit counterexamples discussed below.

A methodological framework is developed based on the \emph{quantum Markov--Dobrushin inequality} introduced in \cite{AccLuSou22}, which provides a quantitative criterion for mixing of quantum channels and, more generally, non-stationary quantum dynamics. In this setting, the loss of distinguishability between states is measured by the quantum total variation norm \eqref{df-q-tot-var-nrm} and is controlled by a channel-dependent coefficient built from the action of the channel on rank-one (atomic) projections. Let
\(
   \mathcal{P}_1(\mathcal{H})
   :=
   \{\,|\xi\rangle\langle\xi|
     : \xi \in \mathcal{H},\ \|\xi\| = 1\,\}
\)
denote the set of rank-one orthogonal projections on \(\mathcal{H}\).

For a quantum channel \(\Phi : \mathcal{B}(\mathcal{H}) \to \mathcal{B}(\mathcal{H})\), the associated \emph{Markov--Dobrushin infimum set} is defined by
\begin{equation}\label{infMD}
   \mathcal{I}^{\Tr}_{\mathrm{MD}}(\Phi)
   :=
   \left\{
      \kappa_\Phi \in \mathcal{B}(\mathcal{H})
      \ \middle|
      \begin{aligned}
        &\centerdot \; 0 \le \kappa_\Phi \le \Phi(P), \, \forall\,P \in \mathcal{P}_1(\mathcal{H}),\\[0.25em]
        &\centerdot \; \operatorname{Tr}(\kappa_\Phi)
          \text{ is maximal among all } B \in \mathcal{B}(\mathcal{H}) \\[0.15em]
        &\text{\quad satisfying } 0 \le B \le \Phi(P),\,  \forall\,P \in \mathcal{P}_1(\mathcal{H})
      \end{aligned}
   \right\}
\end{equation}
The non-emptiness   of \( \mathcal{I}^{\Tr}_{\mathrm{MD}}(\Phi)\) is guaranteed by Lemma~\ref{Lem_MD}, and any element \(\kappa_\Phi \in  \mathcal{I}^{\Tr}_{\mathrm{MD}}(\Phi)\) is referred to as a \emph{Markov--Dobrushin infimum constant} of \(\Phi\). This variational construction generalizes the Markov--Dobrushin constant: instead of assuming the existence of a
genuine infimum (in the L\"owner order) of the family
\(\{\Phi(P) : P \in \mathcal{P}_1(\mathcal{H})\}\), which can be restrictive
and was left implicit in that earlier formulation, the present approach
selects a common lower bound with maximal trace among all such bounds. With this notation, the quantum Markov--Dobrushin inequality takes the form
\begin{equation}\label{MDineq}
  \Big\|\Phi(\rho) - \Phi(\sigma)\Big\|_{TV}
  \;\leq\;
  \bigl(1 - \operatorname{Tr}(\kappa_\Phi)\bigr)\,\Big\|\rho - \sigma\Big\|_{TV},
  \qquad \rho,\sigma \in \mathfrak{S}(\mathcal{H})
\end{equation}
where \(\kappa_\Phi \in  \mathcal{I}^{\Tr}_{\mathrm{MD}}(\Phi)\).

  The maximality-of-trace condition in \eqref{infMD} ensures that the coefficient \(\eta_{\mathrm{MD}}(\Phi) := 1 - \operatorname{Tr}(\kappa_\Phi)\) is as small as possible among all admissible common lower bounds, and hence yields a \emph{minimal} Markov–Dobrushin contraction factor for the channel in \eqref{MDineq}. In this sense, \(\eta_{\mathrm{MD}}(\Phi) \) plays the role of a channel-dependent contraction coefficient: it captures, in a single scalar, how much \(\Phi\) can shrink the quantum total variation distance between states and thus how strongly it suppresses distinguishability in the Markov–Dobrushin sense. More precisely, \(\eta_{\mathrm{MD}}(\Phi)\) provides an explicit \emph{upper bound} on the optimal contraction coefficient of \(\Phi\) with respect to the trace distance, that is,
\[
\eta_{\mathrm{Tr}}(\Phi)
:= \sup_{\rho\neq\sigma}
\frac{\|\Phi(\rho)-\Phi(\sigma)\|_1}{\|\rho-\sigma\|_1}
\;\leq\; \eta_{\mathrm{MD}}(\Phi),
\]
so that the Markov–Dobrushin constant controls, in a computable way, the worst-case distinguishability loss between quantum states under the action of \(\Phi\). In this sense, \(\eta_{\mathrm{MD}}(\Phi)\) can be viewed as a trace-norm (or trace-distance) Dobrushin coefficient for the channel, consistent with recent developments on quantum Doeblin coefficients and sharp contraction bounds for quantum channels~\cite{George2025,Hirche2024}. In particular, while the exact trace-norm contraction coefficient is known to be computationally intractable—NP-hard to approximate within constant factors in closely related formulations \cite{Delsol2025}—the Markov–Dobrushin infimum constant furnishes a practically accessible surrogate: \(\eta_{\mathrm{MD}}(\Phi)\) is easier to compute or bound in practice, yet still reflects the essential contraction behaviour of the channel. All subsequent ergodic and mixing results remain valid if one replaces \(\eta_{\mathrm{MD}}(\Phi)\) by any admissible trace-norm contraction coefficient, at the price of working with a potentially sharper but computationally less tractable constant.

Building on the foundational Markov--Dobrushin inequality, the following theorem provides a broad convergence criterion for non-homogeneous quantum dynamics. In particular, it guarantees asymptotic stability for both forward and backward evolutions without imposing a temporally uniform contractivity condition, marking a substantial departure from classical treatments~\cite{Muk15,GQ15,Sa15}.

\begin{theorem}\label{thm-main_MD_mixing}
Let $\{\Phi_n\}_{n \in \mathbb{N}}$ be a sequence of quantum channels, and let $ \kappa_{\Phi_n}\in  \mathcal{I}^{\Tr}_{\mathrm{MD}}(\Phi_n)$ denote a corresponding sequence of Markov--Dobrushin coefficients. Suppose that $\{\kappa_{\Phi_n}\}$ admits a non-zero accumulation point. Then there exist $\mu \in (0,1)$ and an increasing unbounded sequence $\{N(n)\}_{n \in \mathbb{N}}$ such that, for all $t \in \{b,f\}$,
\begin{equation}\label{eq_exp_conv}
   \forall m\in\mathbb{N}, \qquad  \sup_{\rho,\sigma \in \mathfrak{S}(\mathcal{H})}
    \Big\|
        \Phi^{(t)}_{m+1,n}(\rho) - \Phi^{(t)}_{m,n}(\sigma)
    \Big\|_{TV}
    \leq 2 \mu^{N(n)- N(m)}
\end{equation}
As a consequence, both the forward process $\{\Phi^{(f)}_{m+1,n}\}$ and the backward process $\{\Phi^{(b)}_{m+1,n}\}$ are weakly mixing, with state differences decaying at least exponentially fast in $N(n)$.
\end{theorem}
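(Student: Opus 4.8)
The plan is to reduce everything to a single telescoping estimate built from the one-step quantum Markov--Dobrushin inequality, and then to exploit the accumulation-point hypothesis to supply infinitely many genuinely contracting steps. First I would record the two elementary facts that for every channel $\Phi$ the coefficient $c_\Phi := 1 - \Tr(\kappa_\Phi)$ lies in $[0,1]$: positivity of $\kappa_\Phi$ gives $\Tr(\kappa_\Phi) \ge 0$, while $\kappa_\Phi \le \Phi(|\xi\rangle\langle\xi|)$ for any unit vector $\xi$ together with trace preservation gives $\Tr(\kappa_\Phi) \le 1$. Writing $c_k := 1 - \Tr(\kappa_{\Phi_k})$, the contraction bound then applies at each step with a factor in $[0,1]$.

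Next I would iterate the Markov--Dobrushin inequality along both compositions. For the forward dynamics $\Phi^{(f)}_{0,n} = \Phi_n \circ \cdots \circ \Phi_0$ one applies the one-step bound repeatedly, peeling off the outermost channel each time, to obtain
\[
\|\Phi^{(f)}_{0,n}(\rho) - \Phi^{(f)}_{0,n}(\sigma)\|_{TV} \le \Big(\prod_{k=0}^{n} c_k\Big)\,\|\rho-\sigma\|_{TV}.
\]
The identical argument applied to $\Phi^{(b)}_{0,n} = \Phi_0 \circ \cdots \circ \Phi_n$, now peeling from the left, yields the same product $\prod_{k=0}^n c_k$; the two directions differ only in the order in which the factors are collected, so the bound is insensitive to $t \in \{f,b\}$. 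Using $\sup_{\rho,\sigma} \|\rho-\sigma\|_{TV} \le 2$ reduces the theorem to estimating this product of contraction coefficients.

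The decisive step is to convert the product into exponential decay. Since $\{\kappa_{\Phi_n}\}$ has a non-zero accumulation point $\kappa^\ast \in \mathcal{B}(\mathcal{H})_+$, positivity forces $\Tr(\kappa^\ast) > 0$; setting $\delta := \tfrac12 \Tr(\kappa^\ast) \in (0,1)$ and $\mu := 1-\delta \in (0,1)$, continuity of the trace guarantees that the index set $S := \{k : \Tr(\kappa_{\Phi_k}) > \delta\}$ is infinite, because it contains a tail of the subsequence realizing $\kappa^\ast$. Defining $N(n) := |S \cap \{0,\dots,n\}|$ produces a non-decreasing, unbounded sequence, and since every $c_k \le 1$ while $c_k \le \mu$ for $k \in S$, I can discard all factors outside $S$ to get $\prod_{k=0}^n c_k \le \mu^{N(n)}$. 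Combining this with the previous paragraph yields \eqref{eq_exp_conv}; letting $n \to \infty$ and using $N(n) \to \infty$ gives vanishing of the supremum, which is precisely weak mixing with the asserted exponential rate in $N(n)$.

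I expect the only genuine subtlety to be the non-uniformity of contraction: the hypothesis controls $\kappa_{\Phi_n}$ only along a subsequence, so there is no uniform factor $\mu^{n}$ and the intervening coefficients may be arbitrarily close to $1$. The resolution is exactly the counting device $N(n)$, which measures how many strongly contracting steps have occurred by time $n$ and converts a sparse-but-infinite supply of good steps into honest exponential decay in $N(n)$ rather than in $n$. A minor point worth verifying is that an accumulation point of positive operators in finite dimension has strictly positive trace, which is immediate from positivity and continuity of $\Tr$.
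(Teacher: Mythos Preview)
Your proposal is correct and follows essentially the same route as the paper: iterate the one-step Markov--Dobrushin inequality to obtain the order-independent product $\prod_k(1-\Tr(\kappa_{\Phi_k}))$, extract from the non-zero accumulation point an infinite set of indices with contraction factor at most $\mu=1-\tfrac12\Tr(\kappa^\ast)$, and let $N(n)$ count those indices up to time $n$. Your write-up is in fact slightly more careful than the paper's in verifying that each $c_k\in[0,1]$ and in noting explicitly why the accumulation point has strictly positive trace.
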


This theorem furnishes a unified and flexible framework for quantifying convergence rates in inhomogeneous quantum regimes. When $N(n) = \mathcal{O}(n)$, one recovers genuinely exponential mixing, whereas the scaling $N(n) = \mathcal{O}(\ln n)$ leads to sub-exponential, for instance polynomial, decay. The effective rate of mixing is thus dictated by the temporal distribution of strongly contractive steps, highlighting the intricate interplay between local contraction mechanisms and the resulting global dynamical behavior.

\subsubsection{Application to Inhomogeneous Matrix Product States}
To substantiate the practical significance of the developed ergodic framework, we examine its application to the class of \emph{non-translation-invariant matrix product states} (MPS), which constitute paradigmatic models of inhomogeneous quantum spin chains \cite{perez2007matrix, verstraete2008matrix, Chen24}. These states are naturally defined on the quasi-local $C^*$-algebra
\[
\mathfrak{A}_{\mathbb{N}} := \bigotimes_{n \in \mathbb{N}} \mathcal{B}(\mathcal{K})
\]
where each local algebra $\mathcal{B}(\mathcal{K})$ denotes the bounded operators on a finite-dimensional Hilbert space $\mathcal{K} \cong \mathbb{C}^m$.

The MPS structure is generated by a site-dependent family of Kraus operators $\{K_i^{[n]}\}_{i=1}^m \subset \mathcal{B}(\mathcal{H})$ acting on an auxiliary Hilbert space $\mathcal{H} \cong \mathbb{C}^d$.  The corresponding $n$-site MPS vector is given by
\begin{equation}\label{eqMPS}
|\Psi_n\rangle
    = \sum_{i_1=1}^m \cdots \sum_{i_n=1}^m
      \mathrm{Tr}\!\left(K_{i_1}^{[1]} K_{i_2}^{[2]} \cdots K_{i_n}^{[n]}\right)
      |i_1 i_2 \cdots i_n\rangle
    \;\in\; \mathcal{C}^{\otimes n}
\end{equation}

In order to relate the MPS data $\{K_i^{[n]}\}$ to the ergodic properties of the channel sequence $\{\Phi_n\}$, it is convenient to work with canonical forms. We say that the MPS is in \emph{left-canonical form} if, for each site $n$,
\begin{equation}\label{lcform}
\sum_{i=1}^m \bigl(K_i^{[n]}\bigr)^\dagger K_i^{[n]} \;=\; \mathbb{I}
\end{equation}
so that the map
\begin{equation}\label{PhinMPS}
\Phi_n : \mathcal{B}(\mathcal{H}) \to \mathcal{B}(\mathcal{H})
\qquad
\Phi_n(X) := \sum_{i=1}^m K_i^{[n]} X \bigl(K_i^{[n]}\bigr)^\dagger
\end{equation}
is completely positive and trace-preserving, then a quantum channel acting on density operators on $\mathcal{H}$. The tensors $K_i^{[n]}$ simultaneously encode the MPS structure on the spin chain and the inhomogeneous channel dynamics on the auxiliary space.

Similarly, a \emph{right-canonical form} is obtained by requiring
\( \sum_{i=1}^m K_i^{[n]} \bigl(K_i^{[n]}\bigr)^\dagger \;=\; \mathbb{I},\)
which corresponds to unitality of the associated channel. In the non-translation-invariant setting considered here, one may mix these normalizations along the chain, but we shall assume that a suitable left-canonical gauge has been chosen so that each $\Phi_n$ is a quantum channel on $\mathfrak{S}(\mathcal{H})$ and the norm of $|\Psi_n\rangle$ is controlled uniformly in $n$.

This construction defines a sequence of normalized states $\{\varphi_n\}$ on the local algebras
\[
\mathfrak{A}_{[0,n]} := \bigotimes_{k=0}^n \mathcal{B}(\mathcal{K})
\]
via
\[
\varphi_n(X)
    := \frac{\langle \Psi_n | X | \Psi_n \rangle}{\langle \Psi_n | \Psi_n \rangle},
    \qquad X \in \mathfrak{A}_{[0,n]}
\]
In this way, the time-inhomogeneous channel sequence $\{\Phi_n\}$ arising from the MPS tensors $\{K_i^{[n]}\}$ fits directly into our ergodic framework, and the Markov--Dobrushin analysis developed in Theorem~\ref{thm-main_MD_mixing} can be used to characterize time-uniform mixing and loss of memory for non-translation-invariant MPS.

Leveraging the convergence framework established in Theorem~\ref{thm-main_MD_mixing}, we now apply these results to inhomogeneous matrix product states. Each step in the sequence is governed by a quantum channel $\Phi_n$ on $\mathcal{B}(\mathcal{H})$, induced by the Kraus operators $\{K_i^{[n]}\}_{i=1}^m$ that encode the evolving correlations in the auxiliary space of the MPS.
\begin{theorem}[Convergence of inhomogeneous matrix product states]\label{thm_main_inhom_MPS}
Let $\{K_i^{[n]}\}_{i=1}^m \subset \mathcal{B}(\mathcal{H})$ be the site-dependent tensors of a non-translation-invariant MPS \eqref{eqMPS}, chosen in left-canonical form in the sense of \eqref{lcform}. Let $\{\Phi_n\}_{n\in\mathbb{N}}$ be the associated sequence of quantum channels \eqref{PhinMPS} on $\mathcal{B}(\mathcal{H})$, and let $\{\kappa_{\Phi_n}\}$ denote the corresponding Markov--Dobrushin coefficients. Assume that $\{\kappa_{\Phi_n}\}$ admits a non-zero accumulation point. Then there exists a unique state $\varphi_{\infty}$ on the quasi-local algebra $\mathfrak{A}_{\mathbb{N}}$ such that
\[
\varphi_n \xrightarrow[n\to\infty]{\text{weak-*}} \varphi_{\infty}
\]
Moreover, there exists a sequence of backward boundary conditions $(\rho_m^{(b)})_{m\in\mathbb{N}} \subset \mathfrak{S}(\mathcal{H})$ satisfying
\[
\rho_m^{(b)} = \Phi_{m}(\rho_{m+1}^{(b)}), \qquad m\in\mathbb{N}
\]
such that:
\begin{equation}\label{eq_phiinfty_short}
    \varphi_{\infty}(X)
        = \sum_{\substack{i_1,\ldots,i_m\\ j_1,\ldots,j_m}}
     \langle i_1,\ldots,i_m|X|j_1,\ldots,j_m\rangle\,
    \Tr\Big( K^{[1]}_{j_1}\cdots K^{[m]}_{j_m}\, \rho_{m+1}^{(b)}\,
     (K^{[m]}_{i_m})^\dagger\cdots (K^{[1]}_{i_1})^\dagger\Big)
\end{equation}
for every local observable $X \in \mathfrak{A}_{[0,m]}$.
\end{theorem}

Theorem~\ref{thm_main_inhom_MPS} furnishes a concrete realization of the abstract mixing conditions in Theorem~\ref{thm-main_MD_mixing} within the class of non-translation-invariant matrix product states. It provides a rigorous and computationally tractable framework for describing the ergodic behaviour of quantum spin chains with spatially inhomogeneous interactions: a purely dynamical assumption on the associated channel sequence \(\{\Phi_n\}_n\), expressed via the Markov–Dobrushin coefficients \(\{\kappa_{\Phi_n}\}_n\), is translated into the existence and uniqueness of a thermodynamic limit state \(\varphi_\infty\) together with the explicit representation \eqref{eq_phiinfty_short} of its local expectations in terms of the MPS tensors and a sequence of boundary conditions.

The formula \eqref{eq_phiinfty_short} also illustrates the structural role of the backward boundary states \((\rho_m^{(b)})_{m\in\mathbb{N}}\): they encode, for each cut \(m\), the effective environment “seen from the right” by the first \(m\) sites and arise canonically from the time-uniform mixing of the backward dynamics. In this way the theorem makes precise how the auxiliary-space dynamics of an inhomogeneous MPS determines its infinite-volume limit.

\subsection{ Related Studies}

The study of ergodic properties in quantum channels and processes has seen significant developments in recent years. A particularly influential framework analyzes the ergodic behavior of time-inhomogeneous quantum dynamics through random quantum channels~\cite{movassagh2022a, movassagh2021theory}. This approach characterizes ergodicity via an ergodic sequence  of density matrices under an irreducibility condition, referred in the present work as \textit{Trajectory-Ergodicity}, grounded on the extension of Perron--Frobenius theory to  Banach algebras through the Krein--Rutman theorem~\cite{KR50}. This body of work collectively clarifies the rich structural interplay between various manifestations of ergodicity and mixing in quantum channels, and its implications for quantum dynamics and information processing.

Recent advances in quantum information theory have further sharpened this picture by linking dynamical behavior to concrete operational tasks, such as quantum channel learning~\cite{belov2025quantum} and the detectability of quantum channel capacities~\cite{singh2022detecting}, where ergodic and mixing properties constrain achievable performance in noisy or time-varying settings. Fundamental limits on resource distillation in general quantum resource theories~\cite{regula2021fundamental} provide an additional incentive to investigate ergodic phenomena, as they reveal how long-term dynamical behavior governs the asymptotic conversion of nonclassical resources. Geometric and combinatorial approaches, notably those based on the structure of Birkhoff’s polytope and the characterization of unistochastic matrices~\cite{ben2005}, offer complementary insight into the constraints shaping admissible quantum dynamics and their mixing profiles. A further compelling direction concerns the relationship between the mixing behavior of quantum channels and that of states generated by quantum Markov chains~\cite{ASS20}, including entangled Markov chains~\cite{SSB23}, where non-commutative correlations complicate the passage from channel-level to state-level ergodicity. This line of inquiry is closely tied to the broader problem of reconstructing quantum Markov chains from suitable families of quantum channels or their duals~\cite{AccOh99}, thereby linking dynamical features to algebraic ergodic properties within the framework of $C^*$-dynamical systems~\cite{Fid09,Fid10}.

Random quantum channels provide a natural and versatile framework for modeling noise and information flow in fluctuating quantum systems~\cite{CN10,CN11II}, and, when augmented with physically motivated symmetries, they offer an even more faithful description of experimentally relevant scenarios~\cite{NP25}. Recent work by Movassagh and Schenker has established a rigorous ergodic theory for inhomogeneous and explicitly time-dependent quantum processes~\cite{movassagh2022a,movassagh2021theory}, with notable applications to the thermodynamic behavior of matrix product states (MPS)~\cite{perez2007matrix,verstraete2008matrix}. This line of research reveals a profound link between quantum dynamical complexity and random matrix theory. In particular, random MPS models~\cite{GOZ10} introduce classical randomness in a controlled manner, thereby enhancing analytical tractability while preserving the essential quantum features of the states under consideration~\cite{Aci}. Against this backdrop, a natural question arises as to whether the Markov--Dobrushin condition employed here is fundamentally equivalent to the irreducibility assumptions underlying the forward dynamics in~\cite{movassagh2022a}. Establishing such an equivalence would bridge two complementary perspectives and contribute to a more unified theory of ergodicity and mixing in quantum dynamics. Moreover, our results suggest that a genuinely probabilistic viewpoint, based on ensembles of random quantum channels, may shed additional light on the Markov--Dobrushin mixing mechanism. This perspective points toward a promising research direction in which refined probabilistic tools could be developed to analyze the robustness and fluctuation properties of ergodic and mixing behavior for random quantum channels~\cite{NP12}.

The structure of the paper is as follows. In Section~\ref{sect_preliminaries}, we introduce the necessary background and mathematical preliminaries, including key notions from quantum probability and inhomogeneous quantum dynamics. Section~\ref{sect_inhomogeneousQP} formulates the central concepts of ergodicity and mixing for time-inhomogeneous quantum processes, thereby laying the conceptual foundation for the rest of the work. In Section~\ref{sect_Mixcharacterization}, we investigate several mixing properties in detail and provide rigorous characterizations of their behavior in the quantum setting. Section~\ref{sect_MDmixing} is devoted to extending the Markov--Dobrushin mixing condition to inhomogeneous quantum evolutions, emphasizing how this framework differentiates between distinct degrees of mixing. Finally, in Section~\ref{sect_AppMPS}, we apply the developed theory to non-translation-invariant matrix product states, illustrating the practical implications and effectiveness of our results in the analysis of inhomogeneous quantum spin chains.

\section{Preliminaries}\label{sect_preliminaries}

Let \( d \in \mathbb{N} \) and let \( \mathcal{H} \) denote a \( d \)-dimensional Hilbert space, equipped with a fixed orthonormal basis \(\{ | i \rangle \}_{i=1}^d\). The algebra of all bounded linear operators on \( \mathcal{H} \), denoted by \( \mathcal{B}(\mathcal{H}) \), is a unital Banach algebra in which the identity operator \( \mathbb{I} \) serves as the multiplicative unit. This algebra is canonically endowed with the  trace norm, defined for any operator \( A \in \mathcal{B}(\mathcal{H}) \) by
\[
\| A \|_{1} := \operatorname{Tr} \big( \sqrt{A^* A} \big) = \sum_{i=1}^{d} \sigma_i(A)
\] where \( \sigma_i(A) \) are the singular values of \( A \), i.e., the eigenvalues of \( \sqrt{A^* A} \) counted with multiplicity, and \( \operatorname{Tr}(\cdot) \) denotes the usual trace on \( \mathcal{H} \) relative to the given orthonormal basis. The trace norm \( \|\cdot\|_{1} \) will serve as our primary notion of size and distance for operators.In finite dimensions, all standard operator norms are equivalent, but the trace norm is distinguished in quantum information theory because, for density operators, the trace distance $\|\rho - \sigma\|_{1}$ has a direct operational interpretation as a measure of state distinguishability and is naturally compatible with the action of completely positive, trace-preserving maps modeling quantum channels~\cite{Delsol2025,Hirche2024}. Throughout this work, all quantitative statements on convergence, mixing, and ergodic behavior of quantum processes will therefore be expressed in terms of the trace norm.
 Recall that a state  on $\mathcal{B}(\mathcal{H})$ is a positive linear map $\varphi$  such that  $\varphi(\mathbb{I}) = 1$.  We denote $\mathcal{S}(\mathcal{H})$ the  set of states on $\mathcal{B}(\mathcal{H})$.
The set of density operators on \( \mathcal{H} \), which describe quantum states, is defined as:
\[
\mathfrak{S}(\mathcal{H}) = \left\{ \rho \in \mathcal{B}(\mathcal{H}) \mid \rho = \rho^*, \rho \geq 0, \operatorname{Tr}(\rho) = 1 \right\}
\]

Here, \( \rho^* \) is the adjoint of \( \rho \), and the trace condition \( \operatorname{Tr}(\rho) = 1 \) ensures normalization. This set is convex and compact---key properties that make it indispensable in quantum information theory, especially for optimization tasks and variational approaches.
The map $\rho \mapsto \Tr(\rho \,\cdot)$ define a bijective map from $\mathfrak{S}(\mathcal{H})$ onto $\mathcal{S}(\mathcal{H})$.
For a given probability measure $\mu\in\mathcal{P}_d$ the expectation value on maps $f: \{1,\dots, d\} \to \mathbb{C}$ with respect to the measure $\mu$
\begin{equation}\label{eq_mu(f)}
\mu(f) = \sum_{i} f(i)\mu(\{i\})
\end{equation}
defines a state on  the diagonal subalgebra $\mathcal{D} $ spanned by the brackets $|i\rangle\langle i|$. The correspondent density operator is $\rho_{\mu} = \sum_{i}\mu(\{i\})|i\rangle\langle i|$ allows the extension of  (\ref{eq_mu(f)}) to define a state on  the full algebra $\mathcal{B}(\mathcal{H})$ through:
$$
\varphi_{\mu} : A \mapsto \Tr(\rho_{\mu}A)
$$

Within $\mathcal{B}(\mathcal{H})$, the positive cone $\mathcal{B}(\mathcal{H})_+$ is defined as the set of all positive semidefinite operators acting on $\mathcal{H}$. These operators occupy a central position in quantum theory, as they represent physically admissible quantities such as density operators describing quantum states and effects associated with measurement outcomes. Formally, an operator $A \in \mathcal{B}(\mathcal{H})$ belongs to $\mathcal{B}(\mathcal{H})_+$ if and only if
\[
\langle \psi, A \psi \rangle \geq 0 \quad \forall\, \psi \in \mathcal{H}
\]
which expresses the requirement that all expectation values of $A$ are non-negative in every vector state. This positivity condition underpins the mathematical formulation of valid quantum states, measurement procedures, and generalized measurement schemes, including positive operator-valued measures (POVMs) that extend the standard projective measurement framework.

\begin{theorem}[Kraus Representation]\label{thm:kraus_rep}\cite{Kraus71, Choi75}
A linear map $\Phi: \mathcal{B}(\mathcal{H}) \to \mathcal{B}(\mathcal{H})$ is a quantum channel if and only if it admits a  Kraus representation
\[
\Phi(A) = \sum_{i=1}^m K_i A K_i^\dagger
\]
for every $A \in \mathcal{B}(\mathcal{H})$, where the  Kraus operators  $\{K_i\}_{i=1}^m \subset \mathcal{B}(\mathcal{H})$ satisfy
\[
\sum_{i=1}^m K_i^\dagger K_i = \mathbb{I}
\]
\end{theorem}

\begin{remark}
The set of Kraus operators is not unique; two such sets $\{K_i\}_{i=1}^m$ and $\{L_j\}_{j=1}^n$ represent the same channel if and only if there exists an $m \times n$ complex matrix $U = (u_{ij})$ with $U^\dagger U = \mathbb{I}$ such that $K_i = \sum_{j=1}^n u_{ij} L_j$ for all $i$. If $\Phi$ is unital, i.e., $\Phi(\mathbb{I})=\mathbb{I}$, then additionally $\sum_{i=1}^m K_i K_i^\dagger = \mathbb{I}$.

The Kraus (or operator-sum) representation is the mathematical realization of a quantum channel as an open-system evolution: it captures the effect of the environment on the system via the operators $K_i$, while the trace-preserving condition $\sum_i K_i^\dagger K_i = \mathbb{I}$ guarantees that the transformation maps density operators to density operators.
\end{remark}
\begin{definition}
A quantum channel is defined as a linear map $\Phi$ between operator spaces $\mathcal{B}(\mathcal{K})$ and $\mathcal{B}(\mathcal{K}')$ that maintains two crucial properties: complete positivity and trace preservation (CPTP). These requirements guarantee the physical implementability of such transformations. In our current framework, we specifically examine channels where both input and output spaces coincide, i.e., $\Phi : \mathcal{B}(\mathcal{H}) \to \mathcal{B}(\mathcal{H})$, with $\mathcal{H}$ being our previously defined finite-dimensional Hilbert space.\\
Let $\Phi^* : \mathcal{B}(\mathcal{H}) \to \mathcal{B}(\mathcal{H})$ be the Hilbert-Schmidt dual of a quantum channel $\Phi$   in  the following sense
 $$\langle X, \Phi(Y)\rangle_{\text{HS}} = \langle \Phi^*(X), Y\rangle_{\text{HS}}\qquad \forall X,Y \in \mathcal{B}(\mathcal{H})
 $$
  We call $\Phi^*$ a \emph{Markov operator} since it is completely positive and unital.
 These properties identify $\Phi^*$ as the natural quantum counterpart of a classical Markov transition matrix in the framework of quantum Markov chains \cite{ASS20}: complete positivity generalizes the entrywise positivity of classical transition probabilities, and preservation of the identity operator implements the analogue of total probability conservation.
\end{definition}

\begin{remark}
The duality induced by the Hilbert--Schmidt inner product is not merely a formal adjunction, but rather encodes a substantive correspondence between the Schrödinger and Heisenberg pictures of quantum dynamics. More precisely:
\begin{itemize}
    \item Trace preservation of $\Phi$ is equivalent to unitality of $\Phi^*$.
    \item Complete positivity of $\Phi$ is equivalent to complete positivity of $\Phi^*$.
\end{itemize}
Accordingly, $\Phi$ describes the evolution of states (density operators), whereas $\Phi^*$ governs the evolution of observables. The Hilbert--Schmidt structure is crucial in this context, as it endows $\mathcal{B}(\mathcal{H})$ with a canonical Hilbert space geometry, renders the adjoint map well defined, and permits the systematic application of spectral and ergodic techniques to quantum dynamical systems.
\end{remark}

The iterated application of a quantum channel $\Phi$ is denoted by its $n$-fold composition:
$$
\Phi^{n} = \underbrace{\Phi \circ \Phi \circ \cdots \circ \Phi}_{n \text{ times}}
$$
with the convention that $\Phi^{0}$ equals the identity superoperator $\mathcal{I}$ on $\mathcal{B}(\mathcal{H})$.

For any completely positive linear map $\Phi : \mathcal{B}(\mathcal{H}) \rightarrow \mathcal{B}(\mathcal{K})$, the operator-sum representation takes the form:
\begin{equation*}
  \Phi(\rho) = \sum_{i} K_i \rho K_i^\dagger
\end{equation*}
where $\{K_i\}$ are the associated Kraus operators. An essential feature of trace-preserving maps is characterized by the normalization condition:
\begin{equation*}
    \sum_{i} K_i^\dagger K_i = \mathbb{I}
\end{equation*}
which ensures probability conservation in quantum measurements and transformations.

A quantum channel $\Phi$ is called \emph{ergodic} if, for any initial state $\rho$, its time-averaged evolution converges to a unique fixed point $\rho_*$ satisfying $\Phi(\rho_*) = \rho_*$:

$$
     \lim_{n \to \infty} \frac{1}{n+1} \sum_{k=0}^{n} \Phi^k(\rho) = \rho_*
$$
The limiting channel  extends into a super-operator on  $\mathcal{B}(\mathcal{H})$ given by
\begin{equation}\label{eq_erg}
\widehat{\Phi}(M) := \lim_{n \to \infty} \frac{1}{n+1}\sum_{k=0}^n \Phi^k(M) = \Tr(M)\rho_*
\end{equation}
 is defined on the full  algebra,  it is idempotent ($\widehat{\Phi} \circ \widehat{\Phi} = \widehat{\Phi}$) and it  satisfies $\widehat{\Phi} \circ \Phi = \widehat{\Phi}$. Moreover, it projects any state onto $\rho_*$, reflecting the long-time loss of initial state information.

A stronger property, \emph{mixing}, requires the convergence of individual trajectories for all pairs of states:
\begin{equation}\label{eq_df_mixing}
    \lim_{n \to \infty} \Big\|\Phi^n(\rho) - \Phi^n(\sigma)\Big\|_1 = 0 \quad \forall \rho,\sigma \in \mathfrak{S}(\mathcal{H})
\end{equation}
While all mixing channels are ergodic, the converse fails: ergodicity depends solely on the fixed-point structure, whereas mixing additionally demands specific spectral properties of $\Phi$. The distinction highlights that mixing enforces uniformity in long-time behavior, while ergodicity only guarantees convergence in the time-averaged sense \cite{bau2013, brasil2021lyapunov}.

\section{Ergodicity and Mixing for Inhomogeneous Quantum Processes}\label{sect_inhomogeneousQP}

In this section, we develop a framework for defining ergodicity and mixing for quantum systems that evolve in time under a sequence of distinct quantum channels. Our perspective is informed by ergodic theory and  the classical theory of inhomogeneous Markov chains, in which analogous questions concerning the asymptotic behavior of time-dependent stochastic processes have been investigated extensively over several decades~\cite{CFS82,W2000,Moore15,Breamaud20}.

In the homogeneous quantum case, one iterates a single transition rule (or quantum channel), whereas in the inhomogeneous case the dynamics is driven by a sequence of time-dependent kernels, and ergodic properties must be understood at the level of the whole family rather than a single operator. For classical inhomogeneous Markov chains, this has led to a refined terminology and a rich toolkit based on Dobrushin-type coefficients \cite{D56}, which distinguish weak and uniform ergodicity and quantify loss of memory of initial conditions. In the next subsection, we recall these classical notions of ergodicity and mixing for inhomogeneous Markov chains as they appear in the existing literature, both to fix terminology and to indicate which structural features have meaningful analogues—and which do not—in the  inhomogeneous quantum setting.

\subsection{ Ergodicity of Inhomogeneous Markov Chains}

Consider a discrete state space $S = \{1,\ldots,d\}$ with $d \geq 2$. Let $\mathcal{P}_d$ denote the probability simplex of all probability measures on $S$, represented as row vectors in $\mathbb{R}^d$ with non-negative components summing to one. The time evolution is governed by a sequence of right-stochastic matrices $\{\Pi_n\}_{n\geq 0} \subset \mathbb{R}^{d\times d}$, where each $\Pi_n$ satisfies $(\Pi_n)_{ij} \geq 0$ and $\sum_{j=1}^d (\Pi_n)_{ij} = 1$ for every $i \in S$~\cite{Breamaud20}. Thus, for each $n$, the map $\mu \mapsto \mu \Pi_n$ sends $\mathcal{P}_d$ into itself.

For integers $0 \le m < k$, the multi-step transition operator from time $m$ to $k$ is given by the ordered matrix product
\begin{equation}\label{eq:matrix_composition}
    \Pi^{(m,k)} := \Pi_{m}\Pi_{m+1}\cdots\Pi_{k-1} \in \mathbb{R}^{d\times d}
\end{equation}
which remains stochastic. Hence, for any initial distribution $\mu_0 \in \mathcal{P}_d$, the law at time $k$ is
\begin{equation}\label{eq:distribution_evolution}
    \mu_k = \mu_0 \Pi^{(0,k)} \in \mathcal{P}_d
\end{equation}

\medskip
On $\mathcal{P}_d$ we measure distances using the total variation norm, which in the finite setting coincides (up to the constant $1/2$) with the $\ell^1$-norm:
\[
\big\|\mu - \nu\big\|_{\mathrm{tv}}
    := \frac{1}{2}\sum_{i=1}^d |\mu(i) - \nu(i)|,
    \qquad \mu,\nu \in \mathcal{P}_d
\]
The quantity $\big\|\mu\Pi^{(m,k)} - \nu\Pi^{(m,k)}\big\|_{\mathrm{tv}}$ thus measures how far apart the laws at time $k$ are when starting from two different initial distributions at time $m$.

The Markov--Dobrushin coefficient of a stochastic matrix $\Pi$ is defined by
\begin{equation}\label{eq:MD_TV}
    \delta(\Pi)
    := \sup_{x_1,x_2 \in S}
       \bigl\|\Pi(x_1,\cdot) - \Pi(x_2,\cdot)\bigr\|_{\mathrm{tv}}
\end{equation}
Equivalently, $\delta(\Pi)$ is the smallest constant such that
\[
\big\|\mu \Pi - \nu \Pi\big\|_{\mathrm{tv}}
    \le \delta(\Pi)\,\big\|\mu - \nu\big\|_{\mathrm{tv}}
    \quad \text{for all } \mu,\nu \in \mathcal{P}_d
\]
For an inhomogeneous Markov chain $\{\Pi_n\}_{n\ge 0}$, the sequence $\{\delta(\Pi_n)\}_{n\ge 0}$ therefore quantifies the stepwise contraction in total variation and will be used to formulate Dobrushin-type conditions for weak mixing and ergodicity in the non-homogeneous setting~\cite{D56,mukhamedov2015ergodic, V24}.

Following the classical framework of Dobrushin \cite{D56} and subsequent works~\cite{Bau76,Sa15,mukhamedov2021approximations}, the asymptotic behaviour of inhomogeneous Markov chains is typically analysed via Dobrushin-type ergodicity coefficients. In this perspective, “mixing’’ is primarily a \emph{quantitative} concept—capturing convergence rates through mixing times and contraction bounds in total variation—rather than an autonomous structural notion distinct from ergodicity.

We call the chain \emph{weakly ergodic} (weakly mixing) if
\begin{equation}\label{eq:weak_mixing_inhom}
    \lim_{k\to\infty}\sup_{\mu,\nu \in \mathcal{P}_d}
    \big\|\mu\Pi^{(0,k)} - \nu\Pi^{(0,k)}\big\|_{\mathrm{tv}} = 0
\end{equation}
This means that, asymptotically, the system “forgets’’ its initial distribution.  An alternative viewpoint is obtained by embedding the process into the infinite product space $\mathcal{P}_d^{\mathbb{N}}$ and formulating mixing as asymptotic decay of correlations of distant coordinates. In this setting, classical mixing conditions~\cite{BDX06,L22, Hu96} assert that translating one event far along the time axis renders it asymptotically independent of any fixed event, and this pathwise notion of decorrelation has been extended to the quantum context for quantum Markov chains~\cite{SSB23} and open quantum systems~\cite{MixOQS23}.
\medskip

A stronger notion is \emph{ergodicity} (strong ergodicity) for the inhomogeneous chain. We say the chain is ergodic if there exists a probability vector $\pi \in \mathcal{P}_d$ such that
\begin{equation}\label{eq:strong_ergodic_TV}
    \lim_{k\to\infty}\sup_{\mu \in \mathcal{P}_d}
    \big\|\mu\Pi^{(0,k)} - \pi\big\|_{\mathrm{tv}} = 0
\end{equation}
Thus, ergodicity requires that the chain not only forget its initial state (as in~\eqref{eq:weak_mixing_inhom}), but also that all trajectories converge to a well-defined limiting law $\pi$. Weak mixing and ergodicity are related by the basic inequality
\begin{equation}\label{eq:triangle_ineq}
    \big\|\mu\Pi^{(0,k)} - \nu\Pi^{(0,k)}\big\|_{\mathrm{tv}}
    \leq 2\,\sup_{\mu'\in\mathcal{P}_d}\big\|\mu'\Pi^{(0,k)} - \pi\big\|_{\mathrm{tv}}
\end{equation}
valid whenever~\eqref{eq:strong_ergodic_TV} holds. The following Example illustrates the image in a particularly simple two-state case.
\begin{example}\label{exp1}\cite{Breamaud20} Consider a binary-state system ($d=2$) with time-dependent transition matrices that alternate between two forms:
\begin{align*}
\Pi_{2n} &= \begin{pmatrix}2^{-n} & 1-2^{-n} \\2^{-n} & 1-2^{-n}\end{pmatrix}  \\
\Pi_{2n+1} & = \begin{pmatrix}1-(2n+1)^{-1} & (2n+1)^{-1} \\1-(2n+1)^{-1} & (2n+1)^{-1}\end{pmatrix}\end{align*} for $n \geq 0$
This system exhibits weak ergodicity since $\|\mu\Pi^{(m,k)} - \nu\Pi^{(m,k)}\| \to 0$ as $k \to \infty$ for any initial distributions $\mu,\nu \in \mathcal{P}_2$. However, it fails to be strongly ergodic because the sequences $\mu\Pi_{2n}$ and $\mu\Pi_{2n+1}$  converges to two different limiting distributions $\pi_{even} = \left(\begin{array}{cc} 0 & 1\end{array}\right)$ and $\pi_{odd} = \left(\begin{array}{cc} 1 & 0 \end{array}\right)$  respectively.
\end{example}

A complementary, trajectory-level perspective is provided by ergodic (Cesàro) averages, which connect pathwise laws of large numbers with the operator-level notions of weak and uniform ergodicity discussed above. In the nonstationary setting, the convergence of Cesàro averages of the inhomogeneous transition operators, as well as of empirical means of observables, has been analyzed in several works including~\cite{Bau76,V22,V25}, thereby linking qualitative ergodicity to quantitative mixing and LLN-type behaviour along sample paths.

A natural first impulse in quantum information theory is to carry over the classical notions of weak mixing and ergodicity for inhomogeneous Markov chains to inhomogeneous quantum processes by simply replacing stochastic matrices with quantum channels. Yet even in the homogeneous classical setting, ergodicity and mixing already hinge in a subtle way on irreducibility and aperiodicity: uniqueness of the invariant distribution by itself does not guarantee convergence of \(\Pi^n\), as periodic chains such as the Ehrenfest model clearly demonstrate \cite{Ehrenfest1907}. In the homogeneous quantum case, by contrast, uniqueness (under standard primitivity-type assumptions) of an invariant density matrix typically \emph{does} entail mixing \cite{bau2013, W12}, so the structural link between invariant states, ergodicity, and mixing is genuinely different from the classical situation. In the inhomogeneous quantum regime this divergence becomes even more pronounced, and it is therefore not adequate to define quantum weak ergodicity and weak mixing by a naive transcription of classical Markov-chain notions. Instead, in this work we adopt an explicitly ergodic-theoretic perspective on quantum dynamics, formulated in terms of positive maps on ordered Banach spaces and their noncommutative extensions~\cite{mukhamedov2015ergodic,mukhamedov2021approximations}, and we will systematically use ergodic averages of iterated channel compositions to capture quantum ergodicity, while quantum mixing will be expressed directly through the composition of channels over multiple time steps, in direct coherence with the homogeneous theory of ergodic and mixing quantum channels.

\subsection{Forward and backward quantum Dynamics}
 Inhomogeneous quantum dynamics are described by sequences of quantum channels $\{\Phi_n\}_{n\geq0}$ acting on the set of density operators $\mathfrak{S}(\mathcal{H})$, where the quantum channel $\Phi_n$ can be written in terms of Kraus operators
 \begin{equation}\label{phinKrauss}
 \Phi_n(\rho) = \sum_{i_n}K_{n;i_n}\rho K_{n;i_n}^{\dagger}
 \end{equation}

The key distinction with the homogeneous dynamics  lies in the non-commutativity of channel compositions $\Phi_{n}\circ\Phi_{n+1}$ and $\Phi_{n+1}\circ\Phi_{n}$ generally differ, making temporal ordering crucial. This distinction leads to two fundamentally distinct types of dynamics. The backward evolution, denoted by $\Phi^{(b)}_{1,n}$, and given by:

\begin{equation}\label{eq-bwrd}
\Phi^{(b)}_{1,n} := \Phi_1 \circ \cdots \circ  \Phi_{n-1} \circ \Phi_n
\end{equation}

This composition describes the conventional time-ordered quantum process, where each operation $\Phi_k$ is applied in sequence from initial time $1$ to final time $n$. This evolution is governed by the equation $\Phi^{(b)}_{1,n+1} = \Phi^{(b)}_{1,n}\circ\Phi_{n+1}$. The Kraus decomposition associated with the backward dynamics is given by
\[
\Phi^{(b)}_{1,n}(\rho) =  \sum_{  i_1,\cdots, i_n}  K_{1;i_1}\cdots K_{n;i_n}\rho K_{n;i_n}^{\dagger}\cdots K_{1;i_1}^{\dagger}
\]
In contrast, the forward evolution $\Phi^{(f)}_{1,n}$ implements the reverse temporal ordering:

\begin{equation}\label{eq_forward}
\Phi^{(f)}_{1,n} := \Phi_n \circ \Phi_{n-1} \circ \cdots \circ \Phi_1
\end{equation}
 and the evolution is governed by the forward equation $\Phi^{(f)}_{1, n+1} = \Phi_{n+1}\circ \Phi_{1,n}^{(f)}$.
 The associated Kraus decomposition is given by
 \[
\Phi^{(f)}_{1,n}(\rho) =  \sum_{i_1,\cdots, i_n} K_{n;i_n}\cdots K_{1;i_1}  \rho K_{1;i_1}^{\dagger}\cdots K_{n;i_n}^{\dagger}\]

\begin{remark}
A fundamental difference between homogeneous and inhomogeneous quantum dynamics arises from the inherently non-commutative structure of quantum operations. In the time-dependent setting, quantum evolutions described by a sequence of channels $\{\Phi_n\}_{n \geq 1}$ exhibit qualitative behaviors that diverge significantly from those governed by a fixed channel $\Phi$. While homogeneous systems typically converge to fixed points determined by the invariant states of $\Phi$, inhomogeneous evolutions reflect the full temporal structure of the sequence, resulting in asymptotic behaviors that are sensitive to the entire history of the dynamics~\cite{aravinda2024ergodic, bau2013,Carbonna20}.

This distinction is particularly pronounced in the context of quantum control and open quantum systems, where external fluctuations and environmental interactions naturally give rise to time-dependent dynamics \cite{NP12}. Importantly, in the inhomogeneous case, standard fixed-point arguments become inapplicable due to the non-commutativity of the channels involved. As a result, classical characterizations of ergodicity, which rely on invariance under a single channel, no longer hold~\cite{nelson2024ergodic}. Instead, new analytical tools are required to capture the emergent long-term behavior of such non-stationary quantum processes.
\end{remark}
\begin{definition}[Ergodicity Hierarchy with Averages]\label{definition:ergodicity_hierarchy}
Let $\{\Phi_n\}_{n\in\mathbb{N}}$ be a sequence of quantum channels acting on the space of density operators $\mathfrak{S}(\mathcal{H})$, where $t \in \{f, b\}$ indicates the temporal direction of evolution. The \emph{ergodic average} is defined as:
\begin{equation}\label{equation:ergodic_average}
\overline{\Phi}^{(t)}_{1,n} := \frac{1}{n+1}\sum_{k=0}^{n}\Phi^{(t)}_{1,k}
\end{equation}

The system exhibits the following hierarchical ergodic properties:
\begin{enumerate}
    \item \textbf{Ergodicity}:
    \begin{equation}\label{equation:def_ergodicity}
    \exists \rho_{\infty}^{(t)} \in \mathfrak{S}(\mathcal{H});\ \forall \rho \in \mathfrak{S}(\mathcal{H}):\ \lim_{n\to\infty} \left\|\overline{\Phi}^{(t)}_{1,n}(\rho) - \rho_{\infty}^{(t)}\right\|_{1} = 0
    \end{equation}

    \item \textbf{Uniform Ergodicity}:
    \begin{equation}\label{equation:def_uniform_ergodicity}
    \exists \rho_{\infty}^{(t)} \in \mathfrak{S}(\mathcal{H}):\ \lim_{n\to\infty} \sup_{\rho \in \mathfrak{S}(\mathcal{H})} \left\|\overline{\Phi}^{(t)}_{1,n}(\rho) - \rho_{\infty}^{(t)}\right\|_{1} = 0
    \end{equation}

    \item \textbf{Mixing}:
    \begin{equation}\label{equation:def_mixing}
    \exists \rho_{\infty}^{(t)} \in \mathfrak{S}(\mathcal{H});\ \forall \rho \in \mathfrak{S}(\mathcal{H}):\ \lim_{n\to\infty} \Big\| \Phi^{(t)}_{1,n}(\rho) - \rho_{\infty}^{(t)}\Big\|_{1} = 0
    \end{equation}

    \item \textbf{Uniform Mixing}:
    \begin{equation}\label{equation:def_uniform_mixing}
    \exists \rho_{\infty}^{(t)} \in \mathfrak{S}(\mathcal{H}):\ \lim_{n \to \infty} \sup_{\rho \in \mathfrak{S}(\mathcal{H})} \Big\| \Phi^{(t)}_{1,n}(\rho) - \rho_{\infty}^{(t)} \Big\|_{1} = 0
    \end{equation}

    \item \textbf{Exponential Mixing}:
    \begin{equation}\label{equation:def_exponential_mixing}
    \exists \rho_{\infty}^{(t)} \in \mathfrak{S}(\mathcal{H}),\ \exists \mu\in(0,1),\ \exists C>0;\ \forall \rho \in \mathfrak{S}(\mathcal{H}):\ \Big\|\Phi^{(t)}_{1,n}(\rho) - \rho_{\infty}^{(t)}\Big\|_{1} \leq C \mu^n
    \end{equation}
\end{enumerate}
\end{definition}

\begin{remark}\label{remark:mixing_limit_channel}
    When the sequence $\{\Phi_{1,n}^{(t)}\}_{n \in \mathbb{N}}$ satisfies an ergodic or mixing property in the sense of Definition~\ref{definition:ergodicity_hierarchy}, the limiting density operator $\rho_{\infty}^{(t)}$ defines a limiting quantum channel $\Phi_{\infty}^{(t)}: \mathcal{B}(\mathcal{H}) \to \mathcal{B}(\mathcal{H})$ given by:
    \begin{equation}\label{eq:limiting_channel}
        \Phi_{\infty}^{(t)}(M) = \Tr(M) \rho_{\infty}^{(t)} \qquad \forall M \in \mathcal{B}(\mathcal{H})
    \end{equation}
    This channel is a rank-one, trace-preserving linear map that projects any operator onto the asymptotic state $\rho_{\infty}^{(t)}$.    Two critical differences arise compared to the homogeneous case, where dynamics are governed by a single repeated channel $\Phi$:
\begin{itemize}
    \item \textbf{Temporal direction dependence}: In the inhomogeneous setting, the limiting state $\rho_{\infty}^{(t)}$ is no longer universal, but depends explicitly on the evolution direction $t \in \{\text{forward}, \text{backward}\}$ associated with the sequence $\{\Phi_{1,n}^{(t)}\}_{n \in \mathbb{N}}$.
    \item \textbf{Sensitivity to perturbations}: In contrast to the homogeneous situation—where $\rho_{\infty}^{(t)}$ is a fixed point of the single channel $\Phi$ and is therefore stable under further iterations (see, e.g., \cite{singh2024ergodic,W12})—the asymptotic state in the time-inhomogeneous regime is typically fragile: even mild perturbations of, or reorderings within, the sequence may obstruct convergence or substantially modify the limit $\rho_{\infty}^{(t)}$.
\end{itemize}
This highlights a fundamental distinction between time-dependent and time-independent quantum dynamical systems.

\end{remark}
 \begin{theorem}[Implications within the Ergodicity Hierarchy with Averages]\label{theorem:ergodicity_hierarchy_implications}
Let $\{\Phi_n\}_{n\in\mathbb{N}}$ be a sequence of quantum channels acting on the space of density operators $\mathfrak{S}(\mathcal{H})$, and let $\overline{\Phi}^{(t)}_{1,n}$ denote the ergodic average as defined in Definition~\ref{definition:ergodicity_hierarchy}. Then, the following implications between the ergodic properties hold:
\begin{center}
$\text{Exponential Mixing} \Rightarrow \text{Uniform Mixing} \Leftrightarrow \text{Mixing} \Rightarrow \text{Uniform Ergodicity} \Leftrightarrow \text{Ergodicity}$.
\end{center}

Moreover, none of the reverse implications hold in general.
\end{theorem}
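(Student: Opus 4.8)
The plan is to isolate a single analytic mechanism --- a \emph{pointwise-to-uniform} principle --- and reduce every nontrivial implication to it, then settle the exponential and Cesàro steps by elementary estimates and dispose of the reverse implications by counterexamples. First I would record two structural facts valid for both $t\in\{f,b\}$: each $\Phi^{(t)}_{0,n}$ is affine on $\mathfrak{S}(\mathcal{H})$ (being the restriction of a linear map that preserves convex combinations of states), and each is a contraction for the trace norm, since a CPTP map never increases trace distance and compositions of contractions are again contractions. Because all norms on the finite-dimensional space $\mathcal{B}(\mathcal{H})$ are equivalent, this yields a single constant $L$ with $\|\Phi^{(t)}_{0,n}(\rho)-\Phi^{(t)}_{0,n}(\sigma)\|\le L\|\rho-\sigma\|$ for all $n$; the same bound holds for the ergodic averages $\overline{\Phi}^{(t)}_{0,n}$, which are convex combinations of contractions. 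Thus both families are uniformly Lipschitz, hence equicontinuous.

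The pointwise-to-uniform principle then reads: if a uniformly Lipschitz family $\{\Psi_n\}$ satisfies $\Psi_n(\rho)\to\rho_\infty$ for every $\rho$ with a common limit $\rho_\infty$, then $\sup_{\rho}\|\Psi_n(\rho)-\rho_\infty\|\to 0$. I would prove this by a finite-cover argument: fix $\varepsilon>0$, cover the compact set $\mathfrak{S}(\mathcal{H})$ by finitely many balls of radius $\varepsilon/(L+1)$ centred at $\rho_1,\dots,\rho_m$, choose $N$ so that $\|\Psi_n(\rho_j)-\rho_\infty\|<\varepsilon$ for all $j$ and $n\ge N$, and propagate the estimate to an arbitrary $\rho$ through its nearest centre using the Lipschitz bound. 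Applying this with $\Psi_n=\Phi^{(t)}_{0,n}$ gives \emph{Mixing} $\Rightarrow$ \emph{Uniform Mixing}, and with $\Psi_n=\overline{\Phi}^{(t)}_{0,n}$ gives \emph{Ergodicity} $\Rightarrow$ \emph{Uniform Ergodicity}; the reverse directions of both equivalences are immediate, since a supremum controls each pointwise value.

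For the remaining forward implications I would argue directly. Exponential Mixing supplies the uniform bound $\sup_\rho\|\Phi^{(t)}_{0,n}(\rho)-\rho_\infty\|\le C\mu^n\to0$, which is exactly Uniform Mixing. For \emph{Mixing} $\Rightarrow$ \emph{Uniform Ergodicity} I would set $\varepsilon_k:=\sup_\rho\|\Phi^{(t)}_{0,k}(\rho)-\rho_\infty\|$, which tends to $0$ by Uniform Mixing, and estimate $\sup_\rho\|\overline{\Phi}^{(t)}_{0,n}(\rho)-\rho_\infty\|\le \frac{1}{n+1}\sum_{k=0}^{n-1}\varepsilon_k+\frac{1}{n+1}\|\rho_\infty\|$ after accounting for the mismatch between the factor $\tfrac1{n+1}$ and the number of summands; both terms vanish, the first by the Cesàro mean theorem.

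Finally, to show that no reverse implication holds, I would exhibit counterexamples. For \emph{Uniform Mixing} $\not\Rightarrow$ \emph{Exponential Mixing} I would space the strongly contractive channels so sparsely that the counting function of Theorem~\ref{thm-main_MD_mixing} satisfies $N(n)=\mathcal{O}(\ln n)$, producing genuine uniform mixing at a polynomial rate that defeats any bound $C\mu^n$. For \emph{Uniform Ergodicity} $\not\Rightarrow$ \emph{Mixing} I would embed the binary inhomogeneous chain of Example~\ref{exp1} into the diagonal subalgebra via channels that dephase onto the diagonal and there act as its stochastic matrices: the trajectory then oscillates between the two parity-dependent limits $\pi_{\mathrm{even}}$ and $\pi_{\mathrm{odd}}$ (so mixing fails), while the Cesàro averages converge to their common midpoint independently of the initial state (so uniform ergodicity holds). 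The main obstacle is not any single implication but the verification underlying the pointwise-to-uniform principle --- deducing genuine equicontinuity from channel contractivity and combining it with compactness of $\mathfrak{S}(\mathcal{H})$ --- together with confirming that the sparse-contraction construction yields bona fide uniform mixing while ruling out every exponential rate.
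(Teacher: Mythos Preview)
Your proposal is correct and rests on the same mechanism as the paper---compactness of $\mathfrak{S}(\mathcal{H})$ together with the uniform contractivity of quantum channels---but packages it differently. For the only nontrivial implications (\emph{Mixing} $\Rightarrow$ \emph{Uniform Mixing} and \emph{Ergodicity} $\Rightarrow$ \emph{Uniform Ergodicity}), the paper argues by contradiction: assuming the supremum does not vanish, it extracts a sequence of states $\rho_{n_k}$ with $\|\Phi_{0,n_k}^{(t)}(\rho_{n_k})-\rho_\infty^{(t)}\|\ge\varepsilon_0$, passes to a convergent subsequence $\rho_{n_k}\to\rho$ by compactness, and then splits via the triangle inequality into a contractivity term and a pointwise-mixing term, both tending to zero. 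You instead isolate a direct \emph{pointwise-to-uniform} principle via equicontinuity and a finite $\varepsilon$-net, which is the dual presentation of the same compactness argument and has the advantage of being reusable without modification for both $\Phi_{0,n}^{(t)}$ and $\overline{\Phi}_{0,n}^{(t)}$. You also supply what the paper omits: an explicit Ces\`aro estimate for \emph{Mixing} $\Rightarrow$ \emph{Uniform Ergodicity} (the paper simply declares this trivial), and concrete counterexamples for the failure of the reverse implications, whereas the paper only cites references. Your sparse-contraction construction for \emph{Uniform Mixing} $\not\Rightarrow$ \emph{Exponential Mixing} is the right idea and genuinely requires the inhomogeneous setting, since in the homogeneous finite-dimensional case mixing already forces a spectral gap and hence exponential convergence.
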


\begin{figure}[bh]
\centering
\begin{tikzpicture}[
  node distance=1.8cm and 1.2cm,
  base/.style={draw, rounded corners, minimum width=3.6cm, minimum height=1cm, align=center, font=\sffamily},
  imply/.style={double equal sign distance, -{Implies}, thick, draw=blue!60!black},
  equiv/.style={double, double distance=1.5pt, thick, draw=green!60!black, -{Implies}},
  horizequiv/.style={double, double distance=1.5pt, thick, draw=blue!60!black, {Implies[]}-{Implies[]}},
]

\node[base, fill=blue!5] (expMix) {Exponential\\Mixing};
\node[base, fill=blue!5] (unifMix) [right=of expMix] {Uniform\\Mixing};
\node[base, fill=blue!5] (mix) [right=of unifMix] {Mixing};

\node[base, fill=red!5] (unifErgo) [below=of unifMix] {Uniform\\Ergodicity};
\node[base, fill=red!5] (ergo) [right=of unifErgo] {Ergodicity};

\draw[imply] (expMix) -- (unifMix);
\draw[horizequiv] (unifMix) -- (mix); 
\draw[horizequiv] (unifErgo) -- (ergo); 

\draw[equiv] (unifMix.south) -- (unifErgo.north);
\draw[equiv] (mix.south) -- (ergo.north);

\end{tikzpicture}
\caption{Mixing and Ergodicity Hierarchy for Inhomogeneous Quantum Processes in Finite Dimension}
\end{figure}
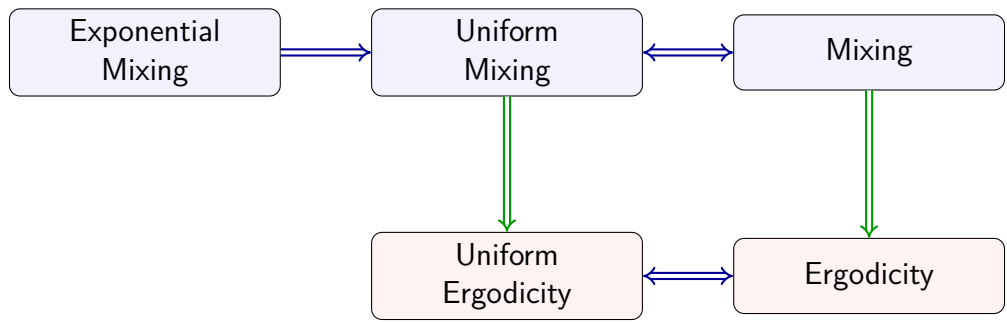
\begin{proof}
The only non-trivial implications are $\text{Mixing} \Rightarrow \text{Uniform Mixing}$ and $\text{Ergodic} \Rightarrow \text{Uniform Ergodic}$.

We first establish $\text{Mixing} \Rightarrow \text{Uniform Mixing}$ by contradiction. Suppose that
\[
\sup_{\rho \in \mathfrak{S}(\mathcal{H})} \Big\|\Phi^{(t)}_{1,n}(\rho) - \rho^{(t)}_{\infty}\Big\|_{1} \nrightarrow 0
\]
Then there exist $\varepsilon_0 > 0$, a subsequence $(\Phi^{(t)}_{1,n_k})_{k}$, and a sequence of states $(\rho_{n_k})_{k} \subset \mathfrak{S}(\mathcal{H})$ such that
\[
\bigl\|\Phi^{(t)}_{1,n_k}(\rho_{n_k}) - \rho^{(t)}_{\infty}\bigr\|_{1} \ge \varepsilon_0 \qquad \forall k
\]
Since $\dim(\mathcal{H}) < \infty$, the state space $\mathfrak{S}(\mathcal{H})$ is compact, and hence, up to extracting a subsequence, we may assume that $\rho_{n_k} \to \rho$ for some density operator $\rho \in \mathfrak{S}(\mathcal{H})$.

For the subsequence $\{\rho_{n_k}\}$, we decompose
\[
\bigl\|\Phi^{(t)}_{1,n_k}(\rho_{n_k}) - \rho^{(t)}_{\infty}\bigr\|_{1}
\leq
\underbrace{\bigl\|\Phi^{(t)}_{1,n_k}(\rho_{n_k}) - \Phi^{(t)}_{1,n_k}(\rho)\bigr\|_{1}}_{(A)}
+
\underbrace{\bigl\|\Phi^{(t)}_{1,n_k}(\rho) - \rho^{(t)}_{\infty}\bigr\|_{1}}_{(B)}
\]

\begin{itemize}
    \item \emph{Term (B)}: By the mixing assumption for the fixed state $\rho$,
    \[
        \bigl\|\Phi^{(t)}_{1,n_k}(\rho) - \rho^{(t)}_{\infty}\bigr\|_{1} \longrightarrow 0
        \quad \text{as } k \to \infty
    \]

    \item \emph{Term (A)}: Since the channels  are contractive with respect to the trace norm, one gets
    \[
        \bigl\|\Phi^{(t)}_{1,n_k}(\rho_{n_k}) - \Phi^{(t)}_{1,n_k}(\rho)\bigr\|_{1}
        \leq \|\rho_{n_k} - \rho\|_{1}
        \longrightarrow 0
        \quad \text{as } k \to \infty
    \]
    since $\rho_{n_k} \to \rho$ in trace norm.
\end{itemize}

Thus both (A) and (B) tend to zero, and we obtain
\[
\varepsilon_0 \le \bigl\|\Phi^{(t)}_{1,n_k}(\rho_{n_k}) - \rho^{(t)}_{\infty}\bigr\|_{1} \longrightarrow 0
\]
which is a contradiction.

The implication $\text{Ergodic} \Rightarrow \text{Uniformly Ergodic}$ is proved in an entirely analogous way, replacing $\Phi^{(t)}_{1,n}$ with the ergodic averages $\overline{\Phi}^{(t)}_{1,n}$. Counterexamples showing that the reverse implications fail to hold, even in the homogeneous case, can be found in~\cite{SB25,W12,singh2024ergodic,bau2013}.
\end{proof}
In the homogeneous setting, mixing of quantum Markov dynamics generated by a single channel is typically accompanied by an exponential rate of convergence under mild spectral or contractivity assumptions. In the inhomogeneous case, however, the situation can be markedly different: even for very simple families of commuting channels with a unique invariant state, mixing may occur at a strictly non-exponential rate. The following construction provides an explicit and fully tractable example of this phenomenon, thereby showing that the implication “uniformly mixing $\Rightarrow$ exponentially mixing’’ in Theorem~\ref{theorem:ergodicity_hierarchy_implications} cannot be reversed in general.
The following example makes this observation precise.

\begin{example}[Non-exponential mixing in inhomogeneous quantum dynamics]
\label{ex:polynomial_mixing}
Let $\mathcal{H}$ be a finite-dimensional Hilbert space with $d = \dim \mathcal{H}$, and let $\mathfrak{S}(\mathcal{H})$ denote the set of density operators on $\mathcal{H}$. Consider the family $\{\Phi_n\}_{n \in \mathbb{N}}$ of depolarizing channels defined by
\[
\Phi_n(\rho) = \Bigl(1 - \frac{1}{n+1}\Bigr)\rho + \frac{1}{n+1}\frac{\mathbb{I}}{d}, \qquad \rho \in \mathfrak{S}(\mathcal{H}), \ n \geq 1
\]
\medskip
The maximally mixed state $\frac{\mathbb{I}}{d}$ satisfies $\Phi_n(\frac{\mathbb{I}}{d}) = \frac{\mathbb{I}}{d}$ for all $n$, and is the unique invariant state of each $\Phi_n$. The family is commuting: $[\Phi_m,\Phi_n]=0$ for all $m,n$, as each $\Phi_n$ is an affine combination of the identity channel $\mathrm{id}$ and the completely depolarizing channel $\Omega(\rho)=\frac{\mathbb{I}}{d}$, which commute. Consequently, the forward and backward dynamics coincide.
\medskip
The $n$-step evolution from time $1$ to $n$ is $\Phi_{1,n}^{(t)} = \Phi_{1,n}^{(b)}=\Phi_{1,n}^{(f)}$, which by commutativity is independent of the order of composition. For any $\rho \in \mathfrak{S}(\mathcal{H})$:
\[
\Phi_{1,n}^{(t)}(\rho) = \Bigl(\prod_{k=1}^n \frac{k}{k+1}\Bigr)\rho + \Bigl(1 - \prod_{k=1}^n \frac{k}{k+1}\Bigr)\frac{\mathbb{I}}{d}
\]
The telescoping product simplifies to $\prod_{k=1}^n k/(k+1) = 1/(n+1)$, whence
\[
\Phi_{1,n}^{(t)}(\rho) = \frac{1}{n+1}\rho + \Bigl(1 - \frac{1}{n+1}\Bigr)\frac{\mathbb{I}}{d}, \qquad \forall \rho \in  \mathfrak{S}(\mathcal{H}), \ \forall n \in \mathbb{N}
\]
\medskip
In the trace norm $\|\cdot\|_1$,
\[
\bigl\|\Phi_{1,n}^{(t)}(\rho) - \tfrac{\mathbb{I}}{d}\bigr\|_1 = \frac{1}{n+1}\bigl\|\rho - \tfrac{\mathbb{I}}{d}\bigr\|_1
\]
Taking the supremum over all initial states,
\[
\Delta_n := \sup_{\rho \in  \mathfrak{S}} \bigl\|\Phi_{1,n}^{(t)}(\rho) - \tfrac{\mathbb{I}}{d}\bigr\|_1 = \frac{1}{n+1} \sup_{\rho \in  \mathfrak{S}(\mathcal{H})} \bigl\|\rho - \tfrac{\mathbb{I}}{d}\bigr\|_1 = \frac{2}{n+1}
\]
since $\sup_{\rho \in  \mathfrak{S}(\mathcal{H})} \|\rho - \frac{\mathbb{I}}{d}\|_1 = 2$, attained for any pure state orthogonal to a support projection of $\frac{\mathbb{I}}{d}$. Hence $\Delta_n = \mathcal{O}(1/n)$.
\medskip
No exponential estimate holds: there exist no constants $C>0$ and $\mu \in (0,1)$ such that $\Delta_n \leq C\mu^n$ for all $n$. Indeed, if such constants existed, then $2/(n+1) \leq C\mu^n$ for all $n$, contradicting $\lim_{n\to\infty} \mu^n (n+1) = 0$.
\medskip
The inhomogeneous family $\{\Phi_n\}_{n\geq 1}$ of commuting depolarizing channels converges uniformly in trace distance at rate $\mathcal{O}(1/n)$. This demonstrates that time-inhomogeneous quantum dynamics can exhibit genuinely polynomial convergence, in contrast to the homogeneous setting where spectral gaps guarantee exponential mixing. The construction is sharp: any rate $\mathcal{O}(n^{-\alpha})$ with $\alpha>0$ is achievable by choosing $p_n = 1/(n+1)^\alpha$.
\end{example}
\begin{remark}
In the finite-dimensional setting, the compactness of the state space $\mathfrak{S}(\mathcal{H})$ guarantees the equivalence between ergodicity and uniform ergodicity, as well as between mixing and uniform mixing. However, this equivalence generally breaks down in infinite-dimensional Hilbert spaces, where compactness is no longer present. A detailed proof of the equivalence in the finite-dimensional case is provided in the above proof of Theorem \ref{theorem:ergodicity_hierarchy_implications}. A thorough analysis of the distinctions arising in the infinite-dimensional scenario will be addressed in future work.
\end{remark}

\section{Weakly  Mixing Inhomogeneous Quantum processes}\label{sect_Mixcharacterization}
In this section, we introduce a hierarchy of ergodicity and mixing for inhomogeneous quantum processes, and prove Theorem \ref{theorem:characterizations_ergodicity_mixing}. We also present several related results, providing a rigorous framework for analyzing memory loss and convergence in time‑dependent quantum evolutions.
\begin{definition}[Weak Ergodicity, Weak Mixing, and Exponential Weak Mixing]\label{definition:weak_ergodicity_hierarchy}
Let $\{\Phi_n\}_{n \in \mathbb{N}}$ be a sequence of quantum channels on $\mathfrak{S}(\mathcal{H})$ and let $t \in \{f, b\}$ indicate forward ($f$) or backward ($b$) evolution.
\begin{enumerate}
    \item \textbf{Weak Ergodicity:} The sequence satisfies
    \begin{equation}\label{equation:weak_ergodicity}
        \forall \rho, \sigma \in \mathfrak{S}(\mathcal{H}): \; \lim_{n \to \infty}
        \bigl\| \overline{\Phi}^{(t)}_{1,n}(\rho) - \overline{\Phi}^{(t)}_{1,n}(\sigma) \bigr\|_{1} = 0
    \end{equation}
    \item \textbf{Uniform Weak Ergodicity:}
    \begin{equation}\label{equation:def_uniform_Weak_ergodicity}
        \lim_{n \to \infty} \sup_{\rho,\sigma \in \mathfrak{S}(\mathcal{H})}
        \bigl\| \overline{\Phi}^{(t)}_{1,n}(\rho) - \overline{\Phi}^{(t)}_{1,n}(\sigma) \bigr\|_{1} = 0
    \end{equation}
    \item \textbf{Weak Mixing:}
    \begin{equation}\label{equation:mixing}
        \forall \rho, \sigma \in \mathfrak{S}(\mathcal{H}): \; \lim_{n \to \infty}
        \bigl\| \Phi^{(t)}_{1,n}(\rho) - \Phi^{(t)}_{1,n}(\sigma) \bigr\|_{1} = 0
    \end{equation}
    \item \textbf{Uniform Weak Mixing:}
    \begin{equation}
        \lim_{n \to \infty} \sup_{\rho,\sigma \in \mathfrak{S}(\mathcal{H})}
        \bigl\| \Phi^{(t)}_{1,n}(\rho) - \Phi^{(t)}_{1,n}(\sigma) \bigr\|_{1} = 0
    \end{equation}

    \item \textbf{Exponential Weak Mixing:} There exist constants $C > 0$ and $\mu \in (0,1)$ such that
    \begin{equation}\label{equation:exponential_mixing}
        \sup_{\rho,\sigma \in \mathfrak{S}(\mathcal{H})}
        \bigl\| \Phi^{(t)}_{1,n}(\rho) - \Phi^{(t)}_{1,n}(\sigma) \bigr\|_{1} \leq C \mu^{\,n}
    \end{equation}
\end{enumerate}
\end{definition}
\begin{remark}[Time-uniform weak mixing on windows]
\label{remark:uniform_q_weak_mixing}
In addition to the weak ergodicity and weak mixing notions introduced in Definition~\ref{definition:weak_ergodicity_hierarchy} for compositions starting at a fixed initial time, it is natural to impose a stronger, \emph{time-uniform} (or \emph{uniform-in-window}) mixing property on the inhomogeneous dynamics. For a given direction \(t\in\{f,b\}\), we say that the evolution is time-uniformly weakly mixing on time windows if
\[
\forall m\ge 1:\quad
\lim_{n\to\infty}\sup_{\rho,\sigma\in\mathfrak{S}(\mathcal{H})}
\bigl\| {\Phi}^{(t)}_{m+1,n}(\rho)
      - {\Phi}^{(t)}_{m+1,n}(\sigma)\bigr\|_1 = 0
\]
While the standard weak mixing condition only requires
\[
\lim_{n\to\infty}\sup_{\rho,\sigma\in\mathfrak{S}(\mathcal{H})}
\bigl\| {\Phi}^{(t)}_{1,n}(\rho)
      - {\Phi}^{(t)}_{1,n}(\sigma)\bigr\|_1 = 0
\]
which quantifies loss of memory along a single time horizon emanating from the initial time \(0\), the time-uniform formulation demands that \emph{every} sufficiently long time interval \((m+1,n)\), regardless of its starting point \(m+1\), asymptotically acts as a contraction on the state space. In particular, the dynamics is not allowed to display good mixing only from the initial time and then recover memory on certain later blocks: the loss of memory must hold uniformly along the entire timeline.

The gap between these two levels of control is made explicit by a simple example that applies to both directions \(t\in\{b,f\}\). Let \(\mathcal{H}\) be a \(d\)-dimensional Hilbert space and define the completely depolarizing channel
\[
\Omega(\rho) :=  \,\frac{\mathbb{I}}{d},
\qquad \rho \in \mathfrak{S}(\mathcal{H})
\]
and consider the sequence of channels
\[
\Phi_1 = \Omega,\qquad \Phi_n = \mathrm{id}
\quad \text{for all } n \ge 1
\]
For either direction \(t\in\{f,b\}\), the composition \(\Phi^{(t)}_{1,n}\) is equal to \(\Omega\) for every \(n\ge 1\). Hence the ``from time \(0\)'' weak mixing condition holds in the strongest possible way: after the first step all initial states are mapped to the maximally mixed state and remain there. However, for any \(m\ge 1\) and any \(n\ge m+1\) we have
\[
\Phi^{(t)}_{m+1,n} = \mathrm{id}
\]
so on blocks that start strictly after the depolarizing step no contraction occurs at all. In particular,
\[
\sup_{\rho,\sigma\in\mathfrak{S}(\mathcal{H})}
\bigl\|\Phi^{(t)}_{n,n+q}(\rho)
      -\Phi^{(t)}_{n,n+q}(\sigma)\bigr\|_1
=
\begin{cases}
0, & n = 1,\\[0.3em]
2, & n \ge 2,
\end{cases}
\]
for every fixed \(q\in \mathbb{N}\) and both choices of \(t\). Thus the dynamics is weakly mixing in the usual sense (based on compositions starting at time \(0\)) for each direction, yet it fails the time-uniform weak mixing property on windows. This shows that the \((m+1,n)\)-based notion is strictly stronger than the standard weak mixing and uniform weak mixing definitions and genuinely enforces uniform mixing on \emph{all} time windows rather than only along a single trajectory.

The same idea naturally leads to strengthened \emph{time-uniform ergodic} and \emph{time-uniform mixing} requirements in which uniformity over time windows is imposed, respectively, on Cesàro averages of the compositions and on the compositions themselves. We do not develop the full hierarchy of such time-uniform ergodic and mixing properties here; instead, we simply note that the time-uniform weak mixing condition introduced above is the key asymptotic assumption used in Theorem~\ref{thm_main_inhom_MPS}, and that these refined ergodic and mixing notions will be systematically investigated in future work.
\end{remark}

\begin{remark}
The above definition captures three key modes of asymptotic behavior in the evolution of non-homogeneous quantum processes. Weak ergodicity reflects the idea that, when averaged over time, the influence of the initial state gradually disappears—a concept analogous to the classical ergodic theorem. Weak mixing strengthens this notion by requiring pointwise convergence of the evolved states, without averaging. Finally,  exponential weak mixing quantifies the rate of this convergence, demanding that it occurs at an exponential speed. These distinctions are crucial in understanding the degree to which a quantum system "forgets" its initial configuration and converge toward a universal long-term behavior, regardless of its starting point.
\end{remark}

\begin{theorem}[Implications within the Weak Ergodicity Hierarchy]\label{theorem:weak_ergodicity_hierarchy_implications}
Let $\{\Phi_n\}_{n\in\mathbb{N}}$ be a sequence of quantum channels on $\mathfrak{S}(\mathcal{H})$, and let $\overline{\Phi}^{(t)}_{1,n}$ denote the ergodic average as defined in Definition~\ref{definition:weak_ergodicity_hierarchy}. Then, the following implications hold for any evolution direction $t \in \{f,b\}$:
\begin{center}
$\text{Exponential Weak Mixing} \;\Rightarrow\; \text{Uniform Weak Mixing} \;\Leftrightarrow\; \text{Weak Mixing} \;\Rightarrow\; \text{Uniform Weak Ergodicity} \;\Leftrightarrow\; \text{Weak Ergodicity}$.
\end{center}
Moreover, none of the reverse implications holds in general.
\end{theorem}
\begin{proof}
We prove the implications in order. Throughout, \(\mathfrak{S}(\mathcal{H})\) is compact (finite‑dimensional) and the map \((\rho,\sigma)\mapsto \Big\|\Psi(\rho)-\Psi(\sigma)\Big\|_1\) is continuous for any channel \(\Psi\)

\noindent\textbf{(1) Exponential Weak Mixing \(\Rightarrow\) Uniform Weak Mixing.}
By definition, exponential weak mixing gives constants \(C>0\), \(\mu\in(0,1)\) with
\[
\sup_{\rho,\sigma}\Big\|\Phi^{(t)}_{1,n}(\rho)-\Phi^{(t)}_{1,n}(\sigma)\Big\|_1 \le C\mu^n
\]
Hence \(\lim_{n\to\infty}\sup_{\rho,\sigma}\Big\|\Phi^{(t)}_{1,n}(\rho)-\Phi^{(t)}_{1,n}(\sigma)\Big\|_1 = 0\), which is uniform weak mixing.

\noindent\textbf{(2) Uniform Weak Mixing \(\Leftrightarrow\) Weak Mixing.}

\noindent\textit{Uniform \(\Rightarrow\) weak.} This direction is immediate, since uniform convergence implies pointwise convergence.

\noindent\textit{Weak \(\Rightarrow\) uniform.}
Assume weak mixing holds, i.e.
\[
\forall \rho,\sigma\in\mathfrak{S}(\mathcal{H}):\;\lim_{n\to\infty}\Big\|\Phi^{(t)}_{1,n}(\rho)-\Phi^{(t)}_{1,n}(\sigma)\Big\|_1 = 0
\]
We treat the forward (\(t=f\)) and backward (\(t=b\)) cases separately.

\vspace{0.5em}
\noindent\textbf{Forward case (\(t=f\)).}
Define \(f_n(\rho,\sigma):=\Big\|\Phi_{1,n}^{(f)}(\rho)-\Phi_{1,n}^{(f)}(\sigma)\Big\|_1\).
Because each \(\Phi_k\) is continuous in the trace norm, each \(f_n\) is a continuous function on the compact space \(\mathfrak{S}(\mathcal{H})\times\mathfrak{S}(\mathcal{H})\).
Moreover, for any fixed \((\rho,\sigma)\),
\[
f_{n+1}(\rho,\sigma)=\Big\|\Phi_{n+1}\bigl(\Phi_{1,n}^{(f)}(\rho)\bigr)-\Phi_{n+1}\bigl(\Phi_{1,n}^{(f)}(\sigma)\bigr)\Big\|_1
\le\Big\|\Phi_{1,n}^{(f)}(\rho)-\Phi_{1,n}^{(f)}(\sigma)\Big\|_1=f_n(\rho,\sigma)
\]
where the inequality follows from the contractivity of \(\Phi_{n+1}\).
Thus \(\{f_n\}\) is a non‑increasing sequence of continuous functions converging pointwise to zero. By Dini’s theorem (monotone convergence on a compact set implies uniform convergence), we obtain
\[
\lim_{n\to\infty}\sup_{\rho,\sigma}f_n(\rho,\sigma)=0
\]
which is precisely uniform weak mixing.

\noindent\textbf{Backward case (\(t=b\)).}
For each fixed \(n\) define
\[F_n(\rho,\sigma)=\Big\|\Phi_{1,n}^{(b)}(\rho)-\Phi_{1,n}^{(b)}(\sigma)\Big\|_1\]
Define the \emph{diameter}
\[
d_n:=\sup_{\rho,\sigma\in\mathfrak{S}(\mathcal{H})}F_n(\rho,\sigma)
\]
Observe that
\[
\Phi_{1,n+1}^{(b)}=\Phi_{1,n}^{(b)}\circ\Phi_{n+1}
\]
Hence, for any \(\rho,\sigma\in\mathfrak{S}(\mathcal{H})\),
\[
\Big\|\Phi_{1,n+1}^{(b)}(\rho)-\Phi_{1,n+1}^{(b)}(\sigma)\Big\|_1
=\Big\|\Phi_{1,n}^{(b)}\bigl(\Phi_{n+1}(\rho)\bigr)-\Phi_{1,n}^{(b)}\bigl(\Phi_{n+1}(\sigma)\bigr)\Big\|_1
\le d_n
\]
because \(\Phi_{n+1}(\rho),\Phi_{n+1}(\sigma)\in\mathfrak{S}(\mathcal{H})\). Taking the supremum over \(\rho,\sigma\) yields \(d_{n+1}\le d_n\); thus \(\{d_n\}\) is a non‑increasing sequence of non‑negative numbers and therefore converges to some limit \(d_{\infty}\ge0\).

We now show that weak mixing forces \(d_{\infty}=0\). Fix \(\varepsilon>0\). By weak mixing, for each pair \((\rho,\sigma)\) there exists \(N(\rho,\sigma)\) such that
\[
n\ge N(\rho,\sigma)\;\Longrightarrow\; \Big\|\Phi_{1,n}^{(b)}(\rho)-\Phi_{1,n}^{(b)}(\sigma)\Big\|_1<\varepsilon
\]
For each \(n\), choose \((\rho_n,\sigma_n)\) attaining the diameter, i.e.
\[
d_n=\Big\|\Phi_{1,n}^{(b)}(\rho_n)-\Phi_{1,n}^{(b)}(\sigma_n)\Big\|_1 = F_n(\rho_n,\sigma_n)
\]
Since \(\mathfrak{S}(\mathcal{H})\times\mathfrak{S}(\mathcal{H})\) is compact, the sequence \(\{(\rho_n,\sigma_n)\}\) has a convergent subsequence \(\{(\rho_{n_k},\sigma_{n_k})\}\) with limit \((\rho_{\infty},\sigma_{\infty})\).
For this limit pair, weak mixing gives an \(N\) such that
\[
n\ge N\;\Longrightarrow\; F_n(\rho_\infty,\sigma_\infty) =\Big\|\Phi_{1,n}^{(b)}(\rho_{\infty})-\Phi_{1,n}^{(b)}(\sigma_{\infty})\Big\|_1<\varepsilon
\]
Now fix a \(k\) large enough so that \(n_k\ge N\), such that
\[
\Big\|\rho_{n_k} - \rho_{\infty}\Big\|_1 + \Big\|\sigma_{n_k} - \sigma_{\infty}\Big\|_1 < \varepsilon
\]
One has

\begin{eqnarray*}
d_{n_k} = F_{n_k}(\rho_{n_k},\sigma_{n_k})
&\le &  F_{n_k}(\rho_{n_k},\rho_{\infty}) +  F_{n_k}(\rho_\infty,\sigma_\infty) +  F_{n_k}(\sigma_\infty,\sigma_{n_k}),\\
 &\le &  \Big\|\rho_{n_k}-\rho_{\infty}\Big\|_1 + \varepsilon  + \Big\|\sigma_\infty-\sigma_{n_k}\Big\|_1
\end{eqnarray*}
Then
\[
d_{n_k}=F_{n_k}(\rho_{n_k},\sigma_{n_k})<F_{n_k}(\rho_{\infty},\sigma_{\infty})+\varepsilon<\varepsilon+\varepsilon=2\varepsilon
\]
Because \(\{d_n\}\) is non‑increasing, for all \(n\ge n_k\) we have \(d_n\le d_{n_k}<2\varepsilon\). Hence \(\lim_{n\to\infty}d_n=0\), i.e.
\[
\lim_{n\to\infty}\sup_{\rho,\sigma}\Big\|\Phi_{1,n}^{(b)}(\rho)-\Phi_{1,n}^{(b)}(\sigma)\Big\|_1=0
\]
which is uniform weak mixing.\\
\noindent\textbf{(3) Uniform Weak Ergodicity \(\Leftrightarrow\) Weak Ergodicity.}
The same compactness argument applied to the ergodic averages \(\overline{\Phi}^{(t)}_{1,n}\) yields the equivalence.

\noindent\textbf{(4) Weak Mixing \(\Rightarrow\) Uniform Weak Ergodicity.}
Assume weak mixing. Given \(\varepsilon>0\), choose \(N\) such that
\[
\Big\|\Phi^{(t)}_{1,n}(\rho)-\Phi^{(t)}_{1,n}(\sigma)\Big\|_1 < \varepsilon \quad \forall n\ge N,\; \forall\rho,\sigma
\]
For the ergodic average,
\[
\Big\|\overline{\Phi}^{(t)}_{1,n}(\rho)-\overline{\Phi}^{(t)}_{1,n}(\sigma)\Big\|_1
\le \frac{1}{n+1}\sum_{k=0}^{N-1} \Big\|\Phi^{(t)}_{1,k}(\rho)-\Phi^{(t)}_{1,k}(\sigma)\Big\|_1
+ \frac{1}{n+1}\sum_{k=N}^{n} \Big\|\Phi^{(t)}_{1,k}(\rho)-\Phi^{(t)}_{1,k}(\sigma)\Big\|_1
\]
The first sum is bounded by \(2N\) (diameter of \(\mathfrak{S}(\mathcal{H})\) is \(2\)). The second sum is bounded by \((n+1-N)\varepsilon\). Thus for large \(n\),
\[
\Big\|\overline{\Phi}^{(t)}_{1,n}(\rho)-\overline{\Phi}^{(t)}_{1,n}(\sigma)\Big\|_1
\le \frac{2N}{n+1} + \varepsilon
\]
Letting \(n\to\infty\) gives \(\limsup_n \Big\|\overline{\Phi}^{(t)}_{1,n}(\rho)-\overline{\Phi}^{(t)}_{1,n}(\sigma)\Big\|_1 \le \varepsilon\) for every \(\varepsilon>0\); hence the limit is zero. By compactness, the convergence is uniform, i.e., uniform weak ergodicity holds.
\end{proof}

\begin{remark}
The depolarizing sequence of Example~\ref{ex:polynomial_mixing} can be used analogously to show that \emph{uniform weak mixing does not imply exponential weak mixing}.
Indeed, the sequence
\[
\Phi_n(\rho) = \Bigl(1 - \frac{1}{n+1}\Bigr)\,\rho + \frac{1}{n+1}\,\frac{\mathbb{I}}{d}
\]
satisfies
\[
\sup_{\rho,\sigma\in\mathfrak{S}(\mathcal{H})}
\bigl\|\Phi^{(t)}_{1,n}(\rho) - \Phi^{(t)}_{1,n}(\sigma)\bigr\|_1
= \frac{2}{n+2}
\]
so it is uniformly weakly mixing.  However, the convergence is only polynomial, not exponential.  Hence it serves as a concrete counter‑example for the corresponding non‑reversible implication in Theorem \ref{theorem:weak_ergodicity_hierarchy_implications}.
\end{remark}
\begin{proof}[\textbf{\large Proof of Theorem \ref{theorem:characterizations_ergodicity_mixing}}]
   We start by showing that the direct implications hold both for the forward and backward dynamics. Let $\rho_{\infty}^{(t)}\in\mathfrak{S}(\mathcal{H})$  be the limiting density operators.

For every $\rho, \sigma\in\mathfrak{S}(\mathcal{H})$ one has

\begin{equation}\label{eq_triang}
\Big\|\Phi_{1,n}^{(t)}(\rho) - \Phi_{1,n}^{(t)}(\sigma) \Big\|_1 \le \Big\|\Phi_{1,n}^{(t)}(\rho) - \rho_{\infty}^{(t)} \Big\|_1 + \Big\|\Phi_{1,n}^{(t)}(\sigma) - \rho_{\infty}^{(t)} \Big\|_1
\end{equation}
It follows that if $\{\Phi_{1,n}^{(t)}\}$ is mixing then (\ref{equation:mixing}) is satisfied.

 Now for the exponential mixing  by majorizing  the right hand side of (\ref{eq_triang}) according to (\ref{equation:exponential_mixing}) one gets
\begin{equation*}
\Big\|\Phi_{1,n}^{(t)}(\rho) - \Phi_{1,n}^{(t)}(\sigma) \Big\|_1 \le 2C\mu^n
\end{equation*}
This finishes the direct implication for both backward and forward dynamics.

   $\Leftarrow$
    By definition of backward evolution, one has:
\begin{equation*}
\Phi_{1,n+1}^{(b)}(\mathfrak{S}(\mathcal{H}))
= \Phi_{1,n}^{(b)} \circ \Phi_{n+1} (\mathfrak{S}(\mathcal{H}))
\subseteq \Phi_{1,n}^{(b)}(\mathfrak{S}(\mathcal{H}))
\end{equation*}
Considering the nesting condition (\ref{eq_nest}) for the forward dynamics, it follows that the family of sets
\[
\mathcal{C}_n^{(t)} := \Phi_{1,n}^{(t)}(\mathfrak{S}(\mathcal{H}))
\]
forms a decreasing sequence of nonempty compact subsets of the compact space $\mathfrak{S}(\mathcal{H})$. Thus, by standard topological arguments, the intersection
\[
\bigcap_{n \in \mathbb{N}} \mathcal{C}_n^{(t)}
\]
is nonempty and compact. Assume now that the  condition \eqref{equation:mixing} holds. Then for any pair of states $\rho, \sigma \in \mathfrak{S}(\mathcal{H})$, one has
\[
\Big\|\Phi_{1,n}^{(t)}(\rho) - \Phi_{1,n}^{(t)}(\sigma)\Big\|_1  \xrightarrow{n \to \infty} 0
\]
Then thanks to the equivalence of weak mixing and uniform weak mixing in finite dimension from Theorem \ref{theorem:weak_ergodicity_hierarchy_implications}, the  sequence of diameters of the compact sets $\mathcal{C}_n^{(t)}$ goes to zero:
\[
\lim_{n \to \infty} \operatorname{diam}(\mathcal{C}_n^{(t)}) = 0
\]
As a result, the nested intersection contains a single point:
\[
\bigcap_{n \in \mathbb{N}} \mathcal{C}_n^{(t)} = \{\rho_{\infty}^{(t)}\}
\]
for some state $\rho_{\infty}^{(t)}\in \mathfrak{S}(\mathcal{H})$. This implies that for any initial state $\rho \in \mathfrak{S}(\mathcal{H})$, the sequence $\Phi_{1,n}^{(t)}(\rho)$ converges in norm to $\rho_{\infty}^{(t)}$, i.e.,
\[
\lim_{n \to \infty} \Phi_{1,n}^{(t)}(\rho) = \rho_{\infty}^{(t)}
\]
Thus, the process is indeed mixing with limiting state $\rho_{\infty}^{(t)}$.
\\

   The nesting condition implies that for every integers $m,n$ one has
     $$\mathcal{C}^{(t)}_{n+m} = \Phi^{(t)}_{1,n+m}(\mathfrak{S}(\mathcal{H}))\subseteq \Phi^{(t)}_{1,n}(\mathfrak{S}(\mathcal{H})) = \mathcal{C}^{(t)}_{n}, \quad \forall n,m\in\mathbb{N}$$

     Then for every $\rho\in\mathfrak{S}(\mathcal{H})$ there exists $\sigma\in\mathfrak{S}(\mathcal{H})$ such that $\Phi^{(t)}_{1,n+m}(\rho) = \Phi^{(t)}_{1,n}(\sigma)$. It follows from (\ref{equation:exponential_mixing}) that
   $$\Big\|\Phi_{1,n}^{(t)}(\rho) - \Phi_{1,n+m}^{(t)}(\rho)\Big\|_1 = \Big\|\Phi_{1,n}^{(t)}(\rho) - \Phi_{1,n}^{(t)}(\sigma)\Big\|_1\le C\mu^n
   $$
   then by taking $m\to \infty$ on the right hand side, one gets (\ref{equation:def_exponential_mixing}). This finishes the proof.
\end{proof}
\begin{example} \label{ex:quantum_mixing_oscillation}
This example extends Example~\ref{exp1} to the quantum setting, illustrating two key phenomena:
\begin{itemize}
    \item The equivalence between \emph{exponential mixing} and \emph{exponential weak mixing} in the \emph{backward dynamics} (as shown in  Theorem \ref{theorem:characterizations_ergodicity_mixing}).
    \item The failure of this equivalence in the \emph{forward direction} when the nesting condition (\ref{eq_nest}) is violated.
\end{itemize}

Consider a sequence of quantum channels $\{\Phi_n\}_{n\in\mathbb{N}}$ on $\mathcal{H} = \mathbb{C}^2$, defined by:
\begin{align*}
    \Phi_{2k}(\rho) &= |0\rangle\langle 0| \quad \text{(even steps)}\\
    \Phi_{2k+1}(\rho) &= |1\rangle\langle 1| \quad \text{(odd steps)}
\end{align*}
with Kraus operators $\{K_i^{(n)}\}$ satisfying $\sum_i K_i^{(n)\dagger} K_i^{(n)} = \mathbb{I}$ for all $n$.

\noindent \textbf{Forward Dynamics:}
The forward process $\Phi_{1,n}^{(f)} = \Phi_n \circ \cdots \circ \Phi_1$ exhibits:
\begin{equation} \label{eq:perfect_mixing}
    \Big\|\Phi_{1,n}^{(f)}(\rho) - \Phi_{1,n}^{(f)}(\sigma)\Big\|_1 = 0 \quad \forall \rho, \sigma \in \mathfrak{S}(\mathcal{H}), \; \forall n \geq 1.
\end{equation}
trivially satisfying both exponential weak mixing (\ref{equation:exponential_mixing}) and (yet) weak mixing (\ref{equation:mixing}). However, the sequence does not converge:
\begin{equation} \label{eq:oscillation}
    \Phi_{1,n}^{(f)}(\rho) = \begin{cases}
        |0\rangle\langle 0| & \text{if } n \text{ even}, \\
        |1\rangle\langle 1| & \text{if } n \text{ odd},
    \end{cases}
\end{equation}
due to the violation of the nesting condition (\ref{eq_nest}):
\begin{equation} \label{eq:nesting_failure}
    \Phi_{1,2k}^{(f)}(\mathfrak{S}(\mathcal{H})) = \{|0\rangle\langle 0|\} \not\subseteq \{|1\rangle\langle 1|\} = \Phi_{1,2k+1}^{(f)}(\mathfrak{S}(\mathcal{H}))
\end{equation}

\noindent \textbf{Backward Dynamics:}
In contrast, the backward process is stationary:
$$
\Phi_{1,n}^{(b)}(\rho) = \Phi_1 \circ \cdots \circ \Phi_n(\rho) = |1\rangle\langle 1|, \qquad \forall \rho\in\mathfrak{S}(\mathcal{H}).
$$
Then it is exponentially mixing, as the nesting condition holds trivially for the backward composition.

This demonstrates that while (\ref{eq:perfect_mixing}) ensures \emph{exponential weak mixing} in the forward direction, the lack of nesting prevents mixing. Thus, the nesting condition is essential for the equivalence of mixing notions in the forward case, whereas the backward dynamics naturally enforce it.
\end{example}

\begin{remark} \label{remark:backward_forward_asymmetry}
The structural analysis presented in Theorem~\ref{theorem:characterizations_ergodicity_mixing}   reveals an asymmetry between backward and forward dynamics in quantum processes. In the \emph{backward dynamics}, mixing and weak mixing coincide automatically due to an inherent structural nesting of the image spaces.  In the \emph{forward dynamics}, by contrast, the equivalence only holds under an additional nesting condition (\ref{eq_nest}).  Without this condition, weak mixing does not guarantee  mixing as illustrated in Example \ref{ex:quantum_mixing_oscillation}, revealing a fundamental difference between the two directions of time evolution.
\end{remark}

\begin{figure}[h]
 \begin{minipage}[c]{0.45\textwidth}
 \caption{Hierarchy of inclusion relationships among classes of homogeneous quantum processes  on $\mathfrak{S}(\mathcal{H})$:}\label{figerg}
\begin{itemize}
\item $\mathcal{EQP(\mathcal{H})}$: The set of ergodic quantum processes.
\item $\mathcal{MQP(\mathcal{H})}$: The set of mixing quantum processes.
\item $\mathcal{EMQP(\mathcal{H})}$: The set of exponentially mixing quantum processes.
\end{itemize}

\end{minipage}
    \hfill
  \begin{minipage}[c]{0.5\textwidth}
\centering
\begin{tikzpicture}[scale=2.5]

\filldraw[green!20, opacity=0.4] (-1.3,-1.3) rectangle (1.3,1.3);
\node[black] at (0,-1.2) {\small Homogeneous Processes};

\draw[thick, fill=black!20]
    plot[smooth cycle, tension=0.6, domain=0:360, variable=\t]
    ({1.0*cos(\t)*(1.0+0.2*sin(3*\t))}, {1.0*sin(\t)*(1.0+0.1*cos(5*\t))});

\draw[thick, red, fill=red!20]
    plot[smooth cycle, tension=0.7, domain=0:360, variable=\t]
    ({0.85*cos(\t)*(0.8+0.15*sin(4*\t))}, {0.85*sin(\t)*(0.8+0.15*cos(2*\t))});

\draw[thick, blue, fill=blue!20]
    plot[smooth cycle, tension=0.8, domain=0:360, variable=\t]
    ({0.7*cos(\t)*(0.45+0.1*sin(5*\t))}, {0.7*sin(\t)*(0.45+0.1*cos(3*\t))});

\begin{scope}[shift={(0,-0.55,0)}]
    \draw[black, stealth-, line width=1pt] (0.4, 1.5) -- (1,1.7) ;
    \node[anchor=west, black] at (1, 1.7) {$\mathcal{EQP}(\mathcal{H})$};

    \draw[red, stealth-, line width=1pt] (0.3,1.1) -- (1.1,1.2);
    \node[anchor=west, red] at (1.1, 1.2) {$\mathcal{MQP}(\mathcal{H})$};

    \draw[blue, stealth-, line width=1pt]  (0.3,0.6) -- (1.1,0.5);
    \node[anchor=west, blue] at (1.1,0.5) {$\mathcal{EMQP}(\mathcal{H})$};
\end{scope}
\end{tikzpicture}
  \end{minipage}
  \\
  \begin{minipage}{0.45\textwidth}
\centering
\tdplotsetmaincoords{70}{300}
\begin{tikzpicture}[tdplot_main_coords, scale=2.5, xshift=-0.3cm]

\def\hBlack{1.5} 
\def\hRed{1.1}   
\def\hBlue{0.7}  

\filldraw[green!30, opacity=0.5, rotate around y=5]
    (-1.3,-1.3,0) -- (1.3,-1.3,0) -- (1.3,1.3,0) -- (-1.5,1.5,0) -- cycle;

\begin{scope}[canvas is xy plane at z=0]
    \path[black, fill=black!30, opacity=0.6]
        plot[smooth cycle, tension=0.6, domain=0:360, variable=\t]
        ({0.9*cos(\t)*(1.0+0.2*sin(3*\t))}, {0.9*sin(\t)*(1.0+0.1*cos(5*\t))});
\end{scope}
\begin{scope}[canvas is xy plane at z=\hBlack]
    \path[black, fill=black!15, opacity=0.8]
        plot[smooth cycle, tension=0.6, domain=0:360, variable=\t]
        ({0.55*cos(\t)*(1.0+0.2*sin(3*\t))}, {0.55*sin(\t)*(1.0+0.1*cos(5*\t))});
\end{scope}
\foreach \t in {0,15,...,345} {
    \draw[black, fill=black!25, opacity=0.7]
        ({0.9*cos(\t)*(1.0+0.2*sin(3*\t))}, {0.9*sin(\t)*(1.0+0.1*cos(5*\t))}, 0) --
        ({0.55*cos(\t)*(1.0+0.2*sin(3*\t))}, {0.55*sin(\t)*(1.0+0.1*cos(5*\t))}, \hBlack) --
        ({0.55*cos(\t+15)*(1.0+0.2*sin(3*(\t+15)))}, {0.55*sin(\t+15)*(1.0+0.1*cos(5*(\t+15)))}, \hBlack) --
        ({0.9*cos(\t+15)*(1.0+0.2*sin(3*(\t+15)))}, {0.9*sin(\t+15)*(1.0+0.1*cos(5*(\t+15)))}, 0) -- cycle;
}

\begin{scope}[canvas is xy plane at z=0]
    \path[red, fill=red!30, opacity=0.6]
        plot[smooth cycle, tension=0.7, domain=0:360, variable=\t]
        ({0.8*cos(\t)*(0.75+0.15*sin(4*\t))}, {0.8*sin(\t)*(0.75+0.15*cos(2*\t))});
\end{scope}
\begin{scope}[canvas is xy plane at z=\hRed]
    \path[red, fill=red!15, opacity=0.8]
        plot[smooth cycle, tension=0.7, domain=0:360, variable=\t]
        ({0.5*cos(\t)*(0.75+0.15*sin(4*\t))}, {0.5*sin(\t)*(0.75+0.15*cos(2*\t))});
\end{scope}
\foreach \t in {0,15,...,345} {
    \draw[red, fill=red!25, opacity=0.7]
        ({0.8*cos(\t)*(0.75+0.15*sin(4*\t))}, {0.8*sin(\t)*(0.75+0.15*cos(2*\t))}, 0) --
        ({0.5*cos(\t)*(0.75+0.15*sin(4*\t))}, {0.5*sin(\t)*(0.75+0.15*cos(2*\t))}, \hRed) --
        ({0.5*cos(\t+15)*(0.75+0.15*sin(4*(\t+15)))}, {0.5*sin(\t+15)*(0.75+0.15*cos(2*(\t+15)))}, \hRed) --
        ({0.8*cos(\t+15)*(0.75+0.15*sin(4*(\t+15)))}, {0.8*sin(\t+15)*(0.75+0.15*cos(2*(\t+15)))}, 0) -- cycle;
}

\begin{scope}[canvas is xy plane at z=0]
    \path[blue, fill=blue!30, opacity=0.6]
        plot[smooth cycle, tension=0.8, domain=0:360, variable=\t]
        ({0.65*cos(\t)*(0.4+0.1*sin(5*\t))}, {0.65*sin(\t)*(0.4+0.1*cos(3*\t))});
\end{scope}
\begin{scope}[canvas is xy plane at z=\hBlue]
    \path[blue, fill=blue!15, opacity=0.9]
        plot[smooth cycle, tension=0.8, domain=0:360, variable=\t]
        ({0.4*cos(\t)*(0.4+0.1*sin(5*\t))}, {0.4*sin(\t)*(0.4+0.1*cos(3*\t))});
\end{scope}
\foreach \t in {0,15,...,345} { \draw[blue, fill=blue!25, opacity=0.8]         ({0.65*cos(\t)*(0.4+0.1*sin(5*\t))}, {0.65*sin(\t)*(0.4+0.1*cos(3*\t))}, 0) --        ({0.4*cos(\t)*(0.4+0.1*sin(5*\t))}, {0.4*sin(\t)*(0.4+0.1*cos(3*\t))}, \hBlue) --        ({0.4*cos(\t+15)*(0.4+0.1*sin(5*(\t+15)))}, {0.4*sin(\t+15)*(0.4+0.1*cos(3*(\t+15)))}, \hBlue) --        ({0.65*cos(\t+15)*(0.4+0.1*sin(5*(\t+15)))}, {0.65*sin(\t+15)*(0.4+0.1*cos(3*(\t+15)))}, 0) -- cycle;}

\draw[gray, dashed, opacity=0.4] (0,0,0) -- (0,0,\hBlack);
\foreach \a in {0,90,180,270} {
    \draw[gray, dashed, opacity=0.3]
        ({0.9*cos(\a)*(1.0+0.2*sin(3*\a))}, {0.9*sin(\a)*(1.0+0.1*cos(5*\a))}, 0) --
        (0,0,\hBlack);
}

\begin{scope}[shift={(0,-0.55,0)}]
    \draw[black, stealth-, line width=1pt] (-0.4,0.8,1.7) -- (-1.5,0.8,1.7) ;
    \node[anchor=west, black] at (-3.4,0.8,2.2) {$\mathcal{WEQP}(\mathcal{H})$};

    \draw[red, stealth-, line width=1pt] (-0.2,0.8,1) -- (-1.5,0.8,1);
    \node[anchor=west, red] at (-3.5,0.8,1.5) {$\mathcal{WMQP}(\mathcal{H})$};

    \draw[blue, stealth-, line width=1pt]  (0,0.8,0.4) -- (-2.2,0.8,0.4);
    \node[anchor=west, blue] at (-3.9,0.8,0.8) {$\mathcal{EWMQP}(\mathcal{H})$};
\end{scope}
\end{tikzpicture}
\end{minipage} \hskip2mm
\begin{minipage}{0.45\textwidth}
 \caption{Hierarchy of the inclusion relationships among sets of weakly ergodic inhomogeneous quantum processes on $\mathfrak{S}(\mathcal{H})$.}\label{figweakerg} 
\begin{itemize}
\item $\mathcal{WEQP(\mathcal{H})}$: The set of weakly ergodic quantum processes.
\item $\mathcal{WMQP(\mathcal{H})}$: The set of weakly mixing quantum processes.
\item $\mathcal{EWMQP(\mathcal{H})}$: The set of exponentially weakly mixing quantum processes.
\end{itemize}
\end{minipage}
\end{figure}
\begin{remark}
The diagrams in Figs.~\ref{figerg} and~\ref{figweakerg} offer a schematic—but not literal—representation of the ergodic hierarchies for quantum processes. The geometric shapes are purely illustrative and do not reflect any intrinsic topology of the abstract sets themselves. Rather, they serve as visual metaphors: the 2D nested domains in Fig.~\ref{figerg} emphasize the strict, flat inclusions
$$\mathcal{EMQP}(\mathcal{H}) \subsetneq \mathcal{MQP}(\mathcal{H}) \subsetneq \mathcal{EQP}(\mathcal{H})
$$
for homogeneous processes, while the 3D conical structure in Fig.~\ref{figweakerg} captures the more intricate stratification
$$
\mathcal{EWMQP}(\mathcal{H}) \subsetneq \mathcal{WMQP}(\mathcal{H}) \subsetneq \mathcal{WEQP}(\mathcal{H})
$$
for inhomogeneous cases. The green plane in Fig.~\ref{figweakerg} symbolically represents how homogeneous processes (from Fig.~\ref{figerg}) arise as a "cross-section" of the weakly ergodic hierarchy, intersecting it at the level where time-dependent effects vanish. This contrast highlights how inhomogeneous processes generalize the homogeneous framework, with the added dimensionality reflecting their temporal complexity. No deeper geometric meaning is implied; the figures simply underscore that weak ergodicity accommodates a broader, more layered structure due to time dependence.
\end{remark}
\subsection{Trajectory-Ergodic Characterization of Weak Mixing}
\label{subsec:trajectory_ergodicity}
This subsection gives a sequential characterization of weak mixing and exponential weak mixing. Weak mixing is equivalent to the existence of a contracting sequence of density matrices \((Z_n)\) in the sense of Movassagh–Schenker \cite{movassagh2021theory,movassagh2022a}. Exponential weak mixing corresponds to a uniform exponential contraction rate along such a trajectory. This yields ergodicity criteria analogous to those in their framework, while include both forward and backward dynamics.
\begin{theorem}
\label{thm:mixing_ergodic_sequence}
Let $\{\Phi_n\}_n$ be a sequence of quantum channels and consider the associated sequences $\{\Phi_{1,n}^{(t)}\}_{n\in\mathbb{N}}$ with $\mathcal{C}_n^{(t)} = \Phi_{1,n}^{(t)}(\mathfrak{S}(\mathcal{H}))$ for $t \in \{f,b\}$. The following characterizations hold:

\begin{enumerate}
    \item \textbf{Weak Mixing Characterization}: The following are equivalent:
    \begin{enumerate}
        \item The sequence $\{\Phi_{1,n}^{(t)}\}_n$ is weakly mixing.
        \item For every sequence $\{Z_n\}_{n\in\mathbb{N}} \subset \mathfrak{S}(\mathcal{H})$ with $Z_n\in \mathcal{C}_n^{(t)}$, we have:
        \begin{equation}\label{eq:weak_mixing_uniform}
        \forall \rho\in\mathfrak{S}(\mathcal{H}), \quad \lim_{n\to\infty} \Big\|\Phi_{1,n}^{(t)}(\rho) - Z_n\Big\|_1 = 0
        \end{equation}
        \item There exists a sequence $\{Z_n\}_{n\in\mathbb{N}} \subset \mathfrak{S}(\mathcal{H})$ with $Z_n\in \mathcal{C}_n^{(t)}$ satisfying \eqref{eq:weak_mixing_uniform}.
    \end{enumerate}

    \item \textbf{ Exponential Weak Mixing Characterization}: The following are equivalent:
    \begin{enumerate}
        \item The sequence $\{\Phi_{1,n}^{(t)}\}_n$ is  exponentially weakly mixing.
        \item There exist constants $C > 0$ and $\mu\in(0,1)$ such that for every sequence $\{Z_n\}_{n\in\mathbb{N}} \subset \mathfrak{S}(\mathcal{H})$ with $Z_n\in \mathcal{C}_n^{(t)}$:
        \begin{equation}\label{eq:exp_mixing_uniform}
        \forall \rho\in\mathfrak{S}(\mathcal{H}), \quad \Big\|\Phi_{1,n}^{(t)}(\rho) - Z_n\Big\|_1 \leq C\mu^n \quad \forall n\in\mathbb{N}
        \end{equation}
        \item There exists a sequence $\{Z_n\}_{n\in\mathbb{N}} \subset \mathfrak{S}(\mathcal{H})$ with $Z_n\in \mathcal{C}_n^{(t)}$ and constants $C,\mu$ satisfying \eqref{eq:exp_mixing_uniform}.
    \end{enumerate}
\end{enumerate}
\end{theorem}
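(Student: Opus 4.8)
The plan is to prove each part by a cyclic chain of implications $(a)\Rightarrow(b)\Rightarrow(c)\Rightarrow(a)$, isolating the single genuinely substantive ingredient, which is the passage from pointwise weak mixing to its uniform form. The starting observation, used throughout, is an identification of the sets $\mathcal{K}_n^{(t)}$: since $\mathcal{K}_n^{(t)} = \Phi_{0,n}^{(t)}(\mathfrak{S}(\mathcal{H}))$, a sequence $\{Z_n\}$ is admissible (i.e. $Z_n\in\mathcal{K}_n^{(t)}$) precisely when there exist states $\sigma_n\in\mathfrak{S}(\mathcal{H})$ with $Z_n=\Phi_{0,n}^{(t)}(\sigma_n)$. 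This lets me rewrite every distance $\|\Phi_{0,n}^{(t)}(\rho)-Z_n\|$ as $\|\Phi_{0,n}^{(t)}(\rho)-\Phi_{0,n}^{(t)}(\sigma_n)\|$, i.e. as the distance between two points of $\mathcal{K}_n^{(t)}$, hence bounded above by $\operatorname{diam}(\mathcal{K}_n^{(t)})=\sup_{\rho,\sigma}\|\Phi_{0,n}^{(t)}(\rho)-\Phi_{0,n}^{(t)}(\sigma)\|$.

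For Part 1, I would first record that weak mixing in the pointwise sense (fixed pairs $\rho,\sigma$) is equivalent to uniform weak mixing, namely $\operatorname{diam}(\mathcal{K}_n^{(t)})\to 0$. This is the key step and the main obstacle, because the sequence $\{Z_n\}$ appearing in (b) involves states $\sigma_n$ that vary with $n$, so the fixed-pair hypothesis does not apply verbatim. I would close this gap exactly as in the proof of Theorem~\ref{theorem:ergodicity_hierarchy_implications}: argue by contradiction, extracting a subsequence along which $\operatorname{diam}(\mathcal{K}_{n_k}^{(t)})\geq\varepsilon_0$, realized by states $\rho_{n_k},\sigma_{n_k}$; use compactness of $\mathfrak{S}(\mathcal{H})$ to pass to convergent subsequences $\rho_{n_k}\to\rho$ and $\sigma_{n_k}\to\sigma$; then combine the contractivity of quantum channels, $\|\Phi(X)\|\leq\|X\|$, with pointwise weak mixing along the fixed limits $\rho,\sigma$ through the triangle inequality to force the diameter to zero, a contradiction.

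Once uniform weak mixing is available the cycle closes cheaply. For $(a)\Rightarrow(b)$: given any admissible $\{Z_n\}$ and any $\rho$, the bound $\|\Phi_{0,n}^{(t)}(\rho)-Z_n\|\leq\operatorname{diam}(\mathcal{K}_n^{(t)})\to 0$ yields \eqref{eq:weak_mixing_uniform}. For $(b)\Rightarrow(c)$: this is a trivial logical weakening, since each $\mathcal{K}_n^{(t)}$ is nonempty as the image of the nonempty set $\mathfrak{S}(\mathcal{H})$, so at least one admissible sequence exists to witness (c). For $(c)\Rightarrow(a)$: given the distinguished sequence $\{Z_n\}$ from (c), for any $\rho,\sigma$ the triangle inequality $\|\Phi_{0,n}^{(t)}(\rho)-\Phi_{0,n}^{(t)}(\sigma)\|\leq\|\Phi_{0,n}^{(t)}(\rho)-Z_n\|+\|Z_n-\Phi_{0,n}^{(t)}(\sigma)\|$ sends both summands to zero, which is weak mixing.

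Part 2 follows the identical skeleton and is in fact slightly easier, since the definition of exponential weak mixing already carries state-independent constants $C,\mu$ and thus reads directly as $\operatorname{diam}(\mathcal{K}_n^{(t)})\leq C\mu^n$, requiring no compactness upgrade. Consequently $(a)\Rightarrow(b)$ is immediate from this diameter bound, $(b)\Rightarrow(c)$ is again the trivial weakening, and $(c)\Rightarrow(a)$ follows from the same triangle inequality, now producing the explicit rate $\|\Phi_{0,n}^{(t)}(\rho)-\Phi_{0,n}^{(t)}(\sigma)\|\leq 2C\mu^n$, i.e. exponential weak mixing with doubled constant. I expect no obstruction in this part beyond bookkeeping of constants; the only real work in the whole theorem lies in the compactness argument underlying the weak-mixing half.
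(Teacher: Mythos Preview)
Your proposal is correct and follows the same cyclic $(a)\Rightarrow(b)\Rightarrow(c)\Rightarrow(a)$ scheme as the paper, with identical arguments for $(b)\Rightarrow(c)$ and $(c)\Rightarrow(a)$ in both parts, and the same observation that Part~2 requires no uniformization because the exponential bound is already state-independent. The one point of divergence is the $(a)\Rightarrow(b)$ step in Part~1: the paper invokes Dini's theorem on $f_n(\rho,\sigma)=\|\Phi_{0,n}^{(t)}(\rho)-\Phi_{0,n}^{(t)}(\sigma)\|$ over the compact product $\mathfrak{S}(\mathcal{H})\times\mathfrak{S}(\mathcal{H})$ to upgrade pointwise convergence to uniform, whereas you recycle the subsequence-and-contractivity argument from the proof of Theorem~\ref{theorem:ergodicity_hierarchy_implications}. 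Your route is arguably cleaner, since it stays internal to the paper and sidesteps the monotonicity hypothesis of Dini's theorem (which the paper does not verify); both approaches land at $\operatorname{diam}(\mathcal{K}_n^{(t)})\to 0$ and proceed identically from there.
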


\begin{proof}
We prove both characterizations separately, leveraging the compactness of $\mathfrak{S}(\mathcal{H})$ and the properties of quantum channels.

\textbf{1. Weak Mixing Characterization}

\noindent \textbf{(a) $\Rightarrow$ (b):}
Assume $\{\Phi_{1,n}^{(t)}\}_n$ is weakly mixing.  From the equivalence of weak mixing and uniform weak mixing from  Theorem \ref{theorem:weak_ergodicity_hierarchy_implications} one gets the uniform convergence:
\[
\sup_{\rho, \sigma \in \mathfrak{S}(\mathcal{H})} \Big\|\Phi_{1,n}^{(t)}(\rho) - \Phi_{1,n}^{(t)}(\sigma)\Big\|_1 \to 0 \quad \text{as } n \to \infty
\]

For any sequence $\{Z_n\}_n$ with $Z_n \in \mathcal{C}_n^{(t)}$, there exists $\rho_n \in \mathfrak{S}(\mathcal{H})$ such that $Z_n = \Phi_{1,n}^{(t)}(\rho_n)$. For arbitrary $\rho \in \mathfrak{S}(\mathcal{H})$:
\[
\Big\|\Phi_{1,n}^{(t)}(\rho) - Z_n\Big\|_1 = \Big\|\Phi_{1,n}^{(t)}(\rho) - \Phi_{1,n}^{(t)}(\rho_n)\Big\|_1 \leq \sup_{\rho,\sigma} \Big\|\Phi_{1,n}^{(t)}(\rho) - \Phi_{1,n}^{(t)}(\sigma)\Big\|_1 \to 0
\]
This establishes (b).

\noindent \textbf{(b) $\Rightarrow$ (c):}
Immediate since (b) asserts the condition holds for all such sequences.

\noindent \textbf{(c) $\Rightarrow$ (a):}
Given a sequence $\{Z_n\}$ satisfying (c), for any $\rho, \sigma \in \mathfrak{S}(\mathcal{H})$:
\[
\Big\|\Phi_{1,n}^{(t)}(\rho) - \Phi_{1,n}^{(t)}(\sigma)\Big\|_1 \leq \Big\|\Phi_{1,n}^{(t)}(\rho) - Z_n\Big\|_1 + \Big\|\Phi_{1,n}^{(t)}(\sigma) - Z_n\Big\|_1 \to 0
\]
This  convergence implies the weak mixing condition (\ref{equation:mixing}).

\textbf{2. Exponential  Weak Mixing Characterization}

\noindent \textbf{(a) $\Rightarrow$ (b):}
Assume $\{\Phi_{1,n}^{(t)}\}_n$ is  exponentially weakly mixing. For any sequence $\{Z_n\}_n$ with $Z_n \in \mathcal{C}_n^{(t)}$, there exists $\rho_n \in \mathfrak{S}(\mathcal{H})$ such that $Z_n = \Phi_{1,n}^{(t)}(\rho_n)$. By (\ref{equation:exponential_mixing}), there exist $C > 0$, $\mu \in (0,1)$ such that:
\[
\Big\|\Phi_{1,n}^{(t)}(\rho) - Z_n\Big\|_1 = \Big\|\Phi_{1,n}^{(t)}(\rho) - \Phi_{1,n}^{(t)}(\rho_n)\Big\|_1 \leq C\mu^n \quad \forall \rho \in \mathfrak{S}(\mathcal{H})
\]
This holds for any sequence $\{Z_n\}$ since the bound is uniform.

\noindent \textbf{(b) $\Rightarrow$ (c):}
Trivial, as (b) is stronger.

\noindent \textbf{(c) $\Rightarrow$ (a):}
Given $\{Z_n\}$ satisfying (c), for any $\rho, \sigma \in \mathfrak{S}(\mathcal{H})$:
\[
\Big\|\Phi_{1,n}^{(t)}(\rho) - \Phi_{1,n}^{(t)}(\sigma)\Big\|_1 \leq \Big\|\Phi_{1,n}^{(t)}(\rho) - Z_n\Big\|_1 + \Big\|\Phi_{1,n}^{(t)}(\sigma) - Z_n\Big\|_1 \leq 2C\mu^n
\]
This establishes exponential mixing with constant $2C$ and rate $\mu$.
\end{proof}
 \begin{remark}
    Theorem~\ref{thm:mixing_ergodic_sequence} shows that weak mixing implies convergence to a recursively generated sequence $\{Z_n\}$ where
    $$
    Z_n = \Phi_n(Z_{n-1}) \in \Phi_{1,n}^{(t)}(\mathfrak{S}(\mathcal{H}))
    $$
     confirming results in \cite{movassagh2021theory, movassagh2022a}. This defines \emph{trajectory-ergodicity}: convergence occurs along a specific recursive path rather than to a static limit. The exponential case provides explicit convergence rates, while the non-exponential case retains the same recursive path structure without prescribed speed.
\end{remark}
\section{A Markov-Dobrushin Mixing Condition}\label{sect_MDmixing}
  This section outlines fundamental results on the behavior and properties of quantum channels, with a focus on their convergence rates via a quantum Markov-Dobrushin inequality.

For any operator $a \in \mathcal{B}(\mathcal{H})$, the decomposition $a = \frac{1}{2}(a + a^*) + i\frac{1}{2i}(a - a^*)$ yields its real part $\Re(a) = \frac{1}{2}(a + a^*)$ and imaginary part $\Im(a) = \frac{1}{2i}(a - a^*)$, both of which are self-adjoint. Any self-adjoint operator $b$ can be expressed as $b = b_+ - b_-$, where $b_+, b_-$ are positive operators with disjoint supports, and we define $\|b\|_1 = \Tr(b_+) + \Tr(b_-)$. Building on this, we introduce the total variation norm $\|\cdot\|_{TV}$ on $\mathcal{M}_d(\mathbb{C})$, that coincides  with   $\|\cdot\|_1$  on Hermitian operators. This norm is defined as:
\begin{equation}\label{df-q-tot-var-nrm}
    \|a\|_{TV} := \Tr\left(\Re(a)_+ + \Re(a)_-\right) + \Tr\left(\Im(a)_+ + \Im(a)_-\right)
\end{equation}
and has been shown to constitute a norm on $\mathcal{B}(\mathcal{H})$ as a real vector space \cite{AccLuSou22}.

\begin{lemma}\label{Lem_MD}
The Markov--Dobrushin infimum set $ \mathcal{I}^{\Tr}_{\mathrm{MD}}(\Phi)$
defined in \eqref{infMD} is nonempty.
\end{lemma}
\begin{proof}
Let \(X_{\Phi} := \{ \Phi(P) : P \in \mathcal{P}_1(\mathcal{H}) \} \subset \mathcal{B}(\mathcal{H})_+\). Since \(\dim \mathcal{H} = d \ge 1\), the set \(\mathcal{P}_1(\mathcal{H})\) is nonempty, hence \(X_{\Phi} \neq \varnothing\). Define \(L_+(X_{\Phi}) := \{ B \in \mathcal{B}(\mathcal{H})_+ : 0 \le B \le A \text{ for all } A \in X_{\Phi} \}\). Because \(0 \le 0 \le \Phi(P)\) for every rank-one projection \(P\), we have \(0 \in L_+(X_{\Phi})\); thus \(L_+(X_{\Phi})\) is nonempty.

To establish closedness of \(L_+(X_{\Phi})\) with respect to the trace norm \(\|\cdot\|_1\), fix an arbitrary \(A \in X_{\Phi}\) and consider \(C_A := \{ B \in \mathcal{B}(\mathcal{H})_+ : 0 \le B \le A \}\). Let \((B_n)_{n \in \mathbb{N}} \subset C_A\) converge in \(\|\cdot\|_1\) to some \(B \in \mathcal{B}(\mathcal{H})\). Since \(\mathcal{B}(\mathcal{H})\) is finite-dimensional, \(\|\cdot\|_1\) is equivalent to the operator norm \(\|\cdot\|\); hence \(B_n \to B\) in operator norm. Each \(B_n\) is self-adjoint, and the set of self-adjoint operators is closed in \(\|\cdot\|\), so \(B\) is self-adjoint. For any unit vector \(v \in \mathcal{H}\), we have \(\langle B_n v, v \rangle \ge 0\) for all \(n\) and \(|\langle (B_n - B)v, v \rangle| \le \|B_n - B\| \to 0\); thus \(\langle B v, v \rangle = \lim_n \langle B_n v, v \rangle \ge 0\), proving \(B \ge 0\). Applying the same argument to \(A - B_n \ge 0\) yields \(A - B \ge 0\), i.e., \(B \le A\). Therefore \(B \in C_A\), so each \(C_A\) is closed in \(\|\cdot\|_1\). Since \(L_+(X_{\Phi}) = \bigcap_{A \in X_{\Phi}} C_A\) and an arbitrary intersection of closed sets is closed, \(L_+(X_{\Phi})\) is closed in \(\|\cdot\|_1\).

For boundedness, choose any \(A_0 \in X_{\Phi}\). For every \(B \in L_+(X_{\Phi})\) we have \(0 \le B \le A_0\), so the eigenvalues \(\lambda_1(B), \dots, \lambda_d(B)\) satisfy \(0 \le \lambda_j(B) \le \lambda_{\max}(A_0)\). Consequently, \[\|B\|_1 = \operatorname{Tr}(B) = \sum_{j=1}^d \lambda_j(B) \le d \cdot \lambda_{\max}(A_0)\] showing that \(L_+(X_{\Phi})\) is bounded in \(\|\cdot\|_1\).

Since \((\mathcal{B}(\mathcal{H}), \|\cdot\|_1)\) is finite-dimensional, closedness and boundedness imply compactness of \(L_+(X_{\Phi})\). The map \(f : L_+(X_{\Phi}) \to \mathbb{R}\) defined by \(f(B) := \operatorname{Tr}(B)\) is linear and hence continuous with respect to \(\|\cdot\|_1\). By compactness, \(f\) attains its maximum on \(L_+(X_{\Phi})\); thus there exists \(\kappa_\Phi \in L_+(X_{\Phi})\) such that \(\operatorname{Tr}(\kappa_\Phi) = \max_{B \in L_+(X_{\Phi})} \operatorname{Tr}(B)\). By construction, \(\kappa_\Phi\) satisfies \(0 \le \kappa_\Phi \le \Phi(P)\) for all \(P \in \mathcal{P}_1(\mathcal{H})\) and has maximal trace among all such operators, which are precisely the defining properties of \(\mathcal{I}^{\Tr}_{\mathrm{MD}}(\Phi)\) given in \eqref{infMD}. Hence \(\kappa_\Phi \in \mathcal{I}^{\Tr}_{\mathrm{MD}}(\Phi)\), proving that the Markov--Dobrushin infimum set is nonempty.
\end{proof}
\begin{remark}
When the operator infimum \(\bigwedge_{P \in \mathcal{P}_1(\mathcal{H})} \Phi(P)\) exists in the operator order, the Markov--Dobrushin infimum set reduces to a singleton containing precisely this infimum. In that case, the unique element \(\kappa_\Phi = \bigwedge_{P \in \mathcal{P}_1(\mathcal{H})} \Phi(P)\) coincides with the Markov--Dobrushin constant introduced in \cite{AccLuSou22}, which characterizes the mixing behavior of homogeneous quantum dynamics \cite{SB25}.
\end{remark}

\begin{theorem}\label{thm_main}
    Let $\Phi: \mathcal{B}(\mathcal{H}) \to \mathcal{B}(\mathcal{H})$ be a quantum channel. Then, for any pair of states $\rho, \sigma \in \mathfrak{S}(\mathcal{H})$, the following inequality holds:
    \begin{equation}\label{q-MD-ineq}
        \|\Phi(\rho) - \Phi(\sigma)\|_{TV} \leq \left(1 - \Tr(\kappa_{\Phi})\right) \|\rho - \sigma\|_{TV}
    \end{equation}
    where   \(\kappa_{\Phi} \in  \mathcal{I}^{\Tr}_{\mathrm{MD}}(\Phi) \)
    is the quantum Markov-Dobrushin constant of $\Phi$.
\end{theorem}
\begin{proof}
  See \cite{AccLuSou22}.
\end{proof}

\noindent Recall that an \emph{accumulation point} (or \emph{limit point}) of a sequence \(\{x_n\}_{n=1}^\infty\) in a metric space \((X, d)\) is a point \(x \in X\) such that every neighborhood of \(x\) contains infinitely many terms of the sequence. Formally:
\[
\forall \varepsilon > 0, \quad B(x, \varepsilon) \cap \{x_n : n \in \mathbb{N}\} \text{ is infinite}
\]
where \(B(x, \varepsilon) = \{y \in X : d(x, y) < \varepsilon\}\).

Equivalently, \(x\) is an accumulation point if there exists a subsequence \(\{x_{\tau_n}\}_{n=1}^\infty \subset \{x_n\}_{n=1}^\infty\) such that:
\[
\lim_{n \to \infty} x_{\tau_n} = x
\]
\noindent Theorem \ref{thm-main_MD_mixing} establishes  weak mixing for non-homogeneous quantum dynamics under a compactness condition on the quantum Markov-Dobrushin constants. The key requirement is the existence of a non-zero accumulation point, which enables uniform control over state convergence in both forward and backward time directions.

\begin{proof}[\textbf{\large Proof of Theorem \ref{thm-main_MD_mixing}}]
We begin by recalling the fundamental contraction property of the Markov--Dobrushin coefficient. For any quantum channel $\Phi$, the associated coefficient $\kappa_{\Phi} \in \mathcal{I}^{\Tr}_{\mathrm{MD}}(\Phi)$ satisfies the inequality
\[
\sup_{\rho,\sigma \in \mathfrak{S}(\mathcal{H})} \big\|\Phi(\rho) - \Phi(\sigma)\big\|_{\mathrm{TV}} \le \big(1 - \Tr(\kappa_{\Phi})\big) \sup_{\rho,\sigma \in \mathfrak{S}(\mathcal{H})} \|\rho - \sigma\|_{\mathrm{TV}}
\]
Since the trace norm on the compact convex set $\mathfrak{S}(\mathcal{H})$ is bounded by $2$, this yields the pointwise contraction estimate
\[
\big\|\Phi(\rho) - \Phi(\sigma)\big\|_{\mathrm{TV}} \le \big(1 - \Tr(\kappa_{\Phi})\big) \|\rho - \sigma\|_{\mathrm{TV}} \qquad \forall \rho,\sigma \in \mathfrak{S}(\mathcal{H})
\]
Crucially, this contraction is independent of the order in which channels are composed. Consequently, for any finite composition of channels, whether taken in the forward direction $\Phi^{(f)}_{m+1,n} = \Phi_n \circ \cdots \circ \Phi_{m+1}$ or the backward direction $\Phi^{(b)}_{m+1,n} = \Phi_{m+1} \circ \cdots \circ \Phi_n$, we obtain by iterating the contraction inequality the uniform estimate
\begin{equation}\label{eq:prod-estimate}
\big\|\Phi^{(t)}_{m+1,n}(\rho) - \Phi^{(t)}_{m+1,n}(\sigma)\big\|_{\mathrm{TV}}
\le \left( \prod_{j=m+1}^{n} \big(1 - \Tr(\kappa_{\Phi_j})\big) \right) \|\rho - \sigma\|_{\mathrm{TV}}
\end{equation}
valid for both $t \in \{f,b\}$ and any indices $m < n$.

Now, the hypothesis that $\{\kappa_{\Phi_n}\}$ admits a non-zero accumulation point ensures the existence of a number $r > 0$ and an infinite set of indices $\{n_k\}_{k \in \mathbb{N}}$ such that $\Tr(\kappa_{\Phi_{n_k}}) \ge r$ for every $k$. For each integer $n$, we define
\[
N(n) := \#\{\,k \in \mathbb{N} \mid n_k \le n\,\}
\]
i.e., the count of ``good'' channels—those whose Markov--Dobrushin coefficient has trace at least $r$—appearing among the first $n$ positions. The function $n \mapsto N(n)$ is non-decreasing, integer-valued, and tends to infinity as $n \to \infty$ because there are infinitely many such indices. Moreover, $N(n)$ is unbounded and strictly increasing along the subsequence of good channels; by passing to a suitable subsequence if necessary, we may assume without loss of generality that $N(n)$ itself is strictly increasing, thereby satisfying the requirement of an increasing unbounded sequence.

Turning to the product appearing in \eqref{eq:prod-estimate}, we observe that each factor satisfies $1 - \Tr(\kappa_{\Phi_j}) \le 1$, while for each good index $j = n_k$ we have the improved bound $1 - \Tr(\kappa_{\Phi_j}) \le 1 - r$. Hence, for any $m < n$, the product over the interval $(m, n]$ can be bounded above by the product over only the good indices within that interval, yielding
\[
\prod_{j=m+1}^{n} \big(1 - \Tr(\kappa_{\Phi_j})\big) \le (1 - r)^{\,N(n) - N(m)}
\]
Setting $\mu := 1 - r$, we note that $0 < \mu < 1$ by construction. Substituting this estimate into \eqref{eq:prod-estimate} and employing the trivial bound $\|\rho - \sigma\|_{\mathrm{TV}} \le 2$ for any two density operators, we obtain
\[
\big\|\Phi^{(t)}_{m+1,n}(\rho) - \Phi^{(t)}_{m+1,n}(\sigma)\big\|_{\mathrm{TV}} \le 2 \mu^{\,N(n) - N(m)} \qquad \forall \rho,\sigma \in \mathfrak{S}(\mathcal{H})
\]
Since the right-hand side is independent of the particular choice of $\rho$ and $\sigma$, we may take the supremum over all pairs of states, which precisely yields the desired inequality \eqref{eq_exp_conv}. This establishes the existence of $\mu \in (0,1)$ and the increasing unbounded sequence $\{N(n)\}_{n \in \mathbb{N}}$ satisfying the exponential contraction bound for both the forward and backward evolutions.

To see that this implies weak mixing, fix an arbitrary $m \in \mathbb{N}$ and consider the limit as $n \to \infty$. Because $\{N(n)\}$ is unbounded, we have $N(n) \to \infty$ as $n \to \infty$, and consequently $\mu^{N(n) - N(m)} \to 0$. Hence,
\[
\lim_{n \to \infty} \sup_{\rho,\sigma \in \mathfrak{S}(\mathcal{H})} \Big\|\Phi^{(t)}_{m+1,n}(\rho) - \Phi^{(t)}_{m+1,n}(\sigma)\Big\|_{\mathrm{TV}} = 0
\]
This is precisely the definition of weak mixing for the family $\{\Phi^{(t)}_{m+1,n}\}_{n > m}$ as stated in the theorem: for each fixed $m$, the distance between states evolved from times $m+1$ to $n$ tends to zero uniformly as $n \to \infty$, regardless of the initial states $\rho$ and $\sigma$. Moreover, the inequality \eqref{eq_exp_conv} provides a quantitative rate: the distance decays at least exponentially fast in $N(n) - N(m)$, i.e., in the number of good channels encountered between times $m$ and $n$. This completes the proof.
\end{proof}

 \begin{remark}
Theorem~\ref{thm-main_MD_mixing} establishes a unified framework for understanding mixing phenomena in inhomogeneous quantum processes, significantly generalizing important previous works on ergodic quantum processes. Our results reveal that what was previously characterized as ergodicity in the forward dynamics corresponds precisely to what we identify as exponential mixing. The convergence rate in (\ref{eq_exp_conv}), expressed as
\[
e^{-N(n)\ln(1/\mu)}
\]
encompasses a spectrum of dynamical behaviors - from exponential to polynomial convergence - depending on the density $N(n)$ of effective mixing channels. For instance, when $N(n) \sim \ln(n)$, we obtain polynomial convergence of order $n^{-\alpha}$ with $\alpha = \ln(1/\mu)$. This raises the Markov-Dobrushin mixing condition recovers a strictly larger class of mixing inhomogeneous processes that the class of exponentially mixing quantum processes  through the positivity assumptions  suggested in \cite{movassagh2022a}.
\end{remark}

\begin{corollary}[Exponential Mixing in Periodic Quantum  Processes]\label{cor:periodic}
    Let $\{\Phi_n\}_{n \in \mathbb{N}}$ be a sequence of quantum channels with period $p \in \mathbb{N}^*$, i.e., $\Phi_{n+p} = \Phi_n$ for all $n \in \mathbb{N}$. Suppose there exists some index $1 \leq j_* \leq p$ such that the Markov-Dobrushin constant of $\Phi_{j_k}$ satisfies $\kappa_{\Phi_{j_k}} > 0$.
    Then, the  quantum process generated by $\{\Phi_n\}$ exhibits  exponential weak mixing in both forward ($t = f$) and backward ($t = b$) time directions. Specifically, there exists a constant $C = C(p, \kappa_{\Phi_{j_k}}) > 0$ such that for all $n \in \mathbb{N}$ and any pair of quantum states $\rho, \sigma \in \mathfrak{S}(\mathcal{H})$,
    \begin{equation}\label{eq:periodic_mixing}
        \left\| \Phi^{(t)}_{1,n}(\rho) - \Phi^{(t)}_{1,n}(\sigma) \right\|_{\mathrm{TV}} \leq C \mu^n
    \end{equation}
    where the contraction rate $\mu$ is given explicitly by
    $  \mu = \left(1 - \kappa_{\Phi_{j_k}}\right)^{\frac{1}{p}}$
\end{corollary}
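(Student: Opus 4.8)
The plan is to specialize the product contraction estimate from the proof of Theorem~\ref{thm-main_MD_mixing} to the periodic setting, where the single favorable channel $\Phi_{j_*}$ recurs with exact frequency $1/p$. Concretely, I would begin from the iterated Markov--Dobrushin bound (\ref{eq_Phinsigmarholesigmarho}),
\[
\|\Phi^{(t)}_{0,n}(\rho) - \Phi^{(t)}_{0,n}(\sigma)\|_{TV} \le \Big(\prod_{j=1}^n (1 - \Tr(\kappa_{\Phi_j}))\Big)\,\|\rho - \sigma\|_{TV},
\]
valid for both $t \in \{f,b\}$ since the product of scalar contraction factors is insensitive to the order of composition. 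Each factor lies in $(0,1]$ because $\kappa_{\Phi_j} \in \mathcal{B}(\mathcal{H})_+$ with $\Tr(\kappa_{\Phi_j}) \in [0,1]$, so every factor may be bounded above by $1$; only the recurrences of the distinguished index $j_*$ will be retained.

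Next I would carry out the counting step. By periodicity, $\kappa_{\Phi_j} = \kappa_{\Phi_{j_*}}$ whenever $j \equiv j_* \pmod p$, and among any $p$ consecutive integers exactly one index is congruent to $j_*$ modulo $p$. Hence the number $N_*(n)$ of indices $j \in \{1,\dots,n\}$ with $j \equiv j_* \pmod p$ satisfies $N_*(n) \ge \lfloor n/p \rfloor \ge n/p - 1$. Writing $q := 1 - \Tr(\kappa_{\Phi_{j_*}}) \in (0,1)$, where positivity of $q$ is exactly the hypothesis $\kappa_{\Phi_{j_*}} > 0$, and discarding the remaining factors, I obtain
\[
\prod_{j=1}^n (1 - \Tr(\kappa_{\Phi_j})) \le q^{\,N_*(n)} \le q^{\,n/p - 1} = q^{-1}\big(q^{1/p}\big)^{n}.
\]

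Finally, combining this with the elementary bound $\|\rho - \sigma\|_{TV} \le 2$ and setting $\mu := q^{1/p} \in (0,1)$ together with $C := 2/q = 2(1 - \Tr(\kappa_{\Phi_{j_*}}))^{-1}$ yields
\[
\|\Phi^{(t)}_{0,n}(\rho) - \Phi^{(t)}_{0,n}(\sigma)\|_{TV} \le C\,\mu^n
\]
uniformly in $\rho,\sigma$ and simultaneously for $t \in \{f,b\}$, which is precisely (\ref{eq:periodic_mixing}). I expect no serious obstacle: the corollary is the exact-recurrence specialization of Theorem~\ref{thm-main_MD_mixing}, in which the accumulation point of $\{\kappa_{\Phi_n}\}$ is genuinely attained along the arithmetic progression $j_* + p\mathbb{N}$, forcing the counting function to grow linearly, $N(n) = N_*(n) = \mathcal{O}(n)$, and thereby upgrading the generic $\mu^{N(n)}$ decay to a true exponential rate $\mu^n$. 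The only points requiring minor care are the bookkeeping of the additive offset in $\lfloor n/p\rfloor \ge n/p - 1$, which is absorbed into the constant $C$, and reading the statement's $(1-\kappa_{\Phi_{j_*}})^{1/p}$ as $(1-\Tr(\kappa_{\Phi_{j_*}}))^{1/p}$, consistent with the scalar contraction coefficient $\Tr(\kappa_\Phi)$ used throughout.
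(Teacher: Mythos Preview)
Your proposal is correct and follows essentially the same route as the paper: both specialize the iterated Markov--Dobrushin product bound (\ref{eq_Phinsigmarholesigmarho}), retain only the factors at indices congruent to $j_*$ modulo $p$, and absorb the floor offset into the constant $C$ while extracting the rate $\mu=(1-\Tr(\kappa_{\Phi_{j_*}}))^{1/p}$. One small slip: it is the hypothesis $\kappa_{\Phi_{j_*}}>0$ that gives $q<1$ (not $q>0$, which follows instead from $\Tr(\kappa_\Phi)\le 1$ for any channel), but this does not affect the argument.
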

\begin{proof}
The $p$-periodicity of the channel sequence allows us to analyze the mixing properties through cycle decomposition. For any $n \in \mathbb{N}$, let $n = qp + s$ where $q = \lfloor n/p \rfloor$ counts complete periods and $0 \leq s < p$ counts remaining operations.

Let $j_{k} \in [1,p]$  such that  $\kappa_{\Phi_{j_k}}>0$ then  each period contains at least a channel whose contraction coefficient is equal to $\kappa_{\Phi_{j_k}}$. The cumulative contraction satisfies:

\[
\prod_{j=1}^n (1 - \Tr(\kappa_{\Phi_j}))
= \underbrace{\left(\prod_{j=1}^p (1 - \Tr(\kappa_{\Phi_j}))\right)^q}_{\text{periodic contraction}}
\cdot \underbrace{\prod_{\ell=1}^s (1 - \Tr(\kappa_{\Phi_\ell}))}_{\text{residual terms}}
\leq (1 - \Tr(\kappa_{\Phi_{j_k}}))^q
\]
 One has
\[
(1- \Tr(\kappa_{\Phi_{j_k}}))^q  = (1- \Tr(\kappa_{\Phi_{j_k}}))^{\frac{n-s}{p}} = \frac{1}{2}C \mu^{n}
\]
where  $C= 2(1- \Tr(\kappa_{\Phi_{j_k}}))^{\frac{-s}{p}}$ and $\mu = (1- \Tr(\kappa_{\Phi_{j_k}}))^{\frac{1}{p}}$. It follows  that:
 $$
 \| \Phi_{1,n}^{(t)}(\rho) - \Phi_{1,n}^{(t)}(\sigma)\|_{TV} \le \prod_{j=1}^n (1 - \Tr(\kappa_{\Phi_j})) \| \rho - \sigma\|_{TV}\le  C\mu^{n}
  $$
This leads to the exponential weak mixing and finishes the proof.
\end{proof}

\section{Mixing Properties for Inhomogeneous Matrix Product States} \label{sect_AppMPS}
\label{sec:mps_applications}
Over the past few years, MPS have increasingly been studied through an algebraic perspective of quantum processes \cite{SB25,SA26}, providing a natural setting for investigating ergodic and mixing behaviors—especially beyond the translation-invariant case~\cite{ perez2007matrix}. This dynamical viewpoint interprets the local tensors of an MPS as quantum channels, enabling a refined analysis of their long-term properties using tools from quantum ergodic theory. Notably, the works~\cite{albert2019asymptotics, SB25} demonstrate that such a formulation is particularly effective in describing exponential mixing and in capturing the thermodynamic limit behavior of MPS.

Let $\mathcal{H}$ be a finite-dimensional Hilbert space of dimension $d$, called the \emph{bond space}, with orthonormal basis $\{e_k\}_{k=1}^d$. Let $\mathcal{K}$ be a finite-dimensional Hilbert space of dimension $d_{\mathcal{K}}$, called the \emph{physical space}, with orthonormal basis $\{|i\rangle\}_{i=1}^{d_{\mathcal{K}}}$. The algebra of bounded linear operators on $\mathcal{H}$ is denoted $\mathcal{B}(\mathcal{H})$, and analogously $\mathcal{B}(\mathcal{K})$ for the physical space.

For each site $n\in\mathbb{N}$, we are given a collection of operators $\{K_i^{[n]}\}_{i=1}^{d_{\mathcal{K}}}\subset\mathcal{B}(\mathcal{H})$, called \emph{site-dependent tensors}, which satisfy the left-canonical (or gauge) condition
\begin{equation}\label{eq:left_canonical}
\sum_{i=1}^{d_{\mathcal{K}}} (K_i^{[n]})^\dagger K_i^{[n]} = \mathbb{I}, \qquad \forall n\in\mathbb{N}
\end{equation}
These tensors define a family of quantum channels $\Phi_n:\mathcal{B}(\mathcal{H})\to\mathcal{B}(\mathcal{H})$ in the Schrödinger picture by
\begin{equation*}\label{eq:channel}
\Phi_n(\rho) := \sum_{i=1}^{d_{\mathcal{K}}} K_i^{[n]} \rho (K_i^{[n]})^\dagger, \qquad \rho\in\mathcal{B}(\mathcal{H})
\end{equation*}
The condition \eqref{eq:left_canonical} ensures that each $\Phi_n$ is trace-preserving and unital, hence a quantum channel. The backward composition of channels from site $m+1$ to $n$ is defined as
\begin{equation*}\label{eq:backward_composition}
\Phi_{m+1,n}^{(b)} := \Phi_{m+1} \circ \Phi_{m+2} \circ \cdots \circ \Phi_n
\end{equation*}
with the convention that $\Phi_{n+1,n}^{(b)} = \mathrm{id}_{\mathcal{B}(\mathcal{H})}$. Explicitly,
\[
\Phi_{m+1,n}^{(b)}(M) = \sum_{k_{m+1},\ldots,k_n}
K_{k_{m+1}}^{[m+1]} \cdots K_{k_n}^{[n]} \,
M \,
(K_{k_n}^{[n]})^\dagger \cdots (K_{k_{m+1}}^{[m+1]})^\dagger
\]
For a superoperator $\Psi : \mathcal{B}(\mathcal{H}) \to \mathcal{B}(\mathcal{H})$, we define its \emph{superoperator trace} (or trace over the operator space) by
\begin{equation}\label{eq:superoperator_trace}
\operatorname{Tr}_{\text{sup}}(\Psi) := \sum_{k,l=1}^d e_k^* \, \Psi(e_k e_l^*) \, e_l
\end{equation}
 This quantity is independent of the chosen basis.\\
 For a finite chain of length $n$, the local observable algebra is given by the tensor product
\[
\mathfrak{A}_{[1,n]} := \bigotimes_{k=1}^n \mathcal{B}(\mathcal{K}) \cong \mathcal{B}(\mathcal{C}^{\otimes n})
\]
These algebras embed into larger systems via the inclusion maps
\[
\iota_n : \mathfrak{A}_{[1,n]} \hookrightarrow \mathfrak{A}_{[1,n+1]}, \qquad X \mapsto X \otimes \mathbb{I}_{\mathcal{K}}
\]
and the inductive limit yields the quasi-local algebra \cite{BR}
\[
\mathfrak{A}_{\text{loc}} := \varinjlim \mathfrak{A}_{[1,n]} = \bigcup_{n\ge 1} \mathfrak{A}_{[1,n]}
\]
whose norm completion is the C*-algebra of the semi-infinite chain:
\[
\mathfrak{A}_{\mathbb{N}} := \overline{\bigotimes_{\mathbb{N}} \mathcal{B}(\mathcal{K})}^{\|\cdot\|}
\]
The inhomogeneous matrix product state (MPS) on $n$ sites is the pure state vector
\begin{equation}\label{eq:MPS_vector}
|\Psi_n\rangle := \sum_{i_1,\ldots,i_n=1}^{d_{\mathcal{K}}} \operatorname{Tr}\bigl(K_{i_1}^{[1]} K_{i_2}^{[2]} \cdots K_{i_n}^{[n]}\bigr) \; |i_1\cdots i_n\rangle \in \mathcal{C}^{\otimes n}
\end{equation}
The corresponding state functional $\varphi_n : \mathfrak{A}_{[1,n]} \to \mathbb{C}$ is defined by
\begin{equation}\label{eq:state_functional}
\varphi_n(X) :=\frac{\langle \Psi_n | X | \Psi_n \rangle}{\langle\Psi_n|\Psi_n\rangle}
\end{equation}
where the normalization of the MPS is given by
\begin{equation}\label{eq:norm_calculation}
\langle\Psi_n|\Psi_n\rangle = \|\Psi_n\|^2 = \sum_{i_1,\ldots,i_n} \bigl|\operatorname{Tr}(K_{i_1}^{[1]}\cdots K_{i_n}^{[n]})\bigr|^2 = \operatorname{Tr}\bigl(\Phi_{1,n}^{(b)})
\end{equation}
where the last equality follows from the left-canonical condition \eqref{eq:left_canonical} and the trace-preserving property of each $\Phi_k$.

For any $m\in\mathbb{N}$ and any local observable $X\in\mathfrak{A}_{[1,m]}$, we associate to $X$ a linear map
\[
\widehat{X} : \mathcal{B}(\mathcal{H}) \to \mathcal{B}(\mathcal{H})
\]
evaluated on $ M\in\mathcal{B}(\mathcal{H})$ by
\begin{equation}\label{eq:observable_map}
\widehat{X}(M)
  := \sum_{\substack{i_1,\ldots,i_m\\ j_1,\ldots,j_m}}
     \langle i_1,\ldots,i_m|X|j_1,\ldots,j_m\rangle\;
     K_{j_1}^{[1]} \cdots K_{j_m}^{[m]}\, M \, (K_{i_1}^{[1]})^\dagger \cdots (K_{i_m}^{[m]})^\dagger
\end{equation}
The placement of the indices in this definition is deliberate. The Kraus operators carrying the bra indices $j_1,\ldots,j_m$ are placed to the \emph{left} of $M$, while those carrying the ket indices $i_1,\ldots,i_m$ are placed to the \emph{right} of $M$, conjugated. Thus the ordering in \eqref{eq:observable_map} is reversed with respect to the natural order of the matrix elements
\(\langle i_1,\ldots,i_m|X|j_1,\ldots,j_m\rangle\).

This choice is not cosmetic: it is precisely what makes $\widehat{X}$ compatible with the channel structure of the MPS. If one takes $X=\mathbb{I}$, the map $\widehat{\mathbb{I}}$ coincides with the \emph{backward} composition of channels
\[
\Phi_{1,m}^{(b)} := \Phi_1\circ\cdots\circ\Phi_m,\qquad
\Phi_{1,m}^{(b)}(M)
  = \sum_{i_1,\ldots,i_m}
      K_{i_1}^{[1]}\cdots K_{i_m}^{[m]} M (K_{i_m}^{[m]})^\dagger\cdots(K_{i_1}^{[1]})^\dagger
\]
which is the standard MPS transfer operator built from the tensors $K_i^{[n]}$.

In the general case of a nontrivial observable $X$, the flipped index placement in \eqref{eq:observable_map} is dictated by the contraction pattern of the MPS when computing local expectation values. As will be shown in the proof of Theorem~\ref{thm_main_inhom_MPS}, for any $n\ge m$ the expectation $\varphi_n(X)$ can be rewritten in the form
\[
\varphi_n(X)
  = \frac{\operatorname{Tr}_{\text{sup}}\bigl(\widehat{X}\circ \Phi_{m+1,n}^{(b)}\bigr)}{\operatorname{Tr}_{\text{sup}}\bigl( \Phi_{1,n}^{(b)}\bigr)}
\]
where
\[
\Phi_{m+1,n}^{(b)} := \Phi_{m+1}\circ\cdots\circ\Phi_n
\]
is the \emph{backward} composition of the channels associated with the MPS tensors. The point is that the MPS contraction naturally produces this backward transfer operator, and the particular ordering in \eqref{eq:observable_map} ensures that $\widehat{X}$ couples to it in a clean, channel-theoretic way.

This is exactly the regime in which   Theorem \ref{theorem:characterizations_ergodicity_mixing} and Theorem \ref{thm-main_MD_mixing} will be applied: under the Markov--Dobrushin assumptions, the backward composition $\{\Phi_{m+1,n}^{(b)}\}_{n\ge m+1}$ is time-uniformly mixing and converges to a completely depolarizing channel. 
With these notations established, we now present the proof of Theorem \ref{thm_main_inhom_MPS}.

 \begin{proof}
For any $X\in\mathfrak{A}_{[0,m]}$ and integer $n\ge m$, one computes the expectation value
\begin{align*}
  \langle \Psi_n | X\otimes \mathbb{I}_{[m+1,n]} | \Psi_n \rangle
  &=
  \sum_{\substack{i_1,\ldots,i_n\\ j_1,\ldots,j_n}}
    \overline{\Tr\!\bigl(K^{[1]}_{i_1} K^{[2]}_{i_2} \cdots K^{[n]}_{i_n}\bigr)}
    \,\Tr\!\bigl(K^{[1]}_{j_1} K^{[2]}_{j_2} \cdots K^{[n]}_{j_n}\bigr)\,\\
   &\qquad\qquad \times \langle i_1,\ldots,i_n|X\otimes \mathbb{I}_{[m+1,n]}|j_1,\ldots,j_n\rangle
\end{align*}

Using the Kronecker deltas to perform the sums over $i_{m+1},\ldots,i_n$ and $j_{m+1},\ldots,j_n$ and regrouping the Kraus operators, one obtains
\[
 \langle \Psi_n | X\otimes \mathbb{I}_{[m+1,n]} | \Psi_n \rangle
 =\sum_{\substack{i_1,\dots,i_n\\ j_1,\dots,j_n}}
\overline{\operatorname{Tr}\bigl(K_{i_1}^{[0]}\cdots K_{i_n}^{[n]}\bigr)}\,
\operatorname{Tr}\bigl(K_{j_1}^{[0]}\cdots K_{j_n}^{[n]}\bigr)\,
\langle i_1,\dots,i_m|X|j_1,\dots,j_m\rangle\,
\prod_{t=m+1}^n\delta_{i_tj_t}
\]
Summing over $i_{m+1},\dots,i_n$ using the Kronecker deltas yields
\begin{align*}
\langle \Psi_n | X\otimes \mathbb{I}_{[m+1,n]} | \Psi_n\rangle &= \sum_{\substack{i_1,\dots,i_m\\ j_1,\dots,j_m}}
\langle i_1,\dots,i_m|X|j_1,\dots,j_m\rangle\,\\
&\qquad\qquad\times\sum_{\ell_{m+1},\dots,\ell_n}
\overline{\operatorname{Tr}\bigl(A\,K_{\ell_{m+1}}^{[m+1]}\cdots K_{\ell_n}^{[n]}\bigr)}\,
\operatorname{Tr}\bigl(B\,K_{\ell_{m+1}}^{[m+1]}\cdots K_{\ell_n}^{[n]}\bigr)
\end{align*}
where $A:=K_{i_1}^{[0]}\cdots K_{i_m}^{[m]}$ and $B:=K_{j_1}^{[0]}\cdots K_{j_m}^{[m]}$.

Using the basis $\{e_ke_l^*\}_{k,l=1}^d$ and the cyclic property of the trace, one verifies the identity
\[
\sum_{\ell_{m+1},\dots,\ell_n}
\operatorname{Tr}(B\,K)\;\overline{\operatorname{Tr}(A\,K)}
= \sum_{k,l=1}^d e_k^*\bigl((B\cdot A^\dagger)\circ\Phi_{m+1,n}^{(b)}(e_ke_l^*)\bigr)e_l
\]
where $(B\cdot A^\dagger)(M):=B M A^\dagger$. Summing over the indices $i_1,\dots,i_m,j_1,\dots,j_m$ with the coefficients $\langle i_1,\dots,i_m|X|j_1,\dots,j_m\rangle$ gives
\[
\langle \Psi_n | X\otimes \mathbb{I}_{[m+1,n]} | \Psi_n\rangle
\overset{\eqref{eq:observable_map}}{=}
\sum_{k,l=1}^d e_k^*\bigl(\widehat{X}\circ\Phi_{m+1,n}^{(b)}(e_ke_l^*)\bigr)e_l
\tag{$\star$}
\]

By Theorem \ref{thm-main_MD_mixing} and the hypothesis that the Markov--Dobrushin coefficients $\{\kappa_{\Phi_n}\}$ admit a non-zero accumulation point, the sequence $\{\Phi_{m+1,n}^{(b)}\}_{n\ge m}$ converges in the superoperator norm to the completely depolarizing channel:
\[
\lim_{n\to\infty}\Phi_{m+1,n}^{(b)} = \Omega_{\rho_{m+1}^{(b)}}
\]
where $\rho_{m+1}^{(b)}\in\mathfrak{S}(\mathcal{H})$ is a density matrix and $\Omega_{\rho}(M):=\operatorname{Tr}(M)\,\rho$ for all $M\in\mathcal{B}(\mathcal{H})$.
The convergence is uniform on bounded subsets, and each $\Phi_{m}^{(b)}$ is linear and continuous.

From the composition law $\Phi_{m,n}^{(b)} = \Phi_{m}\circ\Phi_{m+1,n}^{(b)}$, taking limits yields
\[
\Omega_{\rho_m^{(b)}} = \Phi_{m}\circ\Omega_{\rho_{m+1}^{(b)}}
\]
Evaluating both sides on an arbitrary $M\in\mathcal{B}(\mathcal{H})$ gives
\[
\operatorname{Tr}(M)\rho_m^{(b)} = \operatorname{Tr}(M)\,\Phi_{m}(\rho_{m+1}^{(b)})
\]
hence for all $m\in\mathbb{N}$,
\[
\rho_m^{(b)} = \Phi_{m}(\rho_{m+1}^{(b)})
\tag{$\diamond$}
\]
Since $\widehat{X}$ is a finite linear combination of maps of the form $B\cdot A^\dagger$, it is continuous with respect to the superoperator norm. Therefore, from \;$(\star)$\; and the convergence $\Phi_{m+1,n}^{(b)}\to\Omega_{\rho_{m+1}^{(b)}}$,
\[
\lim_{n\to\infty}\varphi_n(X)
= \sum_{k,l=1}^d e_k^*\bigl(\widehat{X}\circ\Omega_{\rho_{m+1}^{(b)}}(e_ke_l^*)\bigr)e_l
\]
Now compute the composition:
\[
\widehat{X}\circ\Omega_{\rho_{m+1}^{(b)}}(e_ke_l^*)
= \widehat{X}\bigl(\operatorname{Tr}(e_ke_l^*)\,\rho_{m+1}^{(b)}\bigr)
= \delta_{kl}\,\widehat{X}(\rho_{m+1}^{(b)})
\]
Taking into account \eqref{eq:state_functional} and \eqref{eq:norm_calculation}, one obtains
\[
\lim_{n\to\infty}\varphi_n(X)
= \frac{\sum_{k=1}^d e_k^*\,\widehat{X}(\rho_{m+1}^{(b)})\,e_k}
       {\sum_{k=1}^d e_k^*\,\Phi_{1,m}^{(b)}(\rho_{m+1}^{(b)})\,e_k}
\overset{\eqref{eq:superoperator_trace}}{=} \frac{\operatorname{Tr}\bigl(\widehat{X}(\rho_{m+1}^{(b)})\bigr)}
       {\operatorname{Tr}\bigl(\Phi_{1,m}^{(b)}(\rho_{m+1}^{(b)})\bigr)}
= \operatorname{Tr}\bigl(\widehat{X}(\rho_{m+1}^{(b)})\bigr)
\tag{$\heartsuit$}
\]

Inserting the definition of $\widehat{X}$ according to \eqref{eq:observable_map} into \;$(\heartsuit)$\; yields the explicit formula \eqref{eq_phiinfty_short}.

The right-hand side depends on $X$ only through its support $[0,m]$, and the sequence $\{\rho_m^{(b)}\}$ satisfies \;$(\diamond)$. Moreover, the convergence in \;$(\heartsuit)$\; holds for every $X\in\mathfrak{A}_{[0,m]}$ and every $m\in\mathbb{N}$, establishing weak-$\ast$ convergence of $\varphi_n$ to a state $\varphi_\infty$ on $\mathfrak{A}_{\mathbb{N}}$. The uniqueness of $\varphi_\infty$ follows from the uniqueness of the limit $\Omega_{\rho_{m+1}^{(b)}}$ guaranteed by the mixing property (Theorem \ref{thm-main_MD_mixing}). Thus the sequence $\{\varphi_n\}$ converges in the weak-$\ast$ topology to a unique state $\varphi_\infty$ on $\mathfrak{A}_{\mathbb{N}}$, and for every local observable $X\in\mathfrak{A}_{[0,m]}$ the expectation $\varphi_\infty(X)$ is given by \eqref{eq_phiinfty_short} with boundary states $\{\rho_m^{(b)}\}\subset\mathfrak{S}(\mathcal{H})$ satisfying $\rho_m^{(b)}=\Phi_{m+1}(\rho_{m+1}^{(b)})$ for all $m\in\mathbb{N}$.
\end{proof}

\begin{remark}
Theorem~\ref{thm_main_inhom_MPS} provides a clean, dynamical explanation of mixing   for inhomogeneous matrix product states purely in terms of the quantum Markov--Dobrushin condition on the associated channel sequence. It shows that the long-time behaviour of the backward dynamics $\{\Phi_{m+1,n}^{(b)}\}$ enforces convergence to a unique infinite-volume MPS state and determines its local expectations, without relying on translation invariance or classical ergodic arguments.

This viewpoint complements existing results on ergodic quantum processes and MPS (such as \cite{movassagh2022a}), and extends the homogeneous Markov–Dobrushin analysis of \cite{SB25} to genuinely non-translation-invariant tensor networks. It also fits naturally with recent asymptotic descriptions of quantum channels \cite{albert2019asymptotics}, and suggests a systematic framework for studying non-mixing or marginal regimes—where the Markov–Dobrushin coefficients cease to be uniformly bounded away from zero—potentially revealing phase transitions in disordered or non-uniform quantum spin systems.
\end{remark}

\section{Discussion and outlook}

This work develops a quantitative framework for analysing ergodicity and mixing in time-inhomogeneous quantum systems, described by a sequence of quantum channels \(\{\Phi_n\}_{n\in\mathbb{N}}\) acting on \(\mathcal{B}(\mathcal{H})\), with particular emphasis on their induced dynamics on the state space \(\mathfrak{S}(\mathcal{H})\). Building on the quantum Markov--Dobrushin (qMD) approach of~\cite{AccLuSou22}, we exploit contraction properties of the forward and backward compositions \(\Phi^{(t)}_{m,n}\), \(t\in\{b,f\}\), to derive verifiable conditions ensuring convergence of \(\Phi^{(t)}_{m,n}(\rho)\) as \(n\to\infty\) together with explicit quantitative bounds on the speed of convergence, beyond the purely exponential regime characteristic of homogeneous dynamics. The application to non-translation-invariant matrix product states (MPS) shows that these abstract qMD conditions can be implemented in concrete tensor-network models: under suitable assumptions on the sequence \(\{\Phi_n\}\), one obtains a unique infinite-volume state \(\varphi_\infty\) on the quasi-local algebra and an explicit channel-based representation of its local expectations \(\varphi_\infty(X)\) in terms of Kraus operators and boundary states.

From a conceptual standpoint, a key advantage of the qMD approach is that mixing is characterized in a way that is simultaneously time-inhomogeneous and \emph{time-uniform}. Concretely, the framework provides a family of Markov--Dobrushin coefficients \(\{\kappa_{\Phi_n}\}\subset\mathcal{B}(\mathcal{H})\) such that a single lower bound on \(\Tr(\kappa_{\Phi_n})\) controls the contraction of an appropriate quantum total variation distance, defined through the Markov--Dobrushin norm, over all sufficiently long time windows \([m,n]\), uniformly with respect to the starting time \(m\).  This yields a robust notion of loss of memory: the dependence on the initial state \(\rho\) is suppressed at a rate governed by the aggregated qMD coefficients, in a way that is intrinsically adapted to non-stationary dynamics and that extends classical weak-mixing ideas to a quantum, channel-based setting. In tensor-network language, the same mechanism explains how the auxiliary-space dynamics encoded by the transfer channels \(\Phi_n\) controls the thermodynamic limit of an inhomogeneous MPS, with the Markov--Dobrushin data determining decay of correlations and stability of local marginals.

\medskip

It is natural to compare this Markov--Dobrushin perspective with the more traditional viewpoint based directly on contraction coefficients of quantum channels with respect to a fixed metric or divergence. For a channel \(\Phi\), one can define the (worst-case) contraction coefficient for the trace distance by
\[
\eta_{\mathrm{Tr}}(\Phi)
:= \sup_{\rho\neq\sigma}
\frac{\|\Phi(\rho)-\Phi(\sigma)\|_1}{\|\rho-\sigma\|_1},
\]
which plays a central role in quantum information theory as a measure of information loss and distinguishability degradation under noise~\cite{George2025,Hirche2024, Delsol2025}. In our setting, the qMD construction produces a specific, computable contraction coefficient \(\eta_{\mathrm{MD}}(\Phi)\) such that
\[
\eta_{\mathrm{Tr}}(\Phi) \;\leq\; \eta_{\mathrm{MD}}(\Phi),
\]
and hence
\[
\|\Phi(\rho)-\Phi(\sigma)\|_1
\;\leq\; \eta_{\mathrm{MD}}(\Phi)\,\|\rho-\sigma\|_1
\qquad\text{for all }\rho,\sigma\in\mathfrak{S}(\mathcal{H}).
\]
Thus, the Markov--Dobrushin constant furnishes a specific channel contraction coefficient, expressed in terms of a single operator \(\kappa_\Phi\), whose trace controls the loss of distinguishability in the trace norm. At the same time, by construction \(\eta_{\mathrm{MD}}(\Phi)\) is not expected to be optimal: in general one can only guarantee that it bounds from above the true channel contraction coefficient \(\eta_{\mathrm{Tr}}(\Phi)\), which may be strictly smaller. The strength of the qMD coefficient lies elsewhere: it is often straightforward to compute or to bound directly from structural information on the Kraus operators or symmetries of \(\Phi_n\), and it behaves well under composition along time-inhomogeneous evolutions. The trade-off between computational accessibility and sharpness is therefore a structural design choice of the qMD method rather than a defect.

\medskip

The present analysis also has clear structural limitations. First, all results are formulated in a finite-dimensional setting, with \(\mathcal{H}\cong\mathbb{C}^d\) and channels \(\Phi_n : \mathcal{B}(\mathcal{H}) \to \mathcal{B}(\mathcal{H})\). In this regime, compactness of the state space \(\mathfrak{S}(\mathcal{H})\), equivalence of operator norms, and spectral properties of completely positive maps can be handled with standard finite-dimensional tools. Second, while our use of the MD norm is the natural quantum analogue of the Dobrushin norm on probability measures, we work concretely in the trace-class framework, identifying states with positive trace-class operators of unit trace and measuring distinguishability via the trace norm. We do not attempt here to formulate the theory in a fully general algebraic setting of arbitrary von Neumann algebras and their preduals, where additional subtleties related to modular theory and the geometry of non-commutative \(L^p\)-spaces arise \cite{ Yeadon1980, Marrakchi2024}. In particular, our arguments rely crucially on finiteness of the dimension in several places (e.g., compactness and norm equivalences) and do not immediately extend to infinite-dimensional \(\mathcal{H}\).

\medskip

These limitations suggest several directions for further research, both mathematically and from the perspective of quantum information theory. A natural first step is to extend the  present framework from finite-dimensional matrix algebras to infinite-dimensional systems at the level of trace-class operators on a separable Hilbert space \(\mathcal{H}\), so that channels act on \(\mathcal{T}_1(\mathcal{H})\) and the MD norm remains concretely identified with the trace norm. This would already cover many models of infinite spin systems within the trace-class picture, while avoiding the full technical overhead of general von Neumann algebraic machinery \cite{Ohya1981, Bowen2021}.   Even within the finite-dimensional and trace-class regimes, several conceptual questions remain open.  While  the present framework yields a flexible, quantitative mixing criterion, it is currently unclear whether it is equivalent to, strictly stronger than, or genuinely independent of the irreducibility assumptions typically used to guarantee uniqueness and faithfulness of invariant states or strict positivity of iterates of a quantum channel. Clarifying this relationship would help organize different notions of quantum mixing into a more refined hierarchy, tailored to the structural features (e.g., symmetries, conservation laws, locality) of concrete quantum information processing tasks.

\medskip

From the perspective of many-body quantum information, it is natural to ask how qMD-type conditions behave in more complex geometries and in the presence of disorder. One promising direction is the extension of the present analysis to multidimensional and networked settings, where one considers families of channels \(\{\Phi_e\}\) indexed by edges or regions of a graph, tree, or higher-dimensional lattice, and the ergodic behaviour reflects a non-trivial interplay between temporal inhomogeneity and spatial structure \cite{Be06, EcRu85, monteiro2021quantum}. Understanding how MD-type bounds propagate along such geometries, how they depend on boundary conditions, and how they constrain correlation decay and phase structure in spin-chain and spin-lattice models would yield a more complete picture of non-stationary quantum many-body dynamics. Finally, introducing randomness at the level of the channel sequence---for instance via random quantum channels \cite{kukulski2021generating, matsoukas2024quantum, NP25}, random circuit models, or random MPS---suggests a probabilistic variant of the qMD approach in which one formulates almost-sure or typical mixing criteria in terms of random Markov--Dobrushin data. This would connect the present framework to random operator theory and concentration of measure techniques, and could lead to MD-based conditions for almost-sure mixing and typical ergodic behaviour in noisy quantum information processing architectures.

\medskip

In summary, the perspective developed here provides a flexible and computationally accessible method for controlling mixing and ergodicity in time-inhomogeneous quantum systems, phrased directly in terms of quantum channels and their contraction properties in trace distance. It offers time-uniform control of long-time behaviour together with concrete applications to inhomogeneous tensor-network states, while at the same time making explicit its limitations: the Markov--Dobrushin contraction coefficients are generally non-optimal, and the present theory is confined to finite-dimensional, trace-class settings. Both aspects point towards natural extensions: sharper contraction theories that integrate spectral and functional-inequality techniques, and a fully-fledged infinite-dimensional, trace-class (and eventually von Neumann algebraic) generalization adapted to the broad range of models arising in quantum information theory and quantum many-body physics.

\section*{Conflict of Interests / Competing Interests}

The author declares that there are no conflicts of interest or competing interests associated with this work.

\section*{Data Availability}
No data was used to support the findings of this study.


\begin{thebibliography}{99}


\bibitem{AccLuSou22} Accardi L., Lu Y.G., Souissi A.,  A Markov-Dobrushin inequality for quantum channels, \textit{Open Systems \& Information Dynamics}, Vol.. 28, No. 04, 2150018 (2021)

\bibitem{Acc75} Accardi, L.  On the noncommutative Markov property. Functional Analysis and Its Applications, 9(1), 1-8, (1975).

\bibitem{AccOh99} Accardi, L. and Ohya, M., Compound channels, transition expectations, and liftings. \textit{Applied Mathematics and Optimization}, 39, pp.33-59 ( 1999).

\bibitem{ABM24} Aravinda, S., Banerjee, S., Modak, R.,  Ergodic and mixing quantum channels: From two-qubit to many-body quantum systems. \textit{Physical Review A}, 110(4), p.042607 (2024).

\bibitem{AFk23} Amato, D., Facchi, P. and Konderak, A.,  Asymptotics of quantum channels. Journal of Physics A: Mathematical and Theoretical, 56(26), p.265304 (2023).

\bibitem{Aci} Acín, A. and Masanes, L.,  Certified randomness in quantum physics. Nature, 540(7632), pp.213-219, (2016).

\bibitem{albert2019asymptotics}
Victor V. Albert, Asymptotics of quantum channels: conserved quantities, an adiabatic limit, and matrix product states, \textit{Quantum}, 3:151, 2019.

\bibitem{amato2023asymptotics}
  Amato D.,  Facchi P.,    Konderak A., Asymptotics of quantum channels, \textit{Journal of Physics A: Mathematical and Theoretical}, 56(26):265304, 2023.

\bibitem{aravinda2024ergodic}
Aravinda, S., Banerjee, S., Modak, R. Ergodic and mixing quantum channels: From two-qubit to many-body quantum systems. \textit{Phys. Rev. A} \textbf{110}, 042607 (2024).

\bibitem{ASS20} Accardi, L., Souissi, A.,  Soueidy, E. G. Quantum Markov chains: A unification approach. Infinite Dimensional Analysis, Quantum Probability and Related Topics, 23(02), 2050016, (2020).

\bibitem{bau2013} Burgarth, D., Chiribella, G., Giovannetti, V., Perinotti, P. and Yuasa, K.,  Ergodic and mixing quantum channels in finite dimensions. New Journal of Physics, 15(7), p.073045 (2013).
\bibitem{Ohya1981} Ohya M., Quantum ergodic channels in operator algebras, \emph{J. Math. Anal. Appl.} \textbf{84}(2):318--327 (1981).


\bibitem{singh2024zero}
Singh S., Rahaman M., Datta N., Zero-error communication under discrete-time Markovian dynamics, \emph{Quantum} \textbf{9}:1910 (2025).
\bibitem{singh2024ergodic}
Singh, S., Datta, N., Nechita, I. Ergodic theory of diagonal orthogonal covariant quantum channels. \textit{J. Stat. Phys.} \textbf{114}, 121 (2024).


\bibitem{V24} Vassiliou, P.C.,  Strong Ergodicity in Nonhomogeneous Markov Systems with Chronological Order. \textit{Mathematics}, 12(5), p.660 (2024).

 
\bibitem{Bowen2021} Bowen L., Hayes B., Lin Y.F., A multiplicative ergodic theorem for von Neumann algebra valued cocycles, \emph{Commun. Math. Phys.} \textbf{384}(2):1291--1350 (2021).

\bibitem{Yeadon1980} Yeadon F.J., Ergodic theorems for semifinite von Neumann algebras: II, \emph{Math. Proc. Camb. Philos. Soc.} \textbf{88}(1):135--147 (1980).
\bibitem{Marrakchi2024} Marrakchi A., Vaes S., Ergodic states on type III$_1$ factors and ergodic actions, \emph{J. Reine Angew. Math.} \textbf{809}:247--260 (2024).

\bibitem{belov2025quantum} Belov, M.G., Dubov, V.V., Filimonov, A.V., Malyshkin, V.G. Quantum channel learning. \textit{Phys. Rev. Res.} \textbf{111}, 015302 (2025).

\bibitem{Hu96} Huang, C. C., Isaacson, D.,   Vinograde, B. (1976). The rate of convergence of certain nonhomogeneous Markov chains. Zeitschrift für Wahrscheinlichkeitstheorie und Verwandte Gebiete, 35(2), 141-146.

\bibitem{ben2005} Bengtsson, I., Ericsson, Å., Kuś, M., Tadej, W. and Życzkowski, K., Birkhoff's polytope and unistochastic matrices, N= 3 and N= 4. \textit{Communications in mathematical physics}, 259, pp.307-324 ( 2005).
\bibitem{Be06} Berkovitz, J., Frigg, R.,  Kronz, F., The ergodic hierarchy, randomness and Hamiltonian chaos. Studies in History and Philosophy of Science Part B: Studies in History and Philosophy of Modern Physics, 37(4), 661-691,  (2006).

\bibitem{Birkoff31}  Birkhoff, George D. Proof of the ergodic theorem. Proceedings of the National Academy of Sciences 17.12 (1931): 656-660.

\bibitem{blume2008characterizing}
Robin Blume-Kohout, Hui Khoon Ng, David Poulin, and Lorenza Viola, Characterizing the structure of preserved information in quantum processes, \textit{Physical Review Letters}, 100(3):030501, 2008.

\bibitem{blume2010information}
Robin Blume-Kohout, Hui Khoon Ng, David Poulin, and Lorenza Viola, Information-preserving structures: A general framework for quantum zero-error information, \textit{Physical Review A}, 82(6):062306, 2010.

\bibitem{Bolzmann} Boltzmann, L., Einige allgemeine Sätze über Wärmegleichgewicht: vorgelegt in der Sitzung am 13,   K. und k. Hof-und Staatsdr, (1871)


\bibitem{BDX06} Boyd, S., Diaconis, P., Sun, J.,  Xiao, L.  Fastest mixing Markov chain on a path. The American Mathematical Monthly, 113(1), 70-74 (2006).

\bibitem{L22} Longla, M., Mous-Abou, H.,   Ngongo, I. S.  On some mixing properties of copula-based Markov chains. Journal of Statistical Theory and Applications, 21(3), 131-154 (2022).

\bibitem{MixOQS23} Bravyi, S., Carleo, G., Gosset, D.,  Liu, Y.  A rapidly mixing Markov chain from any gapped quantum many-body system. Quantum, 7, 1173 (2023).


\bibitem{BR} Bratteli, O. and Robinson, D.W., 2012. Operator algebras and quantum statistical mechanics: Volume 1: C$^*$-and W$^*$-Algebras. Symmetry Groups. Decomposition of States. Springer Science \& Business Media.

\bibitem{brasil2021lyapunov}
Brasil, J.E., Knorst, J., Lopes, A.O. Lyapunov exponents for quantum channels: An entropy formula and generic properties. \textit{J. Math. Phys.} \textbf{19}, 155-187 (2021).

\bibitem{Breamaud20} Brémaud, P., Non-homogeneous Markov chains. In Markov Chains: Gibbs Fields, Monte Carlo Simulation and Queues Cham: Springer International Publishing, pp. 399-422, (2020).

\bibitem{George2025} George I., Hirche C., Nuradha T., Wilde M.M., Quantum Doeblin coefficients: Interpretations and applications, Preprint \texttt{arXiv:2503.22823} (2025).
\bibitem{Delsol2025} Delsol I., Fawzi O., Kochanowski J., Ramachandran A., Computational aspects of the trace norm contraction coefficient, Preprint \texttt{arXiv:2507.16737} (2025).
\bibitem{Hirche2024} Hirche C., Quantum Doeblin coefficients: A simple upper bound on contraction coefficients, In: \emph{Proc. IEEE Int. Symp. Inf. Theory (ISIT)}, pp. 557--562, IEEE (2024).
\bibitem{Carbonna20} Carbone, Raffaella, and Anna Jenčová, On period, cycles and fixed points of a quantum channel, Annales Henri Poincaré. Vol. 21. No. 1. Cham: Springer International Publishing, 2020.

\bibitem{CFS82} I. P. Cornfeld, S. V. Fomin, and Ya. G. Sinai, Ergodic Theory, Vol. 245 (Springer, 1982).

\bibitem{Choi75} Choi, M.D., 1975. Completely positive linear maps on complex matrices. Linear algebra and its applications, 10(3), pp.285-290.
\bibitem{Chen24} Chen, L., Garcia, R. J., Bu, K.,  Jaffe, A., Magic of random matrix product states. Physical Review B, 109(17), 174207,  (2024).
\bibitem{CN10} Collins, B., Nechita, I.: Random quantum channels I: Graphical calculus and the Bell state phenomenon. Commun. Math. Phys. 297(2), 345-370 (2010)

\bibitem{CN11II} Collins, B., Nechita, I.: Random quantum channels II: Entanglement of random subspaces, Renyi entropy estimates and additivity problems. Adv. in Math. 226(2), 1181-1201 (2011).

\bibitem{D56} R. L. Dobrushin, Central limit theorem for
nonstationary Markov chains. I,II, \textit{Theor. Probab. Appl.}
{\bf 1}(1956),65-80; 329-383.

\bibitem{EcRu85} Eckmann, J. P.,  Ruelle, D.  Ergodic theory of chaos and strange attractors. Reviews of modern physics, 57(3), 617, (1985).

\bibitem{Ehrenfest1907} Ehrenfest, P. U. T. Begriffliche Grundlagen der statistischen Auffassung in der Mechanik. Vieweg- Teubner Verlag, (1907).


\bibitem{fang2021geometric}
Fang, K., Fawzi, H. Geometric Rényi divergence and its applications in quantum channel capacities. \textit{Commun. Math. Phys.} \textbf{384}, 1615-1677 (2021).

\bibitem{Fid09} Fidaleo,F., On strong ergodic properties of quantum dynamical systems. Infin. Dimens. Anal. Quantum  Probab. Relat. Top. 12 (04), 551-564 (2009).

\bibitem{Fid10} Fidelio, Francesco. The entangled ergodic theorem in the almost periodic case. Linear algebra and its applications,  vol. 432, no 2-3, p. 526-535,(2010).


\bibitem{GOZ10}  S. Garnerone, T. R. de Oliveira, S. Haas, and P. Zanardi,
Statistical properties of random matrix product states, Phys.
Rev. A 82, 052312 (2010).

\bibitem{gour2021entropy}
Gour, G., Wilde, M.M. Entropy of a quantum channel. \textit{PRX Quantum} \textbf{3}, 023096 (2021).

\bibitem{gyongyosi2018survey}
Gyongyosi, L., Imre, S., Nguyen, H.V. A survey on quantum channel capacities. \textit{Quantum Inf. Process.} \textbf{20}, 1149-1205 (2018).

 \bibitem{GQ15} Gaubert, S.,  Qu, Z.,  Dobrushin’s ergodicity coefficient for Markov operators on cones. Integral Equations and Operator Theory, 81(1), 127-150 (2015).

\bibitem{han2022quantum}
Han, Q., Han, Y., Kou, Y., Bai, N. Quantum channel measurement with local quantum Bernoulli noises. \textit{Sci. Rep.} \textbf{12}, 12929 (2022).

\bibitem{Haj58}  Hajnal, J., Bartlett, M.S.: Weak ergodicity in non-homogeneous Markov chains. Math. Proc. Camb.  Philos. Soc. 54(2), 233–246 (1958).

 \bibitem{Ib97} Ibragimov, I. D. A., Dobrushin's works on Markov processes. Russian Mathematical Surveys, 52(2), 239, (1997).

\bibitem{Kraus71} K Kraus. General state changes in quantum theory. Annals of Physics, 64:311-335, June (1971).

\bibitem{KR50}  Krein, M. G. and Rutman, M. A., Linear operators leaving invariant a cone in a Banach space, Am. Math. Soc. Transl. (26), 128, (1950)

\bibitem{kukulski2021generating}
Kukulski, R., Nechita, I., Pawela, Ł., Puchała, Z., Życzkowski, K. Generating random quantum channels. \textit{J. Phys. A: Math. Theor.} \textbf{62}, 045301 (2021).

\bibitem{linden2022arbitrary}
Noah Linden and Paul Skrzypczyk, How to use arbitrary measuring devices to perform almost perfect measurements, \textit{arXiv preprint} arXiv:2203.02593, 2022.

\bibitem{lloyd1997capacity}
Lloyd, S. Capacity of the noisy quantum channel. \textit{Phys. Rev. A} \textbf{55}, 1613-1622 (1997).

\bibitem{ma2023sequential}
Wen-Long Ma, Shu-Shen Li, and Ren-Bao Liu, Sequential generalized measurements: Asymptotics, typicality, and emergent projective measurements, \textit{Physical Review A}, 107(1):012217, 2023.

\bibitem{markov06}
A. A. Markov,
 Rasprostranenie zakona bol'shikh chisel na velichiny, zavisyashchie drug ot druga, (Extension of the law of large numbers to quantities depending on each other),
Izvestiya Akademii Nauk SPb, Series 6, \textbf{19} (1906), 61-80.

\bibitem{Bau76} Bowerman, B., David, H. T.,  Isaacson, D. The convergence of Cesaro averages for certain nonstationary Markov chains. Stochastic processes and their applications, 5(3), 221-230  (1977).


\bibitem{matsoukas2024quantum}
Matsoukas-Roubeas, A.S., Prosen, T., del Campo, A. Quantum chaos and coherence: Random parametric quantum channels. \textit{Quantum Sci. Technol.} \textbf{8}, 1446 (2024).
 

\bibitem{monteiro2021quantum}
Monteiro, F., Tezuka, M., Altland, A., Huse, D.A., Micklitz, T. Quantum ergodicity in the many-body localization problem. \textit{Phys. Rev. Lett.} \textbf{127}, 030601 (2021).

\bibitem{Moore15} C. C. Moore, Ergodic Theorem, Ergodic Theory, and Statistical Mechanics, Proc. Natl. Acad. Sci. U.S.A. 112, 1907 (2015).

\bibitem{movassagh2021theory}
Movassagh, R., Schenker, J. Theory of ergodic quantum processes. \textit{Phys. Rev. X} \textbf{11}, 041001 (2021).

\bibitem{movassagh2022a}
Movassagh, R., Schenker, J. An ergodic theorem for quantum processes with applications to matrix product states. \textit{Commun. Math. Phys.} \textbf{395}, 1175-1196 (2022).

\bibitem{Muk15} Mukhamedov, F. (2015). Ergodic properties of nonhomogeneous Markov chains defined on ordered Banach spaces with a base. Acta Mathematica Hungarica, 147(2), 294-323.

\bibitem{V22} Vassiliou, P. C. G.  Laws of large numbers for non-homogeneous Markov systems with arbitrary transition probability matrices. Journal of Statistical Theory and Practice, 16(2), 18 (2022).

\bibitem{mukhamedov2015ergodic}
Mukhamedov, F. Ergodic properties of nonhomogeneous Markov chains defined on ordered Banach spaces with a base. \textit{J. Math. Anal. Appl.} \textbf{147}, 294-323 (2015).

\bibitem{mukhamedov2021approximations}
Mukhamedov, F., Al-Rawashdeh, A. Approximations of non-homogeneous Markov chains on abstract states spaces. \textit{Stoch. Anal. Appl.} \textbf{11}, 2150002 (2021).

\bibitem{mukhamedov2022generalized}
Mukhamedov, F., Al-Rawashdeh, A. Generalized Dobrushin ergodicity coefficient and ergodicities of non-homogeneous Markov chains. \textit{Stat. Probab. Lett.} \textbf{16}, 18 (2022).

\bibitem{MG19} Mukhamedov, F.,  El Gheteb, S.  Clustering property of Quantum Markov Chain associated to XY-model with competing Ising interactions on the Cayley tree of order two. Mathematical Physics, Analysis and Geometry, 22, 1-15, (2019).

\bibitem{MuWa2018} Mukhamedov, F. and Watanabe, N., On S-mixing entropy of quantum channels. Quantum Information Processing, 17, pp.1-21 (2018).

\bibitem{nelson2024ergodic}
Nelson, B., Roon, E.B. Ergodic quantum processes on finite von Neumann algebras. \textit{J. Funct. Anal.} \textbf{287}, 110485 (2024).

\bibitem{NP12}  Nechita I.,   Pellegrini C., Random repeated quantum interactions and random in variant states. Probab. Theory Relat. Fields, 152 (1-2):299-320,  (2012).

\bibitem{NP25} Nechita, I., Park, S.J., 2025, March. Random covariant quantum channels. In Annales Henri Poincaré (pp. 1-61). Springer International Publishing.

\bibitem{perez2007matrix} Perez-Garcia D, Verstraete F, Wolf M M,  Cirac J I,   Matrix product state representations arXiv preprint quant-ph/0608197, (2006).

\bibitem{raginsky2002strictly}
Maxim Raginsky, Strictly contractive quantum channels and physically realizable quantum computers, \textit{Physical Review A}, 65(3):032306, 2002.

\bibitem{regula2021fundamental}
Regula, B., Takagi, R. Fundamental limitations on distillation of quantum channel resources. \textit{Nat. Commun.} \textbf{12}, 4411 (2021).

\bibitem{SB25} Souissi, A., Barhoumi, A. An Exponential Mixing Condition for Quantum Channels: Application to Matrix Product States. Quantum Inf Process 24, 150 (2025). https://doi.org/10.1007/s11128-025-04762-1.

\bibitem{SA26} Souissi, A., Andolsi, A. Infinite-scale decoherence of GGHZ-states: an algebraic approach. Quantum Inf Process 25, 108 (2026). https://doi.org/10.1007/s11128-026-05131-2 
    
\bibitem{Sa15} Saburov, M. (2016). Ergodicity of nonlinear Markov operators on the finite dimensional space. Nonlinear Analysis: Theory, Methods \& Applications, 143, 105-119.
\bibitem{singh2022detecting}
Singh, S., Datta, N. Detecting positive quantum capacities of quantum channels. \textit{Quantum} \textbf{8}, 50 (2022).

\bibitem{SM24} Souissi, A.,  Mukhamedov, F., Nonlinear Stochastic Operators and Associated Inhomogeneous Entangled Quantum Markov Chains. Journal of Nonlinear Mathematical Physics, 31(1), 11, (2024).

\bibitem{SSB23} Souissi, A., Soueidi E.L., Barhoumi, A.,On a $\psi$-mixing property for entangled Markov chains. Physica A: Statistical Mechanics and its Applications, 613, 12853, (2023).
 
\bibitem{verstraete2008matrix} Verstraete F, Murg V and Cirac J I,  Matrix product states, projected entangled pair states, and variational renormalization group methods for quantum spin systems Advances in physics 57(2) 143-224, (2008).

\bibitem{vonN32}   von Neumann, J.: Proof of the quasi-ergodic hypothesis. Proc. Natl. Acad. Sci. 18(1), 70-82 (1932)
 

\bibitem{V25} Veretennikov, A.,   Nurieva, A. (2025). On conditions for Dobrushin's Central limit theorem for non-homogeneous Markov chains. arXiv preprint arXiv:2506.07287.

\bibitem{W12} Wolf, M.M.: Quantum channels and operations: Guided tour (unpublished) (2012)

\end{thebibliography}
\end{document}